\DeclareMathOperator*{\argmax}{argmax}
\DeclareMathOperator*{\argmin}{argmin}
\newtheorem{lemma}{Lemma}
\newtheorem{corollary}{Corollary}
\begin{document}
\title{\huge D-STAR: Dual Simultaneously Transmitting and Reflecting Reconfigurable Intelligent Surfaces for Joint Uplink/Downlink Transmission}

\author{
Li-Hsiang Shen,~\IEEEmembership{Member,~IEEE},
Po-Chen Wu,~\IEEEmembership{Student Member,~IEEE}, Chia-Jou Ku, 
Yu-Ting Li,~\IEEEmembership{Student Member,~IEEE},
Kai-Ten Feng,~\IEEEmembership{Senior Member,~IEEE},
Yuanwei Liu,~\IEEEmembership{Fellow,~IEEE},
Lajos Hanzo,~\IEEEmembership{Life Fellow,~IEEE}}

\maketitle

\begin{abstract}
	The joint uplink/downlink (JUD) design of simultaneously transmitting and reflecting reconfigurable intelligent surfaces (STAR-RIS) is conceived in support of both uplink (UL) and downlink (DL) users. Furthermore, the dual STAR-RISs (D-STAR) concept is conceived as a promising architecture for 360-degree full-plane service coverage, including UL/DL users located between the base station (BS) and the D-STAR as well as beyond. The corresponding regions are termed as primary (P) and secondary (S) regions. Both BS/users exist in the P-region, but only users are located in the S-region. The primary STAR-RIS (STAR-P) plays an important role in terms of tackling the P-region inter-user interference, the self-interference (SI) from the BS and from the reflective as well as refractive UL users imposed on the DL receiver. By contrast, the secondary STAR-RIS (STAR-S) aims for mitigating the S-region interferences. The non-linear and non-convex rate-maximization problem formulated is solved by alternating optimization amongst the decomposed convex sub-problems of the BS beamformer, and the D-STAR amplitude as well as phase shift configurations. We also propose a D-STAR based active beamforming and passive STAR-RIS amplitude/phase (DBAP) optimization scheme to solve the respective sub-problems by Lagrange dual with Dinkelbach's transformation, alternating direction method of multipliers (ADMM) with successive convex approximation (SCA), and penalty convex-concave procedure (PCCP). Our simulation results reveal that the proposed D-STAR architecture outperforms the conventional single RIS, single STAR-RIS, and half-duplex networks. The proposed DBAP of D-STAR outperforms the state-of-the-art solutions found in the open literature for different numbers of quantization levels, geographic deployment, transmit power and for diverse numbers of transmit antennas, patch partitions as well as D-STAR elements.
\end{abstract}

\begin{IEEEkeywords}
Dual STAR-RISs, RIS, joint UL/DL, self-interference, beamforming.
\end{IEEEkeywords}

{\let\thefootnote\relax\footnotetext
{Li-Hsiang Shen is with the Department of Communication Engineering, National Central University, Taoyuan 320317, Taiwan. (email: shen@ncu.edu.tw)}}

{\let\thefootnote\relax\footnotetext
{Po-Chen Wu, Chia-Jou Ku, Yu-Ting Li, Kai-Ten Feng are with the Department of Electronics and Electrical Engineering, National Yang Ming Chiao Tung University (NYCU), Hsinchu 300093, Taiwan. (email: wupochen.ee11@nycu.edu.tw, amyku0722@gmail.com, tammy10306@gmail.com, and
ktfeng@nycu.edu.tw)}}

{\let\thefootnote\relax\footnotetext
{Yuanwei Liu is with the School of Electronic Engineering and Computer Science, Queen Mary University of London, E1 4NS London, U.K. (e-mail: yuanwei.liu@qmul.ac.uk).
}}

{\let\thefootnote\relax\footnotetext
{Lajos Hanzo is with Next Generation Wireless, University of Southampton, SO17 1BJ Southampton, U.K. (email: lh@ecs.soton.ac.uk)}}

\section{Introduction}\label{INT}

  Reconfigurable intelligent surfaces (RIS) made of meta-material are capable of beneficially ameliorating the wireless propagation environments \cite{acm,mag_exp}. This is achieved by appropriately configuring the phase shifts of its reflective elements with the aid of passive beamforming for circumventing non-line-of-sight (NLoS) propagation. However, an impediment of RISs is that the transmitter and users are geometrically confined within the same 180-degree half-plane, rather than supporting  roaming across the entire 360-degree area \cite{acm}. By contrast, the simultaneously transmitting and reflecting RIS (STAR-RIS) architecture is capable of circumventing this limitation \cite{theory_star}, extending the service to the full coverage area. Hence, it is also termed as an intelligent omni-surface (IOS) \cite{mag_exp, mag_model, mag_mode, star_cite1, star_cite2}. The first prototype based experiment was reported in \cite{mag_exp}, confirming the feasibility of the STAR function in practice.

\begin{table*}[!ht]
	\centering
	\scriptsize
	\caption {Comparison of Open Literature}
	\resizebox{\textwidth}{!}{
\begin{tabular}{|l||c|c|c|c|c|c|c|c|c|c|c|c|c|}
\hline
 & \cite{mag_mode, ana_noma1, ana_noma2, ana_noma3} & \cite{star_UL1} & \cite{star_UL2} & \cite{star_3d1} & \cite{star_3d2} & \cite{quant} & \cite{couple1} & \cite{couple2} & \cite{couple3} & \cite{ris_fd_my, fd} & \cite{fd_ok} & \cite{fd_no1, fd_no2, fd_no3} & This Work \\ \hline\hline
Transmission type & DL & UL & UL & DL & DL & DL & DL & DL & DL & JUD & JUD & JUD & JUD \\ \hline
STAR-RIS & \checkmark & \checkmark & \checkmark & \checkmark & \checkmark & \checkmark & \checkmark & \checkmark & \checkmark & RIS & \checkmark & \checkmark & \pmb{\checkmark} \\ \hline
\begin{tabular}[c]{@{}l@{}}(Service Coverage)\\ DL user in P-, S-region\\ UL user in P-, S-region\end{tabular} & \begin{tabular}[c]{@{}c@{}}\checkmark , \checkmark\\ - , -\end{tabular} & \begin{tabular}[c]{@{}c@{}}- , -\\ \checkmark , \checkmark\end{tabular} & \begin{tabular}[c]{@{}c@{}}- , -\\ \checkmark , \checkmark\end{tabular} & \begin{tabular}[c]{@{}c@{}}\checkmark , \checkmark\\ - , -\end{tabular} & \begin{tabular}[c]{@{}c@{}}\checkmark , \checkmark\\ - , -\end{tabular} & \begin{tabular}[c]{@{}c@{}}\checkmark , \checkmark\\ - , -\end{tabular} & \begin{tabular}[c]{@{}c@{}}\checkmark , \checkmark\\ - , -\end{tabular} & \begin{tabular}[c]{@{}c@{}}\checkmark , \checkmark\\ - , -\end{tabular} & \begin{tabular}[c]{@{}c@{}}\checkmark , \checkmark\\ - , -\end{tabular} & N/A & \begin{tabular}[c]{@{}c@{}}- , \checkmark\\ \checkmark , -\end{tabular} & \begin{tabular}[c]{@{}c@{}}\checkmark , -\\ - , \checkmark\end{tabular} & \begin{tabular}[c]{@{}c@{}}\pmb{\checkmark} , \pmb{\checkmark}\\ \pmb{\checkmark} , \pmb{\checkmark}\end{tabular} \\ \hline
Active beamforming &  & \checkmark & \checkmark & \checkmark & \checkmark &  & \checkmark & \checkmark & \checkmark & \checkmark &  &  & \pmb{\checkmark} \\ \hline
Passive beamforming & \checkmark & \checkmark & \checkmark & \checkmark & \checkmark & \checkmark & \checkmark & \checkmark & \checkmark & \checkmark & \checkmark & \checkmark & \pmb{\checkmark} \\ \hline
Rate guarantee &  & \checkmark &  &  & \checkmark & \checkmark &  & \checkmark & \checkmark & \checkmark & \checkmark & \checkmark & \pmb{\checkmark} \\ \hline
Coupled phase shifts &  &  & \checkmark &  &  &  & \checkmark & \checkmark & \checkmark &  &  &  & \pmb{\checkmark} \\ \hline
Power constraint &  & \checkmark & \checkmark & \checkmark & \checkmark &  & \checkmark & \checkmark & \checkmark & \checkmark &  &  & \pmb{\checkmark} \\ \hline
Quantization evaluation &  & \checkmark &  &  & \checkmark & \checkmark &  & \checkmark &  &  &  &  & \pmb{\checkmark} \\ \hline
Deployment evaluation &  &  & \checkmark &  &  &  &  &  &  & \checkmark &  &  & \pmb{\checkmark} \\ \hline
Multi-surfaces &  &  &  &  &  &  &  &  &  &  &  &  & \pmb{\checkmark} \\ \hline
General user distribution &  &  &  &  &  &  &  &  &  &  &  &  & \pmb{\checkmark} \\ \hline
\end{tabular}} \label{comparetable}
\end{table*}

    There exist three different operating protocols of STAR-RISs \cite{mag_mode}, namely the energy splitting (ES), mode selection (MS) and time-switching (TS) mechanisms. ES splits the element-wise energy between reflecting and transmitting the signals, whereas MS is regarded as a reflection-only or transmission-only assignment of the STAR-RIS elements. Finally, TS is operated by switching the elements between the reflection and transmission modes in a time-division manner. It was shown in \cite{mag_mode} that ES is the most beneficial mechanism of providing multicast and multiuser services. Half-duplex (HDx) communications aided by RIS/STAR-RIS is considered to support either uplink (UL) or downlink (DL) transmission in a time- or frequency-division manner. The authors of \cite{ana_noma1, ana_noma2, ana_noma3} have analyzed the theoretically attainable effective ergodic rate of a single STAR-RIS in the DL of a non-orthogonal multiple access scheme. The statistical characteristics of the channels are considered in \cite{ana_noma1}, whilst the closed-form expression of the rate achieved by the individual near and far users are derived in \cite{ana_noma2}. Moreover, the authors of \cite{ana_noma3} take into account the additional factor of real-time quality of service. In \cite{star_UL1, star_UL2}, the authors employ STAR-RIS in the UL for improving the secrecy rate and the spectral efficiency, respectively. By contrast, the authors of \cite{star_3d1, star_3d2} further leverage the STAR-RIS architecture in a three-dimensional scenario for robust transmissions. The STAR-RIS also has its own hardware limitations, with one of them owing to the quantization of its phase shifts \cite{quant}. The authors of \cite{couple1, couple2, couple3} additionally consider a practical coupled phase shifts based on the meta-material constraints detailed in \cite{theory_star}. A general STAR-RIS framework was firstly proposed in \cite{couple1} for determining the amplitude and phase shifts are firstly proposed in \cite{couple1}. In \cite{couple2}, a certain minimum secrecy capacity was guaranteed subject to the constraints of BS transmit power budget and STAR-RIS amplitude/phase-shift coupling. In \cite{couple3}, advanced machine learning methods were designed to conduct joint BS/STAR-RIS beamforming in support of multiuser services. Hybrid control is designed, along with high-dimensional continuous amplitude and discrete phase shifts.

   Nonetheless, the RIS/STAR-RIS relying on HDx potentially leads to 50$\%$ spectral erosion compared to full-duplex (FD). As a remedy, a joint uplink/downlink (JUD) regime is conceived for matching the throughput of FD systems \cite{ris_fd_my, fd, fd_ok, fd_no1, fd_no2, fd_no3}. Note that FD is more specific for an antenna supporting both UL/DL at the same time, while JUD separates the whole antenna set into DL transmitter and UL receiver antennas. In a typical FD network, the BS and users can be operated in FD mode. The most challenging problem in both JUD and FD is the complex nature of the interferences induced by the DL BS and UL user transmissions. However, it can be alleviated by exploiting advanced transmission techniques, such as non-orthogonal multiple access (NOMA) \cite{noma_fd} and rate-splitting multiple access (RSMA) \cite{rsma_fd}. Depending on the strong/weak channel quality, different UL/DL NOMA user groups can be formed in JUD, having superposed signals of various power levels. The terminology of RSMA in JUD implies that both the UL/DL streams can be cooperatively partitioned into a common and a private message segment. In both methods, sophisticated successive interference cancellation should be harnessed for extracting the desired user signals. Alternatively, RIS/STAR-RIS provides a simpler solution associated with a comparably high channel diversity for mitigating the UL/DL interferences. In our previous work \cite{ris_fd_my}, we considered JUD transmission using conventional RISs, which can only have half-plane service, as mentioned previously. In \cite{fd}, weighted sum rate maximization is considered in two-way communications with the BS and multiple users both in the FD mode. However, a single STAR-RIS with the single-directional STAR functionality is unable to support complete JUD transmission. In \cite{fd_ok}, the authors minimize the power consumption of JUD in STAR-RIS, while considering a pair of UL/DL users restricted to their reflection/transmission regions. In \cite{fd_no1}, STAR-RIS assisted JUD wireless communication is considered in order to maximize the weighted sum rate of the system. In \cite{fd_no2}, large-scale statistics of the channel state information are leveraged in a STAR-RIS assisted system supporting two users. The authors of \cite{fd_no3} provided a quantitative analysis of practical energy efficient two-way JUD communications assisted by the STAR-RIS. Although the authors of \cite{fd_no1, fd_no2, fd_no3} consider JUD while using a STAR-RIS, their models are incompatible with realistic imperfect electric circuits as well as with the theory of electromagnetism. Recently, the authors of \cite{dual_star} have proposed to adopt a bi-directional STAR-RIS architecture derived from \cite{theory_star}, which is capable of receiving the incident signals at both sides. The transfer function is provided in \cite{dual_star} with the transmission and reflection coefficients depending on the electric and magnetic impedance of metasurfaces. However, in dual-directional STAR-RIS, those coefficients could be the same, which might lead to more complex interference management. To elaborate a little further, by appropriately configuring both the RIS-based passive beamforming as well as the active beamforming at the base station (BS), the self-interference (SI) \cite{ris_fd_my, fd_ok} of JUD can be alleviated. 
   
   A table contrasting our contribution at a glance to the literature is provided in Table \ref{comparetable}. We can infer from Table \ref{comparetable} that most of the existing RIS/STAR-RIS solutions can only support half-plane coverage with a single optimized STAR-RIS in both HDx and JUD transmissions. The important impacts of geographic deployment and quantization are not evaluated in most of works. Motivated by the above-mentioned issues, we have conceived a new architecture termed as dual STAR-RISs (D-STAR)\textsuperscript{\ref{note1}}\footnotetext[1]{However, this dual STAR-RIS philosophy is different from that of the double-RIS concept in \cite{h1,h2}, which has one RIS near the BS and one near the users to be able to get around the blockage in the middle both in the UL and DL. \label{note1}}, which relies on a pair of STAR-RISs combined with 180-degree orientations. Explicitly, in D-STAR one of the reflective surfaces is facing toward the BS, while the other one is facing in the opposite direction. As a benefit, an exact 360-degree service provision can be achieved for JUD transmission, as it will be detailed in Section \ref{SM} with reference to Fig. \ref{fig:starall}. The main contributions of this paper can be summarized as follows.
\begin{itemize}
     \item To support full coverage in JUD transmission, the new D-STAR architecture operating in the ES mode is conceived. D-STAR separates the whole coverage into a P-region wherein the BS and users exist and S-region with only users. The primary STAR-RIS (STAR-P) of Fig. \ref{fig:starall} deals with the P-region inter-user interference, plus with the SI remanating from the BS and from the reflective as well as from the refractive UL users and contaminating the DL receiver. By contrast, the secondary STAR-RIS (STAR-S) aims for mitigating the S-region inter-user interferences.
     
     \item We consider the problem of DL throughput maximization, guaranteeing a specific UL rate requirement when optimizing the BS's active beamforming and the amplitudes/phase shifts of D-STAR. The original non-linear and non-convex problem is solved by alternating optimization after decomposing it into convex sub-problems. We propose a D-STAR based active beamforming and passive STAR-RIS amplitude/phase (DBAP) scheme. We then solve the respective weighting coefficient optimization sub-problems by the Lagrange dual based method in conjunction with Dinkelbach's transformation \cite{Dinkel}, the alternating direction method of multipliers (ADMM) \cite{admm, admm2}, the successive convex approximation (SCA) \cite{SCA} and the penalty convex-concave procedure (PCCP) \cite{pccp}.
     
     \item The proposed DBAP scheme relying on the D-STAR architecture is evaluated through simulations by taking into account the reflection coefficient, quantization effects, the inter-D-STAR distances, the transmit power as well as the number of splitting D-STARs, antennas and elements. Moreover, the D-STAR is compared to MS \cite{bm1}, to coupled phases \cite{couple2}, and to conventional STAR-RIS as well as to RISs under JUD/HDx transmissions. We will demonstrate that the proposed D-STAR achieves the highest rate amongst the existing methods found in the open literature.
     
\end{itemize}

The rest of the article is organized as follows. In Section \ref{SM}, we introduce the architecture, system model and problem formulation of D-STAR. In Section \ref{D-STAR}, we describe our proposed DBAP scheme relying on the D-STAR architecture considering both the active beamforming and the D-STAR configuration. Our performance evaluations of D-STAR are discussed in Section \ref{NR}. Finally, our conclusions are offered in Section \ref{CON}.

\section{System Model and Problem Formulation}\label{SM}

\subsection{D-STAR Architecture}

\begin{figure}[!t]
		\centering
		\subfigure[]
		{\includegraphics[width=1.3 in]{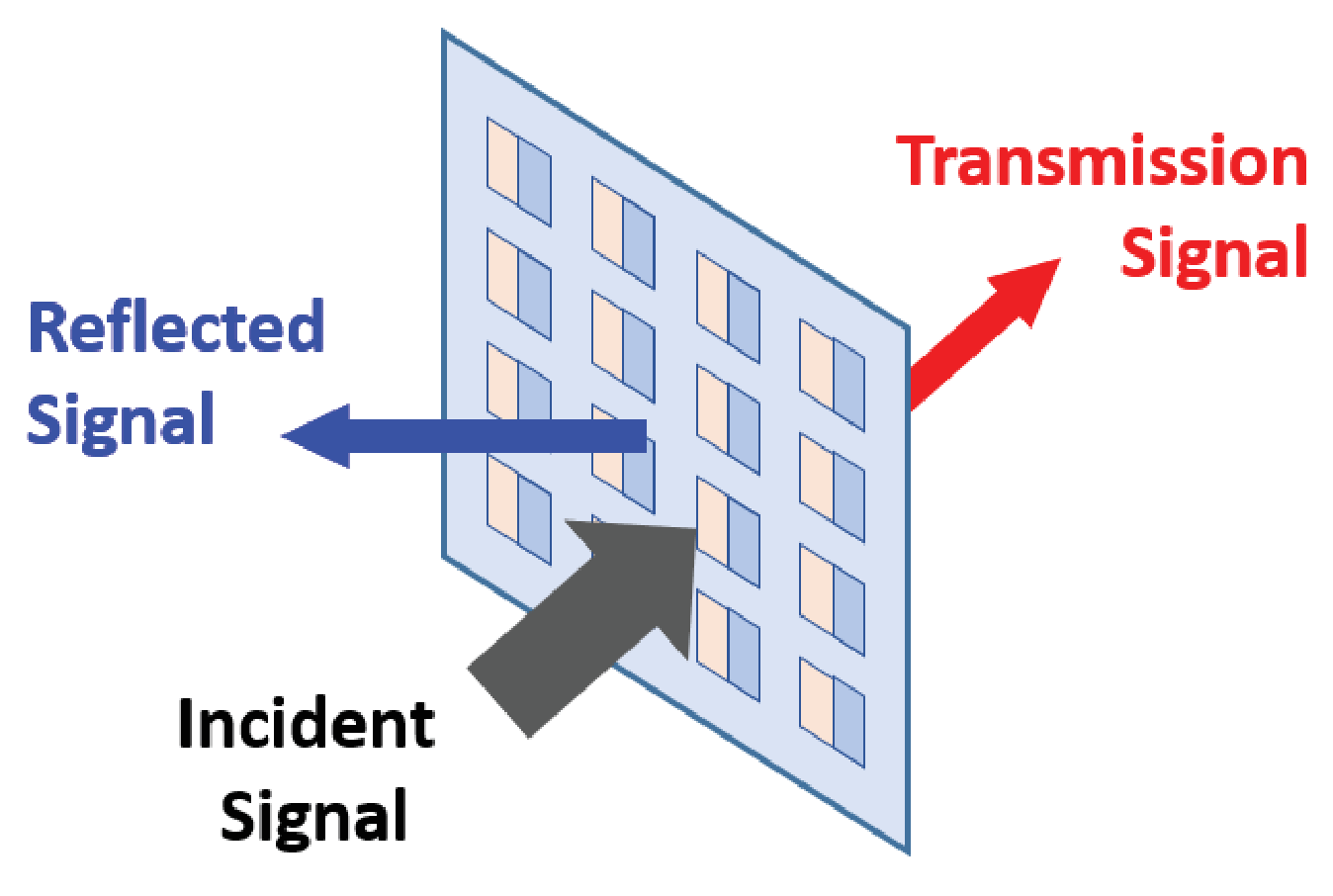} \label{fig:star}}
		\subfigure[]
		{\includegraphics[width=1.8 in]{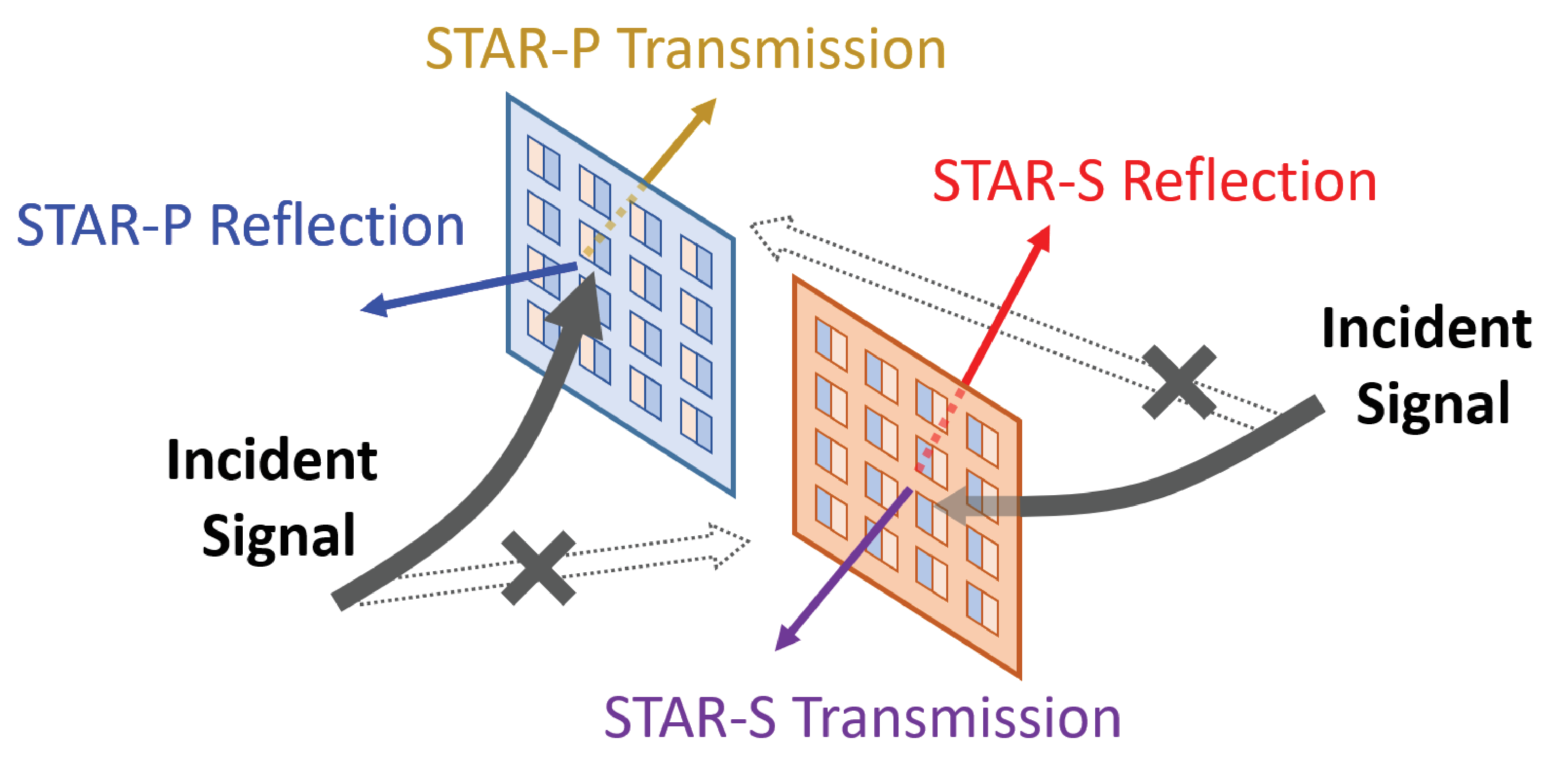} \label{fig:star2}}
		\caption{Architectures of STAR-RIS for (a) conventional STAR-RIS and (b) proposed D-STAR. }
		\label{fig:starall}
	\end{figure}

\begin{figure}[!t]
\centering
\includegraphics[width=3.3in]{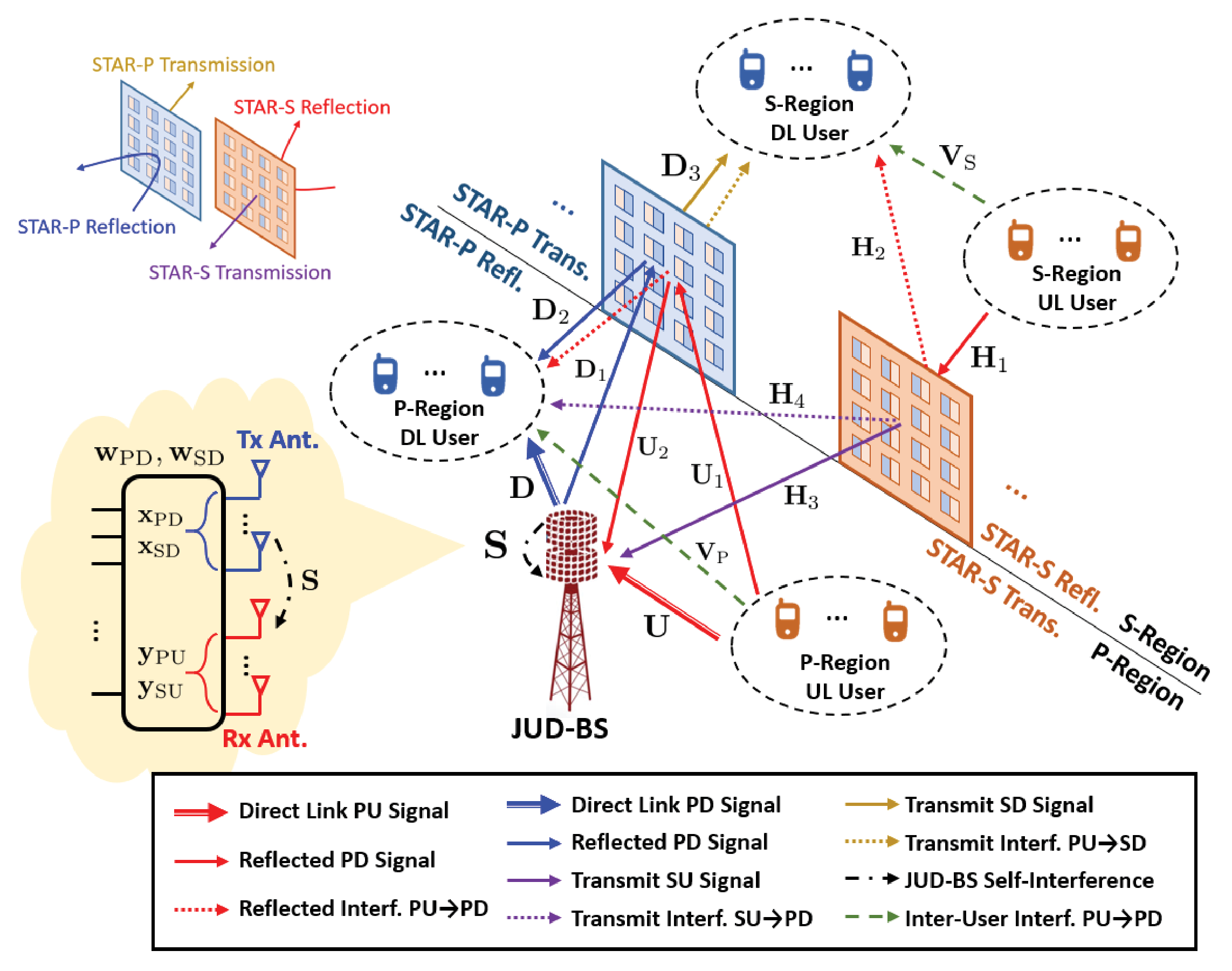}
 \caption{The overall architecture of proposed D-STAR system.} \label{Fig.1}
\end{figure}

\begin{figure*}[!t]
		\centering
		\subfigure[]
		{\includegraphics[width=1.4 in]{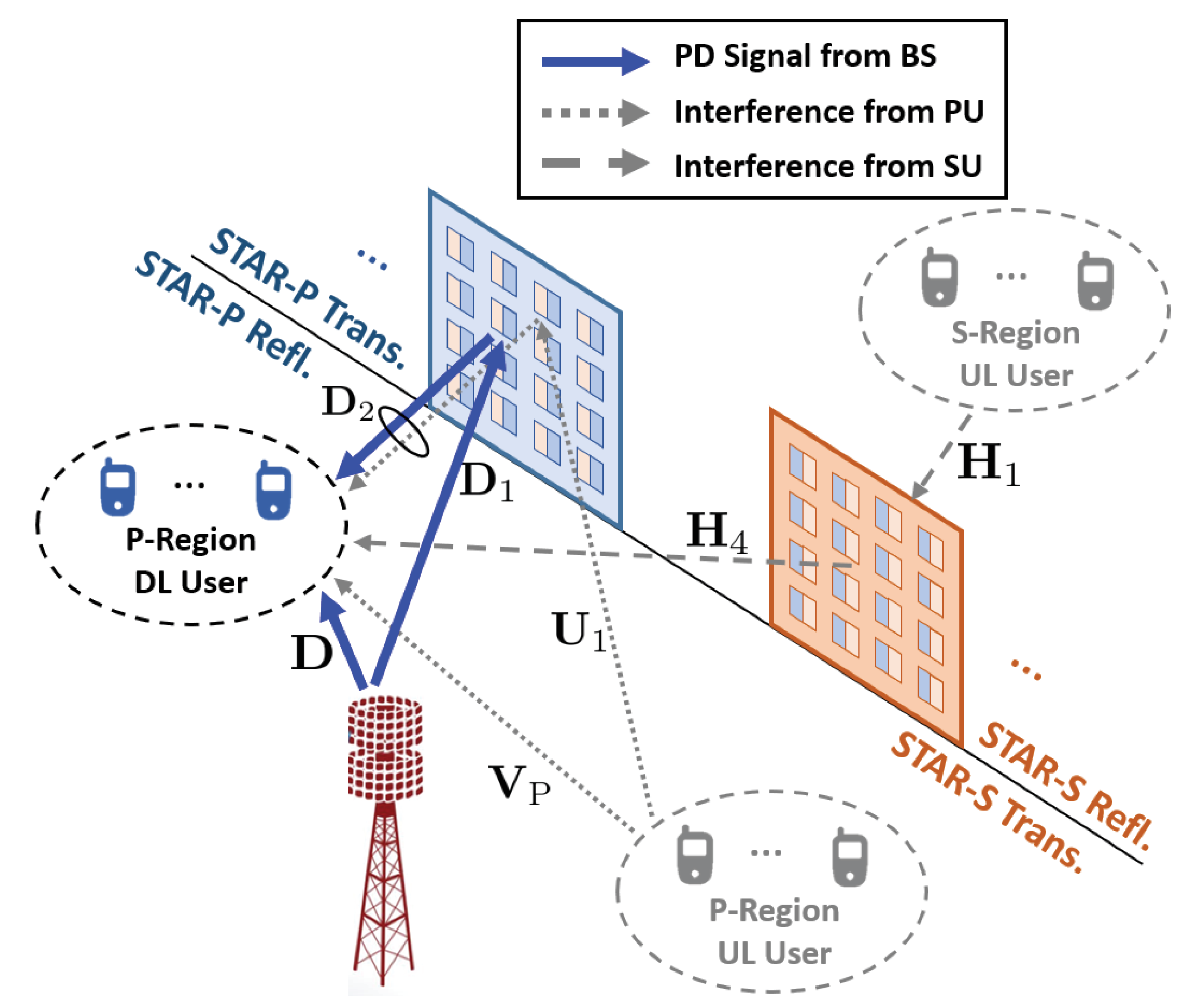} \label{fig:1}}
		\subfigure[]
		{\includegraphics[width=1.75 in]{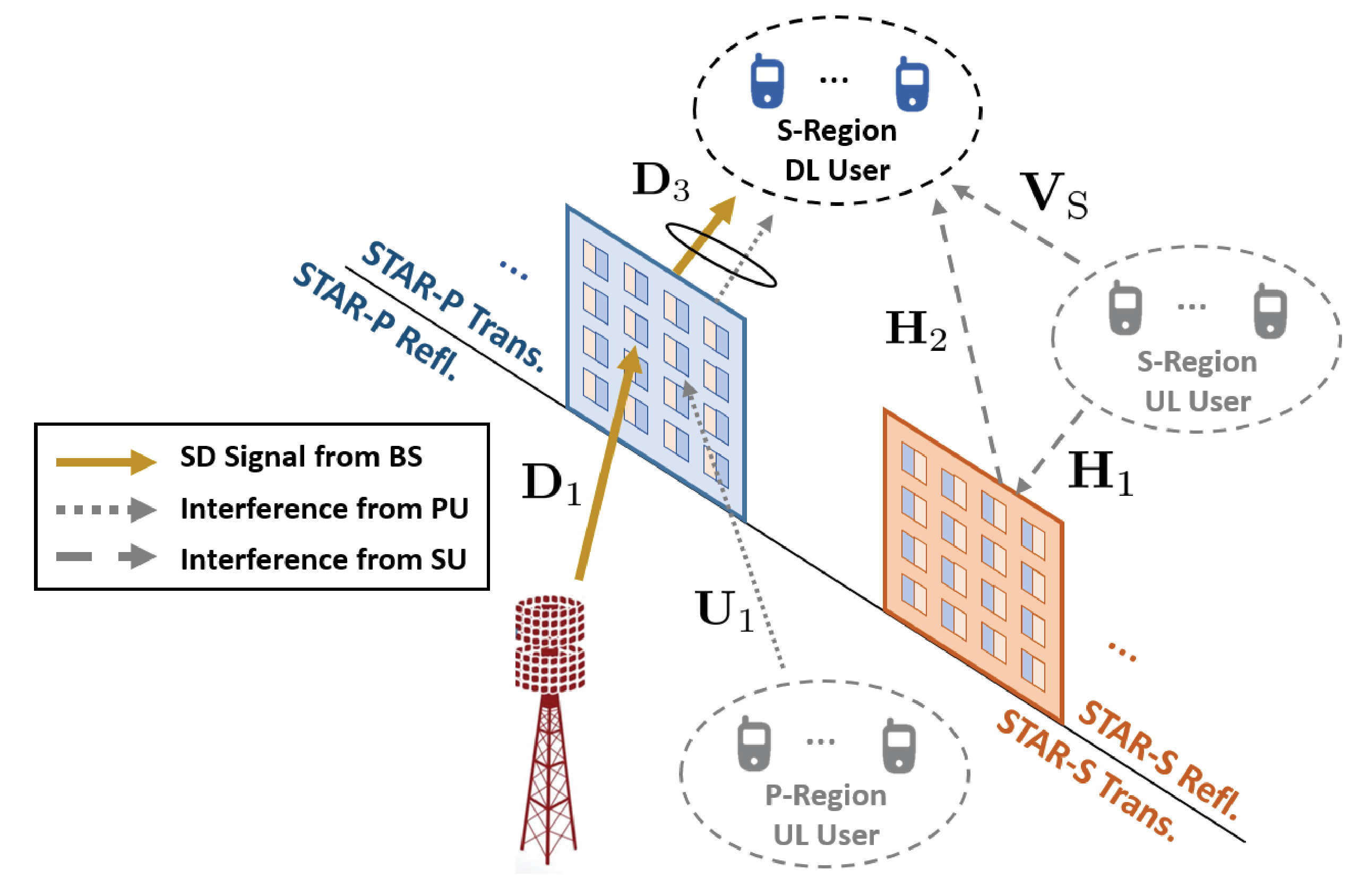} \label{fig:2}}
		\subfigure[]
		{\includegraphics[width=1.75 in]{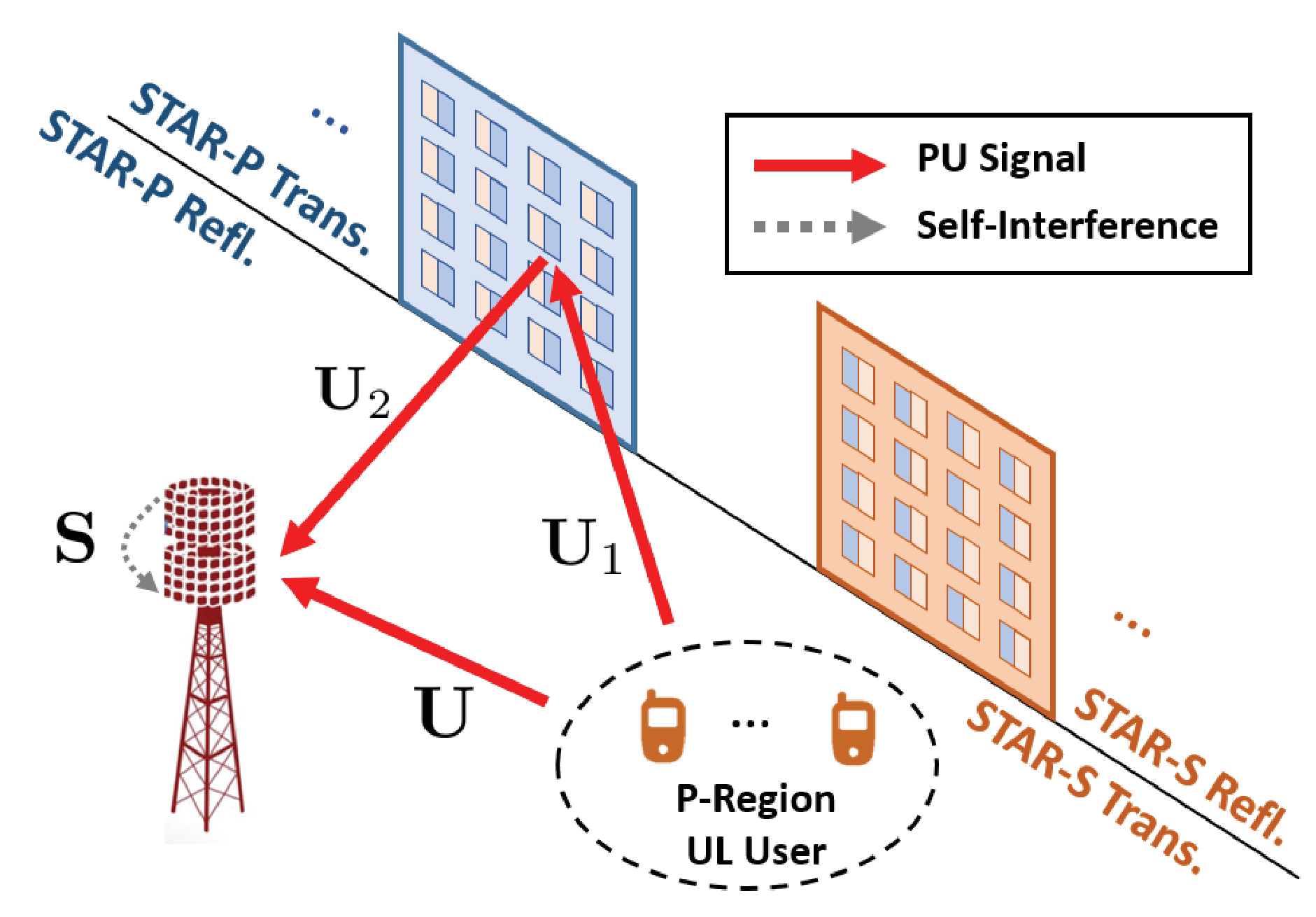} \label{fig:3}}
		\subfigure[]
		{\includegraphics[width=1.75 in]{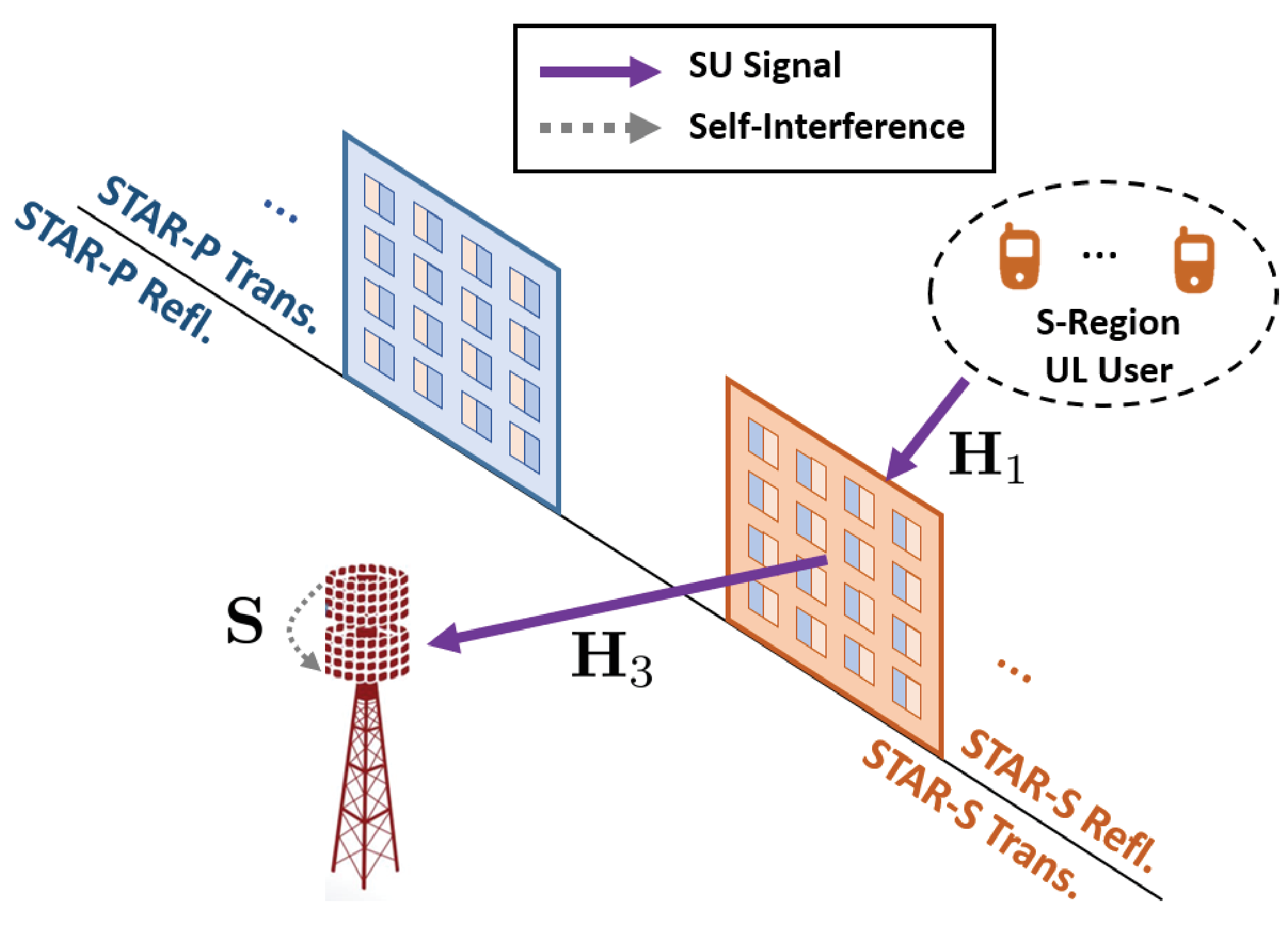} \label{fig:4}}
		\caption{Separate architectures of received signal paths for (a) primary DL, (b) secondary DL, (c) primary UL, and (d) secondary UL users.}
		\label{fig:sep}
	\end{figure*}

As depicted in Fig. \ref{fig:star}, the existing architecture of a single STAR-RIS relies on splitting the incident signal power into reflected and transmitted signals. Note that we consider two single-directional STAR-RISs, which means that the impinging signals may only arrive from one side of the STAR-RIS. This architecture is capable of supporting a DL service, if we consider the incident signal in Fig. \ref{fig:star} as the DL signal arriving from the BS. Once UL users exist, the STAR-RIS should be rotated by 180-degrees for supporting UL services, which may result in an impractical deployment scenario. Therefore, as shown in Fig. \ref{fig:star2}, we have conceived a novel D-STAR system consisting of two STAR-RISs, separating the whole coverage area into primary and secondary regions, termed as P-region and S-region, respectively. The P-region includes both the BS and users, whereas only users exist in the S-region. The STAR-P indicates that the reflection-side is facing towards the BS, whilst the STAR-S is operated with the transmission-side facing towards the BS. Observe from Fig. \ref{fig:star2} that the signal incident from the left can only illuminate STAR-P, generating reflective and refractive signal powers, but it cannot be directed towards the STAR-S. Similarly, STAR-S can only reflect and transmit the signal incident from the right. As a result, a complete 360-degree coverage area can be provided for all incoming signals. Note that there may exist an imperfect alignment for a pair of STAR-RISs with smaller than 180-degree shift, which potentially generates different coverage types. This case will be more focused on the coverage issues which are beyond the scope of this paper. We define the four respective STAR-RISs as $\mathcal{M}_{x}=\{1,2,...,M_x\}$, where $x\in\mathcal{X}=\{\text{PT},\text{PR},\text{ST},\text{SR}\}$, where PT denotes STAR-P transmission and PR represents STAR-P reflection, while ST/SR denote STAR-S transmission and reflection, respectively. Note that we have $M_{\text{PT}} = M_{\text{PR}}$ and $M_{\text{ST}} = M_{\text{SR}}$ due to having identical surfaces with the same number of elements. The phase shift is denoted as
\begin{align}
\boldsymbol{\Theta}_{x} = {\rm diag}(\boldsymbol{\phi}_x) = {\rm diag}({\beta_{x,1}} \theta_{x,1},...,{\beta_{x,M_x}} \theta_{x,M_x}),
\end{align}
where we have $\theta_{x,m} = e^{j \vartheta_{x,m}}$ along with $0<\vartheta_{x,m} \leq 2\pi, \forall m\in \mathcal{M}_x$. The absolute value of phase is constrained by one, i.e., 
\begin{align}
|\theta_{x,m}| = 1. \label{Con_T}
\end{align}
If coupled phase shifts are considered, we will have additional constraints given by \cite{couple1, couple2, couple3}
\begin{align}
	\cos(\vartheta_{\text{PT},m} - \vartheta_{\text{PR},m})=0,
	\quad
	\cos(\vartheta_{\text{ST},m} - \vartheta_{\text{SR},m})=0. \label{Con_T1}
\end{align}
Since each element of D-STAR shares the same total power, the magnitude is confined to
\begin{align}
\beta_{\text{PT},m}^2+\beta_{\text{PR},m}^2=1,
\quad \beta_{\text{ST},m}^2+\beta_{\text{SR},m}^2=1. \label{Con_B}
\end{align}
We have our candidate solution set of D-STAR as $\boldsymbol{\beta}=\{ {\boldsymbol{\beta}}_{\text{PR}} , {\boldsymbol{\beta}}_{\text{PT}}, {\boldsymbol{\beta}}_{\text{SR}},{\boldsymbol{\beta}}_{\text{ST}} \}$ for the amplitudes and $\boldsymbol{\theta}=\{ {\boldsymbol{\theta}}_{\text{PR}} , {\boldsymbol{\theta}}_{\text{PT}}, {\boldsymbol{\theta}}_{\text{SR}},{\boldsymbol{\theta}}_{\text{ST}} \}$ for the phase shifts.

\begin{table*}[t]
\linespread{1.1}
\centering
\footnotesize
  \caption{Symbol Definition of D-STAR}
 \begin{tabular}{lll}
  \hline
  \textbf{Symbol} & \textbf{Dimension}  & \textbf{Definition} \\
  \hline
  $\mathbf{D}=[ \mathbf{d}_{1}, ...,\mathbf{d}_{K_{\text{PD}}} ] $ 
  & $ \mathbb{C}^{N_T \times K_{\text{PD}}}$ 
  & Direct link from BS to PD user
   \\
   
  $\mathbf{D}_1 $ 
  & $ \mathbb{C}^{M_{\text{PR}} \times N_T}$ 	
  & Reflected channel from BS to STAR-P           \\
  
  $\mathbf{D}_2 = [ \mathbf{d}_{2,1}, ...,\mathbf{d}_{2,K_{\text{PD}}} ] $ 
  & $ \mathbb{C}^{M_{\text{PR}} \times K_{\text{PD}}}$ 	
  & Reflected channel from STAR-P to PD user      \\
  
  $\mathbf{D}_3=[ \mathbf{d}_{3,1}, ...,\mathbf{d}_{3,K_{\text{SD}}} ] $ 
  & $ \mathbb{C}^{M_{\text{PT}} \times N_T}$ 	
  & Transmit channel from STAR-P to SD user       \\
  
  $\mathbf{U}=[ \mathbf{u}_{1}, ...,\mathbf{u}_{K_{\text{PU}}} ]$  	
  & $ \mathbb{C}^{N_R \times K_{\text{PU}}}$ 
  &  Direct link from PU user to BS  \\
  
  $\mathbf{U}_1=[ \mathbf{u}_{1,1}, ...,\mathbf{u}_{1,K_{\text{PU}}} ]$ 
  &  $ \mathbb{C}^{M_{\text{PR}} \times K_{\text{PU}}}$ 
  &  Reflected channel from PU to STAR-P \\
  
  $\mathbf{U}_2$ 	
  & $ \mathbb{C}^{N_R \times M_{\text{PR}}}$  
  & Reflected channel from STAR-P to BS \\
  
  $\mathbf{H}_1=[ \mathbf{h}_{1}, ...,\mathbf{h}_{K_{\text{SU}}} ]$  	
  & $ \mathbb{C}^{M_{\text{PR}} \times K_{\text{SU}}}$  
  & Reflected channel from SU user to STAR-S \\
  
  $\mathbf{H}_2=[ \mathbf{h}_{1}, ...,\mathbf{h}_{K_{\text{SD}}} ]$ 	
  &            $ \mathbb{C}^{M_{\text{SR}} \times K_{\text{SD}}}$  
  & Reflected channel from STAR-S to SD user \\
  
  $\mathbf{H}_3$ 
  & $ \mathbb{C}^{ N_R \times M_{\text{ST}} }$  
  & Transmit channel from STAR-S to BS \\
  
  $\mathbf{H}_4=[ \mathbf{h}_{1}, ...,\mathbf{h}_{K_{\text{PD}}} ]$ 	
  &  $ \mathbb{C}^{ M_{\text{ST}} \times K_{\text{PD}} }$  
  &  Transmit channel from STAR-S to PD user    \\
  
  $\mathbf{S}$ 	&  $ \mathbb{C}^{ N_R \times N_T }$  
  &  Self-interference channel   \\
  
  $\mathbf{V}_{\text{P}}=[ \mathbf{v}_{\text{P},1}, ...,\mathbf{v}_{\text{P},K_{\text{PD}}} ]$ 
  &        $ \mathbb{C}^{ K_{\text{PU}} \times K_{\text{PD}} }$  
  & Interfered channel from PU user to PD user \\

  $\mathbf{V}_{\text{S}}=[ \mathbf{v}_{\text{P},1}, ...,\mathbf{v}_{\text{P},K_{\text{SD}}} ]$ 
  &        $ \mathbb{C}^{K_{\text{SU}} \times K_{\text{SD}} }$  
  &  Interfered channel from SU user to SD user \\
  
  $\mathbf{w}_{x}$ 
  & $\mathbb{C}^{N_T}$ 
  & BS transmit beamforming vector      \\
  
  $\boldsymbol{\Theta}_{x}$ 
  & $\mathbb{C}^{M_{x} \times M_{x}}$ 
  & Diagonal matrix of D-STAR configuration\\
  
  $\boldsymbol{\theta}_{x} $ 
  & $ \mathbb{C}^{M_x}$ 
  & Amplitude vector of D-STAR      \\
  
  $\boldsymbol{\beta}_{x}$ 
  & $ \mathbb{R}^{M_x}$ 
  &  Phase shift vector of D-STAR          \\
  \hline
 \end{tabular} \label{channelDSTAR}
\end{table*}

\subsection{SINR Model} 
The overall architecture of the proposed D-STAR is depicted in Fig. \ref{Fig.1}. We consider a single JUD BS having $N_T$ transmit antennas and $N_R$ receiving antennas for DL and UL services, respectively. Note that the UL/DL antennas at the BS are operated independently, i.e., a single antenna element cannot transmit and receive signals simultaneously. It can be readily observed that all users are separated by two STAR-RISs into the region closer to and farther away from the BS, which are termed again as primary and secondary users, respectively. We assume that a total of $K = K_{\text{PD}} + K_{\text{SD}} + K_{\text{PU}} + K_{\text{SU}}$ users are uniformly distributed in the P- and S-region, which are categorized into $K_{\text{PD}}$, $K_{\text{SD}}$ primary/secondary DL (PD/SD) users, and $K_{\text{PU}}$, $K_{\text{SU}}$ primary/secondary UL (PU/SU) users. The user set is denoted by $\mathcal{K}_{u}=\{1,2,...,K_u \}$, where $u\in\mathcal{U}=\{\text{PD}, \text{SD}, \text{PU}, \text{SU}\}$ is the user index for the respective regional users of $\text{PD}$, $\text{SD}$, $\text{PU}$, and $\text{SU}$. We assume that the users in the S-region cannot receive signals directly from the BS due to the highly attenuated or blocked signal paths. For better analyzing the received signal model, we partition the overall network architecture of Fig. \ref{Fig.1} into four regions, as seen in Figs. \ref{fig:1} to \ref{fig:4} with all notations defined in Table \ref{channelDSTAR}. The respective received signal models are elaborated on as follows.
\begin{itemize}
	\item Primary DL users (\textbf{PD}): In Fig. \ref{fig:1}, the PD users can receive their DL signals directly from the BS via the channel $\mathbf{D}$, whilst the reflected signal arrives from the STAR-P reflection via the cascaded channel of $\mathbf{D}_1$ from the BS to STAR-P and via $\mathbf{D}_2$ from STAR-P to the PD users. Since by definition all DL transmit antennas are used for the DL signals, there exist intra-DL and inter-DL user group interferences. Moreover, additional interferences are imposed by the UL users, including PU and SU users. The PU users induce two types of interferences, namely direct link interference associated with channel $\mathbf{V}_{\text{P}}$ and reflected one with interference arriving via the cascaded channel of $\mathbf{U}_1$ from PU to STAR-P and $\mathbf{D}_2$ from STAR-P to PD users. As for SU users, it only imposes refractive interference at STAR-S via the channel $\mathbf{H}_1$ from SU to STAR-S and $\mathbf{H}_4$ from STAR-S to PD users. Therefore, we can obtain the overall received signal model of PD user $k$ as
	\begingroup
	\allowdisplaybreaks
	\begin{align}
	y_{\text{PD},k} &= 
	 \underbrace{\left( {\rm\mathbf{d}}_{2,k}^{\mathcal{H}} {\boldsymbol{\Theta}}_{\text{PR}}{\mathbf{D}}_{1}+{\mathbf{d}}_{k}^{\mathcal{H}} \right) \mathbf{w}_{\text{PD},k} {x}_{\text{PD},k}}_{\text{ PD Signal}} \notag\\
   & + \sum_{k' \in \mathcal{K}_{\text{PD}} \backslash k} \underbrace{\left( {\rm\mathbf{d}}_{2,k}^{\mathcal{H}} {\boldsymbol{\Theta}}_{\text{PR}}{\mathbf{D}}_{1}+{\mathbf{d}}_{k}^{\mathcal{H}} \right) \mathbf{w}_{\text{PD},k'} {x}_{\text{PD},k'}}_{\text{PD Interference}} \notag\\
   & +  \underbrace{\left( {\rm\mathbf{d}}_{2,k}^{\mathcal{H}} {\boldsymbol{\Theta}}_{\text{PR}}{\mathbf{D}}_{1}+{\mathbf{d}}_{k}^{\mathcal{H}} \right) \mathbf{x}_{\text{SD}}}_{\text{SD Interference}} \notag \\ 
   & + \sum_{k'\in \mathcal{K}_{\text{PU}}} \underbrace{\left( {\rm\mathbf{d}}_{2,k}^{\mathcal{H}} {\boldsymbol{\Theta}}_{\text{PR}} {\rm\mathbf{u}}_{1,k'} + {v}_{\text{P},k',k} \right) {x}_{\text{PU},k'}}_{\text{PU Interference}} \notag\\
   & + \sum_{k'\in \mathcal{K}_{\text{SU}}} \underbrace{{\mathbf{h}}_{4,k}^{\mathcal{H}} {\boldsymbol{\Theta}}_{\text{ST}} {\rm\mathbf{h}}_{1,k'} {x}_{\text{SU},k'}}_{\text{SU Interference}}
   + {n}_{\text{PD},k}, \label{y_nd}
	\end{align}	
\endgroup
	where
$\mathbf{x}_{\text{PD}} = \sum_{k\in \mathcal{K}_{\text{PD}}} \mathbf{w}_{\text{PD},k} {x}_{\text{PD},k}$ and 
$\mathbf{x}_{\text{SD}} = \sum_{k\in \mathcal{K}_{\text{SD}}} \mathbf{w}_{\text{SD},k} {x}_{\text{SD},k}$ are the beamformed signals associated with the transmit beamforming vectors defined as $\mathbf{w}_{\text{PD},k}$ and $\mathbf{w}_{\text{SD},k}$ for PD and SD users, respectively.
	
	\item Secondary DL users (\textbf{SD}): In Fig. \ref{fig:2}, SD users can only receive their DL signals from the transmission side of STAR-P via the cascaded channel of $\mathbf{D}_1$ from BS to STAR-P and $\mathbf{D}_3$ from STAR-P to SD users. Likewise, the DL signals are subject to intra- and inter-DL user group interferences. Moreover, additional interferences arrive from PU and SU users. The PU users inflict refractive interferences at STAR-P via the channel $\mathbf{U}_1$ from PU to STAR-P and $\mathbf{D}_3$ from STAR-P to SD users. Two types of interferences impinge from the SU users, i.e., the direct link interference via channel $\mathbf{V}_{\text{S}}$ and the reflected one via the cascaded channel of $\mathbf{H}_1$ from SU to STAR-S and $\mathbf{H}_2$ from STAR-S to SD users. Therefore, we can formulate the overall received signal model of SD user $k$ as
		\begingroup
	\allowdisplaybreaks
	\begin{align}
	y_{\text{SD},k} &= 
  	\underbrace{ \mathbf{d}_{3,k}^{\mathcal{H}} \mathbf{\Theta}_{\text{PT}} \mathbf{D}_{1} \mathbf{w}_{\text{SD},k} {x}_{\text{SD},k}}_{\text{SD Signal}}\notag\\
   & + \sum_{k' \in \mathcal{K}_{\text{SD}}} \underbrace{\mathbf{d}_{3,k}^{\mathcal{H}} \mathbf{\Theta}_{\text{PT}} \mathbf{D}_{1} \mathbf{w}_{\text{SD},k'} {x}_{\text{SD},k'}}_{\text{SD Interference}} \notag\\
   & +  \underbrace{ \mathbf{d}_{3,k}^{\mathcal{H}} \mathbf{\Theta}_{\text{PT}} \mathbf{D}_{1}  \mathbf{x}_{\text{PD}}}_{\text{PD Interference}} 
   + \sum_{k'\in \mathcal{K}_{\text{PU}}} \underbrace{\mathbf{d}_{3,k}^{\mathcal{H}} \mathbf{\Theta}_{\text{PT}} \mathbf{u}_{1,k'} {x}_{\text{PU},k'} }_{\text{PU Interference}} \notag\\
   & + \sum_{k'\in \mathcal{K}_{\text{SU}}} \underbrace{ \left( \mathbf{h}_{2,k}^{\mathcal{H}} \mathbf{\Theta}_{\text{SR}} \mathbf{h}_{1,k'} +{v}_{\text{S},k',k} \right) {x}_{\text{SU},k'} }_{\text{SU Interference}}
 	+ {n}_{\text{SD},k}. \label{y_fd}
	\end{align}
	\endgroup
	
	\item Primary UL users (\textbf{PU}): In Fig. \ref{fig:3}, the BS is capable of receiving the primary uplink user signals from two paths, including the directly link $\mathbf{U}$ from PU to BS as well as the reflected link through STAR-P reflection from $\mathbf{U}_1$ (PU to STAR-P) and $\mathbf{U}_2$ (STAR-P to BS). However, SI is imposed by the downlink transmit signals of both PD/SD users via the channel $\mathbf{S}$. Therefore, we can express the received signal model for PU users at the BS as
	\begin{align}
	\mathbf{y}_{\text{PU}} &= 
 \sum_{k\in \mathcal{K}_{\text{PU}}} \left(  {\rm\mathbf{U}}_2 {\boldsymbol{\Theta}}_{\text{PR}} {\rm\mathbf{u}}_{1,k} + \rm\mathbf{u}_{k} \right) {x}_{\text{PU},k} 
\notag\\
   & + \left( {\mathbf{S}}+{\mathbf{U}}_2 {\boldsymbol{\Theta}}_{\text{PR}} {\rm\mathbf{D}}_1 \right) \left( \mathbf{x}_{\text{PD}} + \mathbf{x}_{\text{SD}}  \right)
+ \textbf{n}_{\text{PU}}. \label{y_nu}
	\end{align}
	
	\item Secondary UL users (\textbf{SU}): In Fig. \ref{fig:4}, the BS can only receive the S-region uplink signals from the transmission region of STAR-S via the cascaded channel of $\mathbf{H}_1$ (SU to STAR-S) and $\mathbf{H}_3$ (STAR-S to BS). Again, SI is imposed by the downlink transmit signals of both PD/SD users via the channel $\mathbf{S}$. Therefore, we can formulate the received signal model of SU users at the BS as
	\begin{align}
	\mathbf{y}_{\text{SU}} &= 
  \sum_{k\in \mathcal{K}_{\text{SU}}} \mathbf{H}_{3} \mathbf{\Theta}_{\text{ST}} \mathbf{h}_{1,k} {x}_{\text{SU},k} 
\notag\\
   & + \left( {\mathbf{S}}+{\mathbf{U}}_2 {\boldsymbol{\Theta}}_{\text{PR}} {\rm\mathbf{D}}_1 \right) \left( \mathbf{x}_{\text{PD}} + \mathbf{x}_{\text{SD}}  \right)
+ \textbf{n}_{\text{SU}}. \label{y_fu}
	\end{align}
\end{itemize}
Note that ${n}_{u,k}$ and $\textbf{n}_{u}$ are defined as noise. Moreover, we can observe from $\eqref{y_nu}$ and $\eqref{y_fu}$ that perfect self-interference cancellation cannot be conducted because in most cases ${\boldsymbol{\Theta}}_{\text{PR}}$ should strike a compromise between the desired signal alignment and self-interference alleviation. Therefore, based on $\eqref{y_nd}$ to $\eqref{y_fu}$, we can obtain the signal-to-interference-plus-noise ratio (SINR) for the users respectively as
\begingroup
 \allowdisplaybreaks
\begin{align}
	& \gamma_{\text{PD},k} = \notag \\
	& \frac{ \lVert {\rm\mathbf{d}}_{\text{PD},k} \mathbf{w}_{\text{PD},k} \rVert^2}{ \smashoperator[r]{\sum\limits_{k' \in \mathcal{K}_{\text{PD}} \backslash k}}\lVert {\rm\mathbf{d}}_{\text{PD},k} {\rm\mathbf{w}}_{\text{PD},k'} \rVert^2 \!+\! \smashoperator[r]{\sum\limits_{k' \in \mathcal{K}_{\text{SD}}}} \lVert {\rm\mathbf{d}}_{\text{PD},k} {\rm\mathbf{w}}_{\text{SD},k'} \rVert^2 + w_{\text{PD},k} + \sigma^{2}}, \label{sinr_nd}\\
	& \gamma_{\text{SD},k} = \notag\\
	& \frac{\lVert {\rm\mathbf{d}}_{\text{SD},k} \mathbf{w}_{\text{SD},k}  \rVert^2}{ \smashoperator[r]{\sum\limits_{k' \in \mathcal{K}_{\text{SD}} \backslash k}} \lVert {\rm\mathbf{d}}_{\text{SD},k} {\rm\mathbf{w}}_{\text{SD},k'} \rVert^2 + \smashoperator[r]{\sum\limits_{k' \in \mathcal{K}_{\text{PD}}}} \lVert {\rm\mathbf{d}}_{\text{SD},k} {\rm\mathbf{w}}_{\text{PD},k'} \rVert^2 + w_{\text{SD},k} + \sigma^{2}}, \label{sinr_fd}\\
	& \gamma_{\text{PU},k} = \frac{\lVert  {\rm\mathbf{U}}_2 {\boldsymbol{\Theta}}_{\text{PR}} {\rm\mathbf{u}}_{1,k} + \rm\mathbf{u}_{k} \lVert^2}{ \sum_{k' \in \mathcal{K}_{\text{PD}}} \lVert \mathbf{S}_t {\rm\mathbf{w}}_{\text{PD},k'} \rVert^2 + \sum_{k' \in \mathcal{K}_{\text{SD}}} \lVert {\rm\mathbf{S}}_{t} {\rm\mathbf{w}}_{\text{SD},k'} \rVert^2 + \sigma^2}, \label{sinr_nu} \\
	& \gamma_{\text{SU},k} = \frac{\lVert {\rm\mathbf{H}}_3 {\boldsymbol{\Theta}}_{\text{ST}} {\rm\mathbf{h}}_{1,k} \lVert^2 }{ \sum_{k' \in \mathcal{K}_{\text{PD}}} \lVert \mathbf{S}_t {\rm\mathbf{w}}_{\text{PD},k'} \rVert^2 + \sum_{k' \in \mathcal{K}_{\text{SD}}} \lVert {\rm\mathbf{S}}_{t} {\rm\mathbf{w}}_{\text{SD},k'} \rVert^2 + \sigma^2}, \label{sinr_fu}
\end{align}
\endgroup
where we define notations of
\begingroup
 \allowdisplaybreaks
\begin{align*}
    {\rm\mathbf{d}}_{\text{PD},k} &= {\rm\mathbf{d}}_{2,k}^{\mathcal{H}} {\boldsymbol{\Theta}}_{\text{PR}}{\mathbf{D}}_{1} + {\mathbf{d}}_{k}^{\mathcal{H}},
    \quad {\mathbf{d}}_{\text{SD},k} = {\rm\mathbf{d}}_{3,k}^{\mathcal{H}} {\boldsymbol{\Theta}}_{\text{PT}} {\mathbf{D}}_{1}, \notag \\
    \quad \mathbf{S}_t &= {\rm\mathbf{S}}+{\rm\mathbf{U}}_2 {\boldsymbol{\Theta}}_{\text{PR}} {\rm\mathbf{D}}_1,  \nonumber \\
    w_{\text{PD},k} &= \lVert \left( {\rm\mathbf{d}}_{2,k}^{\mathcal{H}} {\boldsymbol{\Theta}}_{\text{PR}} {\rm\mathbf{U}}_1 + {\rm\mathbf{v}}_{\text{P},k}^{\mathcal{H}} \right) {\rm\mathbf{x}}_{\text{PU}} \lVert^2 
    + \lVert {\rm\mathbf{h}}_{4,k}^{\mathcal{H}} {\boldsymbol{\Theta}}_{\text{ST}} {\rm\mathbf{H}}_1 {\rm\mathbf{x}}_{\text{SU}}\lVert^2  \nonumber \\
    w_{\text{SD},k} &= \lVert  {\rm\mathbf{d}}_{3,k}^{\mathcal{H}} {\boldsymbol{\Theta}}_{\text{PT}} {\mathbf{U}}_{1} {\mathbf{x}}_{\text{PU}} \lVert^2
    + \lVert \left( {\rm\mathbf{h}}_{2,k}^{\mathcal{H}} {\boldsymbol{\Theta}}_{\text{SR}}{\rm\mathbf{H}}_1 + {\rm\mathbf{v}}_{\text{S},k}^{\mathcal{H}} \right) {\mathbf{x}}_{\text{SU}} \lVert^2.  \nonumber
\end{align*} 
\endgroup 
Note that the total UL interferences expressed in vectorial form for the DL users are identical. Since the interferences are more detrimental than the noise, we consider the same noise power as $\sigma^2$. The individual ergodic rate can be expressed as
\begin{align}
	R_{u} = \sum_{k \in \mathcal{K}_u} \log_2 \left(1+\gamma_{u,k} \right).
\end{align}

\subsection{Problem Formulation}

In our D-STAR-enabled JUD network, we formulate our designed problem as maximizing the effective DL throughput, while guaranteeing the UL rate requirement, which is given by
\begingroup
 \allowdisplaybreaks
 \begin{subequations} \label{pro_origin}
  \begin{align} \mathop{\max}\limits_{\substack{{\mathbf{w}}_{\text{PD}},{\mathbf{w}}_{\text{SD}}, \\ \boldsymbol{\Theta} = \{\boldsymbol{\beta}, \boldsymbol{\theta} \}}} &\quad R_{\text{PD}} + R_{\text{SD}} \label{obj}
     \\
     \text{s.t.} \quad& \eqref{Con_T},\eqref{Con_T1}, \eqref{Con_B}, \label{C0} \\
     &	R_{\text{PU}} \geq R_{\text{PU},th}, \label{C1} \\
     &  R_{\text{SU}} \geq R_{\text{SU},th},\label{C2} \\
     & \sum_{k\in\mathcal{K}_{\text{PD}}} \mathbf{w}_{\text{PD},k}^{\mathcal{H}} \mathbf{w}_{\text{PD},k} + \sum_{k\in\mathcal{K}_{\text{SD}}} \mathbf{w}_{\text{SD},k}^{\mathcal{H}} \mathbf{w}_{\text{SD},k} \leq P_{t} \label{C3}.
\end{align}
 \end{subequations}
  \endgroup 
The constraint $\eqref{C0}$ represents the D-STAR configuration limitation. The constraints $\eqref{C1}$ and $\eqref{C2}$ respectively guarantee that the UL rates of the PU and SU are higher than the predefined thresholds of $R_{\text{PU},th}$ and $R_{\text{SU},th}$. The constraint of $\eqref{C3}$ represents the maximum allowable power of $P_t$. It is worth mentioning that the whole optimization process is carried out at the BS side. In this work, we focus more on the active BS beamforming as well as passive STAR-RIS configuration optimization at DL side, rather than UL power control. Accordingly, we set equal UL power for all UL users. We can observe that problem $\eqref{pro_origin}$ is complex due to its non-linearity and non-convexity as well as owing to the joint optimization of continuous variables and the discretized selection of the D-STAR phases. Therefore, we harness the DBAP scheme for solving the above-mentioned problems in the following section.

 \section{Proposed DBAP Scheme in D-STAR Architecture}\label{D-STAR}

Given the complex problem in $\eqref{pro_origin}$ formulated for the active beamforming $\{\mathbf{w}_{\text{PD}}, \mathbf{w}_{\text{SD}} \}$ and D-STAR configuration $\Theta= \{\boldsymbol{\beta}, \boldsymbol{\theta} \} $, we employ alternating optimization (AO) by harnessing the divide-and-conquer philosophy by decomposing it into the sub-problems of beamforming and D-STAR configuration. We first introduce some useful lemmas that will be employed for solving our problem, i.e., the Lagrangian dual transform \cite{LDtransform}, the Dinkelbach's transformation \cite{Dinkel}, SCA \cite{SCA}, and ADMM \cite{admm, admm2}.

\begin{lemma} \label{lemma_LD}
(Modified Lagrangian Dual Transform): The original problem is equivalent to the transformed one associated with the auxiliary variable $\gamma_{u,k}$ for each ratio term in the SINR as
\begin{align} \label{LD}
	R_u \!=\! \smashoperator[r]{\sum_{k\in\mathcal{K}_u}} \log_2 (1+\gamma_{u,k}) \!-\! \smashoperator[r]{\sum_{k\in\mathcal{K}_u}} \gamma_{u,k} \!+\! \smashoperator[r]{\sum_{k\in\mathcal{K}_u}} \frac{(1+\gamma_{u,k}) A_k(\boldsymbol{\Xi}) }{A_k(\boldsymbol{\Xi}) \!+\! B_k(\boldsymbol{\Xi})},
\end{align}
where $A_k(\boldsymbol{\Xi})$ and $B_k(\boldsymbol{\Xi})$ represent the nominator and denominator terms of the SINR $\gamma_{u,k}$, respectively. Note that $\boldsymbol{\Xi}=\{{\mathbf{w}}_{\text{PD}},{\mathbf{w}}_{\text{SD}}, \boldsymbol{\Theta}\}$ represents the total candidate solution set.
\end{lemma}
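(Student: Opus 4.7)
The plan is to recognize Lemma \ref{lemma_LD} as a user-wise application of the Shen--Yu Lagrangian dual transform, and to verify it by freezing $\boldsymbol{\Xi}$ and maximizing out each auxiliary variable $\gamma_{u,k}$ separately. Concretely, for a single $k\in\mathcal{K}_u$, I would isolate the scalar summand
\begin{equation}
F_k(\gamma)\;=\;\log_2(1+\gamma)\;-\;\gamma\;+\;\frac{(1+\gamma)\,A_k(\boldsymbol{\Xi})}{A_k(\boldsymbol{\Xi})+B_k(\boldsymbol{\Xi})},
\end{equation}
and first observe strict concavity in $\gamma$: the $\log_2$ term is strictly concave on $(-1,\infty)$, while the remainder is affine with slope $\tfrac{A_k}{A_k+B_k}-1=-\tfrac{B_k}{A_k+B_k}<0$, so $F_k$ admits a unique finite interior maximizer.

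Next I would write the stationarity condition $\partial F_k/\partial\gamma=0$. Up to the harmless $\ln 2$ factor inherited from the logarithm base, this becomes $\tfrac{1}{1+\gamma^\star}=\tfrac{B_k}{A_k+B_k}$, so
\begin{equation}
\gamma_{u,k}^{\star}\;=\;\frac{A_k(\boldsymbol{\Xi})}{B_k(\boldsymbol{\Xi})}\;=\;\gamma_{u,k},
\end{equation}
that is, the optimal auxiliary variable coincides with the SINR expressions in \eqref{sinr_nd}--\eqref{sinr_fu}. Substituting $\gamma_{u,k}^{\star}$ back into $F_k$ causes the linear $-\gamma$ contribution and the $\gamma\cdot\tfrac{A_k}{A_k+B_k}$ piece of the fraction to telescope, collapsing $F_k$ to $\log_2\bigl(1+A_k/B_k\bigr)$. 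Summing over $k\in\mathcal{K}_u$ then returns $R_u$ exactly. Because the summands decouple across $\{\gamma_{u,k}\}$, the joint maximum over these auxiliaries reduces to $|\mathcal{K}_u|$ independent one-dimensional concave problems, so the transformed objective shares the same optimum in $\boldsymbol{\Xi}$ as $R_u$, which is precisely the feature we need inside the AO scheme.

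The only real obstacle is cosmetic: the canonical Shen--Yu identity is cleanest in natural logarithm, whereas \eqref{LD} uses $\log_2$. To make the equality hold exactly in $\log_2$ I would either rescale $\gamma_{u,k}\mapsto\gamma_{u,k}/\ln 2$ in the two linear pieces or absorb $\ln 2$ into the coefficient of the fractional term; the rest of the argument is unchanged. After this calibration, the proof reduces to strict concavity plus a first-order condition and the telescoping cancellation that recovers $\log_2(1+\gamma_{u,k})$, with no further convex-optimization machinery required.
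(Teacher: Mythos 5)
Your argument is correct in substance, and its decisive step --- substituting $\gamma_{u,k}=A_k(\boldsymbol{\Xi})/B_k(\boldsymbol{\Xi})$ and watching $-\gamma_{u,k}+\tfrac{(1+\gamma_{u,k})A_k}{A_k+B_k}$ telescope to zero --- is exactly the computation in the paper's own proof (Appendix A). Where you differ is the wrapper: the paper never treats $\gamma_{u,k}$ as the unconstrained maximizer of the base-2 surrogate; it invokes the Lagrangian dual transform of the (unit-weight) sum-of-logarithms problem, takes the auxiliary variable to be set to / asymptotically equal to the SINR, and then verifies that the two added terms cancel identically, so \eqref{LD} holds in any logarithm base with no further bookkeeping. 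Your route via strict concavity and the first-order condition is the canonical Shen--Yu derivation and buys something the paper leaves implicit, namely a justification that the alternating update of the auxiliary variable is precisely the SINR; but it is also what creates your $\ln 2$ issue, and there the phrase ``up to the harmless $\ln 2$ factor'' is too quick: in base 2 the stationary point of your $F_k$ is $\gamma^{\star}=\tfrac{A_k+B_k}{B_k\ln 2}-1\neq A_k/B_k$, so the argmax interpretation genuinely requires the rescaling you describe (or working in natural log), and the rescaled surrogate is no longer literally the expression in \eqref{LD}. The lemma as stated only asserts the identity at $\gamma_{u,k}=A_k/B_k$, which your telescoping step already establishes without any recalibration, so your proof is sound once you present the maximization argument as motivation (or in natural log) and the substitution as the actual verification of \eqref{LD}; both versions deliver what the alternating optimization in Section III uses.
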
 
\begin{proof}
See Appendix \ref{Appendix_LD}.
\end{proof}

\begin{lemma} \label{lemma_DIN}
(Dinkelbach's Optimization): The fractional form in the third term in $\eqref{LD}$ can be transformed into an affine function as
\begin{align}\label{Din}
	\sum_{k\in\mathcal{K}_u} (1+\gamma_{u,k}) \left[ A_k(\boldsymbol{\Xi}) - \lambda_k \Big( A_k(\boldsymbol{\Xi})+ B_k(\boldsymbol{\Xi}) \Big) \right].
\end{align}
The original objective of $\eqref{obj}$ is alternately solved by employing $\eqref{Din}$. Note that the terms in $\sum_{k\in\mathcal{K}_u} \log_2 (1+\gamma_{u,k}) - \sum_{k\in\mathcal{K}_u} \gamma_{u,k}$ in $\eqref{LD}$ are constants acquired at the previous iteration, which can be neglected in the transformed objective.
\end{lemma}
\begin{proof}
See Appendix \ref{Appendix_DIN}.
\end{proof}

\begin{lemma} \label{lemma_Jen}
For a logarithmic function, we have $\sum_{k} \log(1+x_k) \geq \log(1+\sum_{k} x_k)$ with the arbitrary real variable satisfying $x_k\geq 0$.
\end{lemma}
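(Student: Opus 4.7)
The plan is to derive the inequality from a direct algebraic identity rather than from a genuine variational Jensen argument, since the name ``Jensen's Inequality'' is slightly misleading here: the standard Jensen statement for the concave function $\log(1+\cdot)$ would actually yield the \emph{reverse} direction once divided by $|\mathcal{K}|$. The claim $\sum_{k}\log(1+x_k)\ge \log(1+\sum_k x_k)$ under $x_k\ge 0$ is really the \emph{superadditivity} of the function $x\mapsto\log(1+x)$ anchored at $0$, and it follows cleanly from expanding a product.

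First, I would expand the product $\prod_{k}(1+x_k)$. Because every $x_k\ge 0$, each monomial $x_{k_1}x_{k_2}\cdots x_{k_r}$ appearing in the expansion is nonnegative, so
\begin{equation*}
\prod_{k}(1+x_k) \;=\; 1+\sum_{k}x_k+\sum_{k_1<k_2}x_{k_1}x_{k_2}+\cdots\;\ge\;1+\sum_{k}x_k.
\end{equation*}
Taking $\log_2$ on both sides---which is monotonically increasing on $(0,\infty)$---converts the product on the left into the sum $\sum_{k}\log_2(1+x_k)$ and preserves the inequality, producing $\sum_{k}\log_2(1+x_k)\ge \log_2\bigl(1+\sum_{k}x_k\bigr)$ as claimed.

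If instead an inductive argument is preferred, the base case $|\mathcal{K}|=1$ is an equality. For the inductive step, one uses the two-term identity
\begin{equation*}
\log_2(1+x)+\log_2(1+y)=\log_2\bigl((1+x)(1+y)\bigr)=\log_2(1+x+y+xy)\ge \log_2(1+x+y),
\end{equation*}
where the last step uses $xy\ge 0$ and monotonicity of $\log_2$. Iterating this pairwise combination across $\mathcal{K}_u$ closes the induction.

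There is essentially no serious obstacle: the result is an elementary consequence of nonnegativity and the monotonicity of $\log$, and it does not require the full strength of concavity. The only subtlety is making explicit use of the hypothesis $x_k\ge 0$, which guarantees that the cross-term contributions $\sum_{k_1<k_2}x_{k_1}x_{k_2}+\cdots$ are nonnegative. This condition is automatically satisfied in the places where the lemma will subsequently be applied, since the $\gamma_{u,k}$ arguments are SINR values and therefore nonnegative by construction.
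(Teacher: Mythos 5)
Your proof is correct. Note that the paper itself states this lemma without any proof (unlike Lemmas 1, 2, 4 and 5, it has no accompanying appendix), so there is no paper argument to compare against; your product-expansion argument $\prod_k(1+x_k)\ge 1+\sum_k x_k$ followed by monotonicity of $\log_2$, or equivalently the pairwise induction, is a complete and standard justification, and your observation that the result is really superadditivity of $x\mapsto\log(1+x)$ (with the ``Jensen'' label being a misnomer, since concavity alone would point the averaged inequality the other way) is accurate and consistent with how the lemma is actually used in the paper, where the arguments are nonnegative SINR values $\gamma_{u,k}$.
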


\begin{lemma} \label{lemma_sca}
(SCA Procedure): Consider a function $f(x)$ partitioned into a function having concave plus convex terms as $f(x) = f^+(x) + f^-(x)$. To make $f(x)$ concave, we alternatively solve a lower bounded objective $f_{cav}(x)$, i.e., $f(x)\geq f_{cav}(x) = f^+(x) + f^-(x_0) + \nabla_x^{\mathcal{H}} f^-(x_0)(x-x_0)$, where $x_0$ is an arbitrary variable. Note that we have $f(x)\approx f_{cav}(x)$ when $x$ is sufficiently small.
\end{lemma}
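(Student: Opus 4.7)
The plan is to invoke the standard first-order characterization of convex/concave functions and then verify that the resulting surrogate is itself convex. The only non-trivial piece is the linearization of $f^-$: by the first-order inequality for a differentiable convex (respectively concave) function, $f^-(x)$ either lies above or below its tangent $f^-(x_0) + \nabla^{\mathcal{H}} f^-(x_0)(x-x_0)$ for every $x_0$ in the domain, and the decomposition $f = f^+ + f^-$ adopted in the lemma is precisely the one for which this global tangent inequality yields $f^-(x) \geq f^-(x_0) + \nabla^{\mathcal{H}} f^-(x_0)(x-x_0)$. Adding the unchanged term $f^+(x)$ to both sides preserves the inequality and delivers the claimed bound $f(x) \geq f_{cvx}(x)$, with equality (tightness) at $x = x_0$. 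This tightness at the expansion point is the property that subsequently allows the SCA iteration inside DBAP to produce a monotonic sequence of objective values.

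To finish, I would verify that $f_{cvx}(x)$ is convex, which justifies the subscript in its name and its use as a valid sub-problem: it is the sum of the convex term $f^+(x)$, the constant $f^-(x_0)$, and the affine map $\nabla^{\mathcal{H}} f^-(x_0)(x-x_0)$, so that convex plus affine remains convex. Hence $f_{cvx}$ is a legitimate convex surrogate that can be plugged into the AO decomposition alongside Lemma \ref{lemma_LD}, Lemma \ref{lemma_DIN}, and Lemma \ref{lemma_Jen}.

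The only genuine subtlety, rather than a real obstacle, is the Hermitian gradient $\nabla^{\mathcal{H}}$ appearing in the statement: since the optimization variables (the BS beamformers $\mathbf{w}_{\text{PD}}, \mathbf{w}_{\text{SD}}$ and the D-STAR amplitude/phase vectors $\boldsymbol{\beta}, \boldsymbol{\theta}$) are complex-valued, one invokes Wirtinger calculus so that $\nabla^{\mathcal{H}} f^-(x_0)(x-x_0)$ is read as twice the real part of the complex inner product of the cogradient with $x - x_0$. Once this identification is made, the real-valued first-order inequality transfers verbatim and the proof collapses to a single application of the tangent bound.
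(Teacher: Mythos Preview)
Your argument is correct and is in fact cleaner than the paper's own proof. You invoke the global first-order inequality for a differentiable convex (resp.\ concave) function to obtain $f^-(x)\ge f^-(x_0)+\nabla^{\mathcal{H}}f^-(x_0)(x-x_0)$, add $f^+(x)$, note tightness at $x_0$, and then observe that $f_{cvx}$ is convex-plus-affine and hence convex. That is the standard and rigorous SCA/CCP justification.

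The paper proceeds differently: it writes the full Taylor series of $f^-$ around $x_0$, denotes the second- and higher-order terms by $O(x^n)$, and then simply \emph{assumes} these remainder terms are ``comparatively small and tend to zero'' so that dropping them yields the affine minorant $\tilde f^-(x)$. This is a heuristic truncation rather than an inequality: it never actually establishes $f(x)\ge f_{cvx}(x)$ globally, only that the two agree to first order near $x_0$. Your first-order characterization buys a genuine global lower bound (valid for all $x$, not just near $x_0$) and is what is really needed for the monotonicity of the SCA iterates in DBAP; the paper's Taylor-series route is shorter to write but, strictly speaking, only justifies a local approximation. Your remark on reading $\nabla^{\mathcal{H}}$ via Wirtinger calculus and the real part of the complex inner product is also a point the paper leaves implicit.
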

\begin{proof}
See Appendix \ref{Appendix_sca}.
\end{proof}

\subsection{Optimization of Active Transmit Beamformer}

Based on Lemma \ref{lemma_DIN}, we define
\begingroup
\allowdisplaybreaks
\begin{align}
	& A_{k}(\mathbf{w}_{u,k}) = \lVert {\rm\mathbf{d}}_{u, k} {\rm\mathbf{w}}_{u,k} \lVert^2, \\
	& B_{k}(\mathbf{w}_{u,k}) =  \sum_{k'\in \mathcal{K}_{u} \backslash k} \lVert {\rm\mathbf{d}}_{u, k} {\rm\mathbf{w}}_{u,k'} \lVert^2 + \sum_{k' \in \mathcal{K}_{u'}} \lVert {\rm\mathbf{d}}_{u,k} {\rm\mathbf{w}}_{u',k'} \rVert^2 \notag \\
	& \qquad\qquad + w_{u,k} + \sigma^2,\\
	& C_{k}(\mathbf{w}_{u,k}) =  A_{k}(\mathbf{w}_{u,k}) + B_{k}(\mathbf{w}_{u,k}) = \notag \\
	& \smashoperator[r]{\sum_{k'\in \mathcal{K}_{u}}} \lVert {\rm\mathbf{d}}_{u, k} {\rm\mathbf{w}}_{u,k'} \lVert^2 + \smashoperator[r]{\sum_{k' \in \mathcal{K}_{u'}}} \lVert {\rm\mathbf{d}}_{u,k} {\rm\mathbf{w}}_{u',k'} \rVert^2 + w_{u,k} + \sigma^2,
\end{align}
\endgroup
where the tuple is denoted as $(u,u') \in \{ (\text{PD},\text{SD}), ( \text{SD},\text{PD})\}$. We define the auxiliary variables of $\gamma_{u,k} = \frac{A_k(\mathbf{w}_{u,k}^{(a)})}{B_k(\mathbf{w}_{u,k}^{(a)})}$ and $\lambda_{u,k} = \frac{A_k(\mathbf{w}_{u,k}^{(a)})}{C_k(\mathbf{w}_{u,k}^{(a)})}$, which are the solutions obtained at the previous iteration $a$. Based on Lemma \ref{lemma_Jen}, we have a lower bounded rate formulated as
\begin{align} \label{LB_UL}
	R_u \geq \log_2 \left(1 +  \sum_{k \in \mathcal{K}_u} \gamma_{u,k} \right).
\end{align}
Based on $\eqref{LB_UL}$, we can then have the alternative problem associated with the lower bounded constraints of UL rates of $\xi_u$ as follows:
 \begingroup
 \allowdisplaybreaks
 \begin{subequations} \label{problem_w}
\begin{align}	\mathop{\max}\limits_{{\rm\mathbf{w}}_{\text{PD}},{\rm\mathbf{w}}_{\text{SD}}} &\quad 	
 	\sum_{ \substack{u \in\{\text{PD}, \text{SD}\}, \\ k\in\mathcal{K}_{u}}} \left( 1+\gamma_{u,k} \right) \left[ A_{k}(\mathbf{w}_{u,k}) - \lambda_{u,k} C_{k}(\mathbf{w}_{u,k}) \right] \label{obj1} \\
     \text{s.t.} &\quad \eqref{C3}, \notag \\
     & \quad \xi_{u} -t_{u} \Big( \sum_{ \substack{u'\in\{\text{PD},\text{SD}\} , \\ k\in \mathcal{K}_{u'} } } \lVert {\rm\mathbf{S}}_t  {\rm\mathbf{w}}_{u',k} \rVert^2 + \sigma^2 \Big) \geq 0, \notag \\
     & \qquad\qquad\qquad\qquad\qquad\qquad \forall u\in\{ \text{PU}, \text{SU}\}, \label{new_C1} 
\end{align} 
  \end{subequations}
  \endgroup
where we have $t_u = 2^{R_{u,th}}-1,
    \xi_{\text{PU}} = \lVert \left( {\rm\mathbf{U}}+ {\rm\mathbf{U}}_2 {\boldsymbol{\Theta}}_{\text{PR}} {\rm\mathbf{U}}_1 \right) {\rm\mathbf{x}}_{\text{PU}} \lVert^2$, and $
    \xi_{\text{SU}}  = \lVert  {\rm\mathbf{H}}_3 {\boldsymbol{\Theta}}_{\text{ST}} {\rm\mathbf{H}}_1  {\rm\mathbf{x}}_{\text{SU}} \lVert^2.$
We can observe that $\eqref{obj1}$ represents a form of summation associated with a convex and a concave term, whereas $\eqref{new_C1}$ is non-convex. Therefore, we harness the SCA procedure associated with a first-order Taylor series \cite{SCA}, i.e., $f(x) = f(x_0) + \nabla_x^{\mathcal{H}} f(x_0)(x-x_0)$ where $x_0$ is an arbitrary variable. By defining $\tilde{A}_{k}(\mathbf{w}_{u,k}) = 2\mathfrak{R}\left\lbrace
( \mathbf{w}^{(a)}_{u,k})^{\mathcal{H}} \mathbf{d}_{u,k}^{\mathcal{H}} \right\rbrace \left(\mathbf{d}_{u,k} \mathbf{w}_{u,k} - \mathbf{d}_{u,k} \mathbf{w}^{(a)}_{u,k}\right)$, we can then have the following problem represented by
 \begingroup
 \allowdisplaybreaks
 \begin{subequations} \label{problem_w2}
\begin{align}	\mathop{\max}\limits_{{\rm\mathbf{w}}_{\text{PD}},{\rm\mathbf{w}}_{\text{SD}}} &\quad 
	\sum_{ \substack{u \in\{\text{PD}, \text{SD}\},\\ k\in\mathcal{K}_{u}}} \left( 1+\gamma_{u,k} \right) \left[ \tilde{A}_{k}(\mathbf{w}_{u,k}) - \lambda_{u,k} C_{k}(\mathbf{w}_{u,k}) \right]\\
     \text{s.t.} &\qquad  \eqref{new_C1}. \nonumber
\end{align}
 \end{subequations}
  \endgroup
We can observe that problem $\eqref{problem_w2}$ is convex and can be solved to obtain the optimum of $\mathbf{w}_{\text{PD}}$ and $\mathbf{w}_{\text{SD}}$.

\subsection{Optimization of D-STAR}
After obtaining the beamforming policy from $\eqref{problem_w2}$, we proceed to optimize $\boldsymbol{\Theta}$ in D-STAR. We define $\boldsymbol{\phi}_{x}$ as a vector form of $\boldsymbol{\Theta}_{x}$ based on the following lemma and collorary.

\begin{lemma} \label{lemma_sym}
 The expression having the coupled terms of the reflected channel $\mathbf{d}\in \mathbb{C}^{N_1 \times 1}$ and $\mathbf{D}\in \mathbb{C}^{N_1 \times N_2}$, D-STAR configuration $\boldsymbol{\Theta} = {\rm diag}(\boldsymbol{\phi}) \in \mathbb{C}^{N_1 \times N_1}$ and the beamforming $\mathbf{w} \in \mathbb{C}^{N_2 \times 1} $ is equivalent to the following expression:
\begin{align} \label{dd}
 	{\rm\mathbf{d}}^{\mathcal{H}} {\boldsymbol{\Theta}}{\mathbf{D}} \mathbf{w} = \mathbf{w}^{\rm{T}}
{\rm rep} ({\rm\mathbf{d}}^{\mathcal{H}},N_2,1) \odot {\mathbf{D}}^{\rm{T}} \boldsymbol{\phi},
 \end{align}
where $\odot$ is the Hadamard product, whilst ${\rm rep} ({\rm\mathbf{d}}^{\mathcal{H}},N_2,1)$ stacks repeated vectors ${\rm\mathbf{d}}^{\mathcal{H}}$ into a matrix associated with dimension $N_2\times 1$.
\end{lemma}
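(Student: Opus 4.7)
The plan is to verify this vectorization identity by reducing both sides to the same scalar double sum and showing that the repmat--Hadamard construction on the right implements exactly the sandwiched matrix product on the left.

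First, I would expand the LHS entry-wise. Since $\boldsymbol{\Theta}={\rm diag}(\boldsymbol{\phi})$ is diagonal with $[\boldsymbol{\Theta}]_{i,i}=\phi_i$, the scalar on the left equals
\begin{align*}
\mathbf{d}^{\mathcal{H}} \boldsymbol{\Theta} \mathbf{D} \mathbf{w} = \sum_{i=1}^{N_1} \sum_{j=1}^{N_2} \bar{d}_i\, \phi_i\, D_{i,j}\, w_j.
\end{align*}
Equivalently, applying the elementary column-wise identity $\mathbf{a}^{\mathcal{H}} {\rm diag}(\boldsymbol{\phi}) \mathbf{b}=(\bar{\mathbf{a}}\odot \mathbf{b})^{\rm T}\boldsymbol{\phi}$ to the decomposition $\mathbf{D}\mathbf{w}=\sum_j w_j \mathbf{D}_{:,j}$ gives the compact rewriting $\sum_j w_j (\bar{\mathbf{d}} \odot \mathbf{D}_{:,j})^{\rm T} \boldsymbol{\phi}$, where $\mathbf{D}_{:,j}$ denotes the $j$-th column of $\mathbf{D}$.

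Next, I would interpret the RHS component-wise and match it term by term. The matrix ${\rm rep}(\mathbf{d}^{\mathcal{H}},N_2,1)$ stacks $\mathbf{d}^{\mathcal{H}}$ as each of its $N_2$ rows, while $\mathbf{D}^{\rm T}$ has the $j$-th column of $\mathbf{D}$ (transposed) as its $j$-th row. Their Hadamard product therefore places $(\bar{\mathbf{d}} \odot \mathbf{D}_{:,j})^{\rm T}$ in row $j$. Right-multiplication by $\boldsymbol{\phi}$ then produces an $N_2$-vector whose $j$-th entry is $(\bar{\mathbf{d}} \odot \mathbf{D}_{:,j})^{\rm T}\boldsymbol{\phi}$, and the final $\mathbf{w}^{\rm T}$ pre-multiplication weights these entries by $w_j$ and sums them, reproducing the double sum derived for the LHS.

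The step I expect to demand the most care is the dimensional bookkeeping around the rep operator together with the plain (non-conjugate) transpose $\mathbf{D}^{\rm T}$: one must verify that the rows of ${\rm rep}(\mathbf{d}^{\mathcal{H}},N_2,1)$ are indexed consistently with the rows of $\mathbf{D}^{\rm T}$, so that the Hadamard product pairs the conjugated entry $\bar{d}_i$ with $D_{i,j}$ rather than with $D_{j,i}$, and that no stray conjugation is absorbed by the ordinary transpose on $\mathbf{D}$ and $\mathbf{w}$. Once this alignment is confirmed, the claimed equality reduces to a trivial rearrangement of a finite double sum and the lemma follows.
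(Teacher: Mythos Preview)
Your proposal is correct and follows essentially the same route as the paper: both arguments verify the identity by direct entry-wise expansion, matching the double sum $\sum_{i,j}\bar d_i\phi_i D_{i,j}w_j$ on each side. The only cosmetic difference is that the paper first writes out the case $N_1=3,\,N_2=2$ explicitly and then states the general pattern (calling this ``induction''), whereas you work with arbitrary indices from the outset; your version is in fact the cleaner of the two.
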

\begin{proof}
See Appendix \ref{Appendix_sym}.
\end{proof}
 
\begin{corollary} \label{Cor_dd}
From Lemma \ref{lemma_sym}, we may arrive at a more detailed form of ${\rm\mathbf{U}} {\boldsymbol{\Theta}}{\mathbf{D}} \mathbf{w}$, where ${\rm\mathbf{U}} \in \mathbb{C}^{N_3 \times N_1}$, namely
\begin{align} \label{DD}
	{\rm\mathbf{U}} {\boldsymbol{\Theta}}{\mathbf{D}} \mathbf{w}
	= \left( \mathbf{I}_{N_3} \otimes \mathbf{w}^{\rm{T}} \right) 
{\rm diag}(\left[\mathbf{U}_1, ..., \mathbf{U}_{N_3}\right] )
\odot  \left( \mathbf{I}_{N_3} \otimes \mathbf{D}^{\rm{T}} \right) \boldsymbol{\phi},
\end{align}
where $\otimes$ is Kronecker product, $\mathbf{I}_{N_3}$ is an identity matrix with a dimension of $N_3\times N_3$, and $\mathbf{U}_{n} = {\rm rep} (\mathbf{U}_{(n,:)},N_2,1), \forall 1\leq n\leq N_3$, where $\mathbf{U}_{(n,:)}$ indicates the $n$-th row vector of $\mathbf{U}$.
\end{corollary}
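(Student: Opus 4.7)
The plan is to reduce Corollary \ref{Cor_dd} to a rowwise application of Lemma \ref{lemma_sym}, followed by a Kronecker-product reassembly of the $N_3$ resulting scalar identities. Writing $\mathbf{u}_n^{\mathcal{H}} = \mathbf{U}_{(n,:)}$ for the $n$-th row of $\mathbf{U}$, the $n$-th entry of the target vector $\mathbf{U}\boldsymbol{\Theta}\mathbf{D}\mathbf{w}$ is the scalar $\mathbf{u}_n^{\mathcal{H}}\boldsymbol{\Theta}\mathbf{D}\mathbf{w}$, which is exactly the object handled by Lemma \ref{lemma_sym}. Applying that lemma row by row therefore gives, for each $n = 1,\ldots,N_3$,
\begin{equation}
(\mathbf{U}\boldsymbol{\Theta}\mathbf{D}\mathbf{w})_n = \mathbf{w}^{\rm{T}} \bigl(\mathbf{U}_n \odot \mathbf{D}^{\rm{T}}\bigr)\boldsymbol{\phi},
\end{equation}
with $\mathbf{U}_n = {\rm rep}(\mathbf{u}_n^{\mathcal{H}}, N_2, 1)$ as defined in the corollary.

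The remaining work is to stack these $N_3$ scalar identities into the single vectorial identity \eqref{DD}. I would invoke the standard fact that, for matrices $\mathbf{M}_1,\ldots,\mathbf{M}_{N_3}$ of common size, the column vector whose $n$-th entry equals $\mathbf{w}^{\rm{T}}\mathbf{M}_n\boldsymbol{\phi}$ coincides with $(\mathbf{I}_{N_3}\otimes \mathbf{w}^{\rm{T}})\,{\rm diag}(\mathbf{M}_1,\ldots,\mathbf{M}_{N_3})\,\boldsymbol{\phi}$, where $\boldsymbol{\phi}$ is broadcast across the block structure. Specializing to $\mathbf{M}_n = \mathbf{U}_n \odot \mathbf{D}^{\rm{T}}$, I would then use the elementary observation that the Hadamard product of two block-diagonal matrices with matching block shapes equals the block-diagonal matrix of the blockwise Hadamard products. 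Concretely, ${\rm diag}([\mathbf{U}_1,\ldots,\mathbf{U}_{N_3}]) \odot (\mathbf{I}_{N_3}\otimes \mathbf{D}^{\rm{T}})$ collapses to ${\rm diag}(\mathbf{U}_1 \odot \mathbf{D}^{\rm{T}},\ldots,\mathbf{U}_{N_3} \odot \mathbf{D}^{\rm{T}})$, which after left-action by $\mathbf{I}_{N_3}\otimes \mathbf{w}^{\rm{T}}$ and contraction with $\boldsymbol{\phi}$ recovers precisely the right-hand side of \eqref{DD}.

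The main obstacle is purely bookkeeping of the Kronecker, block-diagonal, and Hadamard operators, and it comes down to two consistency checks. First, I would verify that the off-diagonal blocks of $\mathbf{I}_{N_3}\otimes \mathbf{D}^{\rm{T}}$ are zero, so that under the Hadamard product they annihilate all off-diagonal blocks of ${\rm diag}([\mathbf{U}_1,\ldots,\mathbf{U}_{N_3}])$ and only the intended diagonal blocks $\mathbf{U}_n \odot \mathbf{D}^{\rm{T}}$ survive. Second, I would confirm that $\mathbf{I}_{N_3}\otimes \mathbf{w}^{\rm{T}}$ acts as a block-diagonal ``extractor'' that picks out the $N_3$ scalars $\mathbf{w}^{\rm{T}}(\mathbf{U}_n \odot \mathbf{D}^{\rm{T}})\boldsymbol{\phi}$ in the correct order. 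Once these two checks are in place, the corollary follows with no further analytical content, since it is a direct matrix-calculus repackaging of Lemma \ref{lemma_sym}.
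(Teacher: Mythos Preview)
Your proposal is correct and follows essentially the same route as the paper's own proof: apply Lemma \ref{lemma_sym} to each row $\mathbf{U}_{(n,:)}$ of $\mathbf{U}$, then reassemble the $N_3$ scalar identities via a block-diagonal/Kronecker structure, noting that the Kronecker factor $\mathbf{I}_{N_3}\otimes \mathbf{D}^{\rm T}$ kills all off-diagonal contributions under the Hadamard product. The paper's argument is considerably terser than yours but contains no additional ideas; your explicit bookkeeping of the Hadamard--block-diagonal interaction and the ``extractor'' role of $\mathbf{I}_{N_3}\otimes \mathbf{w}^{\rm T}$ is exactly the content the paper summarizes in two sentences.
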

\begin{proof}
See Appendix \ref{Appendix_dd}.
\end{proof}

Therefore, similar to problem $\eqref{problem_w}$, the alternative problem of $\eqref{pro_origin}$ for designing our D-STAR configuration is given by
 \begingroup
 \allowdisplaybreaks
 \begin{subequations} \label{problem_T}
\begin{align}
    \mathop{\max}\limits_{\substack{{\boldsymbol{\phi}_{\text{PR}}},{\boldsymbol{\phi}_{\text{PT}}},\\{\boldsymbol{\phi}_{\text{SR}}},{\boldsymbol{\phi}_{\text{ST}}}}} &\ 
    \sum_{\substack{k\in \mathcal{K}_u,\\ (u,x,x')\in\mathcal{T}}} \left( 1+\gamma_{u,k} \right) \left[ A_{u,k}(\boldsymbol{\phi}_{x}) - \lambda_{u,k} C_{u,k}(\boldsymbol{\phi}_{x}, \boldsymbol{\phi}_{x'}) \right] \label{obj_D} \\
     \text{s.t.} \qquad& \eqref{C0}, \nonumber \\
     & \lVert \boldsymbol{\Psi}_{\text{PU}} {\boldsymbol{\phi}}_{\text{PR}} + {\rm\mathbf{u}}_{\text{PU}} \rVert^2 
     - t_{\text{PU}} \, \xi\left({\boldsymbol{\phi}}_{\text{PR}}\right) \geq 0, \label{C_T1} \\
    & \lVert \boldsymbol{\Psi}_{\text{SU}} {\boldsymbol{\phi}}_{\text{ST}} \rVert^2 
    - t_{\text{SU}} \, \xi({\boldsymbol{\phi}}_{\text{PR}}) \geq 0, \label{C_T2}
\end{align}
 \end{subequations}
  \endgroup
  where
 \begingroup
 \allowdisplaybreaks 
  \begin{subequations} 
  \begin{align}
   	& A_{\text{PD},k}(\boldsymbol{\phi}_{\text{PR}}) = \lVert \boldsymbol{\varphi}_{1,k,k}{\boldsymbol{\phi}}_{\text{PR}} + d_{1,k,k}  \rVert^2 , \\
   	& A_{\text{SD},k}(\boldsymbol{\phi}_{\text{PT}}) = \lVert \boldsymbol{\psi}_{1,k,k}{\boldsymbol{\phi}}_{\text{PT}} \rVert^2, \\
    & C_{\text{PD},k}(\boldsymbol{\phi}_{\text{PR}}, \boldsymbol{\phi}_{\text{ST}})  = \sum_{k'\in \mathcal{K}_{\text{PD}}} \lVert \boldsymbol{\varphi}_{1,k,k'}{\boldsymbol{\phi}}_{\text{PR}} + d_{1,k,k'} \lVert^2 \notag\\
    & + \sum_{k' \in \mathcal{K}_{\text{SD}}} \lVert \boldsymbol{\varphi}_{2,k,k'}{\boldsymbol{\phi}}_{\text{PR}} + d_{2,k,k'} \rVert^2 + \lVert \boldsymbol{\varphi}_{3,k} {\boldsymbol{\phi}}_{\text{PR}} + v_{\text{P},k} \rVert^2 \notag \\
    & +\lVert \boldsymbol{\varphi}_{4,k} {\boldsymbol{\phi}}_{\text{ST}} \rVert^2 + \sigma^2, \\  
    & C_{\text{SD},k}(\boldsymbol{\phi}_{\text{PT}}, \boldsymbol{\phi}_{\text{SR}}) \!=\! \smashoperator[r]{\sum_{k'\in \mathcal{K}_{\text{SD}}}} \lVert \boldsymbol{\psi}_{1,k,k'}{\boldsymbol{\phi}}_{\text{PT}} \lVert^2 
    + \smashoperator[r]{\sum_{k' \in \mathcal{K}_{\text{PD}}}} \lVert \boldsymbol{\psi}_{2,k,k'}{\boldsymbol{\phi}}_{\text{PT}} \rVert^2 \nonumber \\
    & + \lVert  \boldsymbol{\psi}_{3,k} {\boldsymbol{\phi}}_{\text{PT}} \rVert^2 + \lVert  \boldsymbol{\psi}_{4,k} {\boldsymbol{\phi}}_{\text{SR}} + v_{\text{S},k} \rVert^2 + \sigma^2, \\
   & \xi\left({\boldsymbol{\phi}}_{\text{PR}}\right) = \sum_{ \substack{u'\in\{\text{PD},\text{SD}\} , k\in \mathcal{K}_{u'} } } \lVert \boldsymbol{\Psi}_{k} {\boldsymbol{\phi}}_{\text{PR}} + \mathbf{s}_{u',k} \rVert^2 + \sigma^2,
 \end{align}
  \end{subequations}
  \endgroup 
where the set obeys $\mathcal{T}= \{ (\text{PD}, \text{PR}, \text{ST}), (\text{SD}, \text{PT}, \text{SR}) \}$ in $\eqref{obj_D}$, whilst $(u,u') \in \{ (\text{PD},\text{SD}),  ( \text{SD},\text{PD})\}$. The other symbols based on Lemma \ref{lemma_sym} and Collorary \ref{Cor_dd} are listed as
 \begingroup
 \allowdisplaybreaks
\begin{align*}
    & \boldsymbol{\varphi}_{1,k,k'}{\boldsymbol{\phi}}_{\text{PR}}  =  {\rm\mathbf{d}}_{2,k}^{\mathcal{H}} {\boldsymbol{\Theta}}_{\text{PR}} {\rm\mathbf{D}}_1 {\rm\mathbf{w}}_{\text{PD},k'}, \notag \\
    & \boldsymbol{\varphi}_{2,k,k'}{\boldsymbol{\phi}}_{\text{PR}}  =  {\rm\mathbf{d}}_{2,k}^{\mathcal{H}} {\boldsymbol{\Theta}}_{\text{PR}} {\rm\mathbf{D}}_1 {\rm\mathbf{w}}_{\text{SD},k'}, \
    \boldsymbol{\varphi}_{3,k} {\boldsymbol{\phi}}_{\text{PR}} =  {\rm\mathbf{d}}_{2,k}^{\mathcal{H}} {\boldsymbol{\Theta}}_{\text{PR}} {\rm\mathbf{U}}_1 {\rm\mathbf{x}}_{\text{PU}}, \\
    & \boldsymbol{\varphi}_{4,k} {\boldsymbol{\phi}}_{\text{ST}}  =  {\rm\mathbf{h}}_{4,k}^{\mathcal{H}} {\boldsymbol{\Theta}}_{\text{ST}} {\rm\mathbf{H}}_1 {\rm\mathbf{x}}_{\text{SU}},  \
    \boldsymbol{\psi}_{1,k,k'} {\boldsymbol{\phi}}_{\text{PT}}  =  {\rm\mathbf{d}}_{3,k}^{\mathcal{H}} {\boldsymbol{\Theta}}_{\text{PT}} {\rm\mathbf{D}}_1 {\rm\mathbf{w}}_{\text{SD},k'}, \notag \\
    & \boldsymbol{\psi}_{2,k,k'} {\boldsymbol{\phi}}_{\text{PT}}  =  {\rm\mathbf{d}}_{3,k}^{\mathcal{H}} {\boldsymbol{\Theta}}_{\text{PT}} {\rm\mathbf{D}}_1 {\rm\mathbf{w}}_{\text{PD},k'}, \
    \boldsymbol{\psi}_{3,k} {\boldsymbol{\phi}}_{\text{PT}}  =  {\rm\mathbf{d}}_{3,k}^{\mathcal{H}} {\boldsymbol{\Theta}}_{\text{PT}} {\rm\mathbf{U}}_1 {\rm\mathbf{x}}_{\text{PU}},  \notag \\
    & \boldsymbol{\psi}_{4,k} {\boldsymbol{\phi}}_{\text{SR}}  =  {\rm\mathbf{h}}_{2,k}^{\mathcal{H}} {\boldsymbol{\Theta}}_{\text{SR}} {\rm\mathbf{H}}_1 {\rm\mathbf{x}}_{\text{SU}},   \
    \boldsymbol{\Psi}_{\text{PU}} {\boldsymbol{\phi}}_{\text{PR}}  =  {\rm\mathbf{U}}_{2} {\boldsymbol{\Theta}}_{\text{PR}} {\rm\mathbf{U}}_1 {\rm\mathbf{x}}_{\text{PU}},   \\
    & \boldsymbol{\Psi}_{\text{SU}} {\boldsymbol{\phi}}_{\text{ST}}  = {\rm\mathbf{H}}_{3} {\boldsymbol{\Theta}}_{\text{ST}} {\rm\mathbf{H}}_1 {\rm\mathbf{x}}_{\text{SU}},  \
    \boldsymbol{\Psi}_{k} {\boldsymbol{\phi}}_{\text{PR}}  =  {\rm\mathbf{U}}_{2} {\boldsymbol{\Theta}}_{\text{PR}} {\rm\mathbf{D}}_1 {\rm\mathbf{w}}_{u',k},
     \notag \\
     & d_{1,k,k'} = {\mathbf{d}}_{k}^{\mathcal{H}} {\mathbf{w}}_{\text{PD},k'}, \
     d_{2,k,k'} = {\mathbf{d}}_{k}^{\mathcal{H}} {\mathbf{w}}_{\text{SD},k'}, \\
    & v_{p,k} =  {\rm\mathbf{v}}_{p,k}^{\mathcal{H}} {\rm\mathbf{x}}_{p\text{U}}, \ \forall p\in \{ \text{P}, \text{S} \}, \
    \mathbf{u}_{\text{PU}} = \mathbf{U} \mathbf{x}_{\text{PU}}, \
    \mathbf{s}_{u',k} = {\rm\mathbf{S}} {\rm\mathbf{w}}_{u',k}. 
\end{align*}
 \endgroup
We can observe that problem $\eqref{problem_T}$ is non-convex. Accordingly, we harness the SCA for $\eqref{obj_D}$, $\eqref{C_T1}$, and $\eqref{C_T2}$ for the respective non-convex terms. With the aid of the first-order Taylor approximation, we can then arrive from problem $\eqref{problem_T}$ at a quadratic form w.r.t. $\boldsymbol{\phi}_{x}$ in $\eqref{problem_T2}$. Note that we have sorted out the related terms as second-order, first-order and constant functions for classifying the associated properties w.r.t. $\boldsymbol{\phi}_{x}$. The corresponding problem is reformulated as
 \begingroup
 \allowdisplaybreaks
 \begin{subequations} \label{problem_T2}
\begin{align}
       \mathop{\max}\limits_{\substack{{\boldsymbol{\phi}_{\text{PR}}},{\boldsymbol{\phi}_{\text{PT}}},\\ {\boldsymbol{\phi}_{\text{SR}}},{\boldsymbol{\phi}_{\text{ST}}}}} &\
       \!-\!{\boldsymbol{\phi}}_{\text{PR}}^{\mathcal{H}} \boldsymbol{\Omega}_1 {\boldsymbol{\phi}}_{\text{PR}} 
       \!+\! f_1 \left( {\boldsymbol{\phi}}_{\text{PR}} \right) 
       \!-\!{\boldsymbol{\phi}}_{\text{ST}}^{\mathcal{H}} \boldsymbol{\Omega}_2 {\boldsymbol{\phi}}_{\text{ST}} 
       \!-\!{\boldsymbol{\phi}}_{\text{PT}}^{\mathcal{H}} \boldsymbol{\Omega}_3 {\boldsymbol{\phi}}_{\text{PT}} \nonumber \\
       & \quad + f_2 \left( {\boldsymbol{\phi}}_{\text{PT}} \right) 
       \!-\!{\boldsymbol{\phi}}_{\text{SR}}^{\mathcal{H}} \boldsymbol{\Omega}_4 {\boldsymbol{\phi}}_{\text{SR}} 
       \!+\! f_3 \left( {\boldsymbol{\phi}}_{\text{SR}} \right) \label{obj_T2}\\
    \text{s.t.} \quad&  \eqref{C0}, \nonumber \\
    &
    -{\boldsymbol{\phi}}_{\text{PR}}^{\mathcal{H}} \boldsymbol{\Upsilon}_1 {\boldsymbol{\phi}}_{\text{PR}} 
    + g_1 \left( {\boldsymbol{\phi}}_{\text{PR}} \right) + c_1 \geq 0, \label{C_T2_1} \\
    & 
    -{\boldsymbol{\phi}}_{\text{PR}}^{\mathcal{H}} \boldsymbol{\Upsilon}_2 {\boldsymbol{\phi}}_{\text{PR}} 
    \!+\! g_2 \left( {\boldsymbol{\phi}}_{\text{PR}} \right) 
    \!+\! g_3 \left( {\boldsymbol{\phi}}_{\text{ST}} \right) 
    \!+\! c_2 \geq 0. \label{C_T2_2}
\end{align}
 \end{subequations}
  \endgroup
We define the related notations in problem $\eqref{problem_T2}$ at top of this page.
 \begin{figure*}
\begin{align*}
    \boldsymbol{\Omega}_1 &= \sum_{k\in \mathcal{K}_{\text{PD}}} (1+\gamma_{\text{PD},k}) \lambda_{\text{PD}, k}
\left( \sum_{k'\in \mathcal{K}_{\text{PD}}}  \boldsymbol{\varphi}_{1,k,k'}^{\mathcal{H}} \boldsymbol{\varphi}_{1,k,k'}  
	+ \sum_{k'\in \mathcal{K}_{\text{SD}}}  \boldsymbol{\varphi}_{2,k,k'}^{\mathcal{H}} \boldsymbol{\varphi}_{2,k,k'} 
	+ \boldsymbol{\varphi}_{3,k}^{\mathcal{H}} \boldsymbol{\varphi}_{3,k} \right),  \\
    \boldsymbol{\Omega}_3 & = \sum_{k\in \mathcal{K}_{\text{SD}}} (1+\gamma_{\text{SD},k}) \lambda_{\text{SD}, k}
\left( \sum_{k'\in \mathcal{K}_{\text{SD}}}  \boldsymbol{\psi}_{1,k,k'}^{\mathcal{H}} \boldsymbol{\varphi}_{1,k,k'}  
	+ \sum_{k'\in \mathcal{K}_{\text{PD}}}  \boldsymbol{\psi}_{2,k,k'}^{\mathcal{H}} \boldsymbol{\psi}_{2,k,k'} 
	+ \boldsymbol{\psi}_{3,k}^{\mathcal{H}} \boldsymbol{\psi}_{3,k} \right),  \\
    \boldsymbol{\Omega}_2 & = \sum_{k\in \mathcal{K}_{\text{PD}}} (1+\gamma_{\text{PD},k}) \lambda_{\text{PD}, k} {\boldsymbol{\varphi}}_{4,k}^{\mathcal{H}} {\boldsymbol{\varphi}}_{4,k},  \qquad
    \boldsymbol{\Omega}_4 = \sum_{k\in \mathcal{K}_{\text{SD}}} (1+\gamma_{\text{SD},k}) \lambda_{\text{SD}, k} {\boldsymbol{\psi}}_{4,k}^{\mathcal{H}} {\boldsymbol{\psi}}_{4,k},  \\
    \boldsymbol{\Upsilon}_1 &= t_{\text{PU}} \sum_{ \substack{u'\in\{\text{PD},\text{SD}\} , k\in \mathcal{K}_{u'} } } {\boldsymbol{\Psi}}_{k}^{\mathcal{H}} {\boldsymbol{\Psi}}_{k} , \qquad
    \boldsymbol{\Upsilon}_2 = t_{\text{SU}} \sum_{ \substack{u'\in\{\text{PD},\text{SD}\} , k\in \mathcal{K}_{u'} } } {\boldsymbol{\Psi}}_{k}^{\mathcal{H}} {\boldsymbol{\Psi}}_{k} , \nonumber \\
    f_1 \left( {\boldsymbol{\phi}}_{\text{PR}} \right) &= 
    \sum_{k\in \mathcal{K}_{\text{PD}}} \left( 1 + \gamma_{\text{PD},k} \right) \left\lbrace  2 \mathfrak{R} \{ \boldsymbol{\phi}_{\text{PR}}^{(a) \mathcal{H}} \boldsymbol{\varphi}_{1,k,k}^{\mathcal{H}} \boldsymbol{\varphi}_{1,k,k} \boldsymbol{\phi}_{\text{PR}} \} 
    + 2 \mathfrak{R} \{  d_{1,k,k}^{\mathcal{H}} \boldsymbol{\varphi}_{1,k,k} \boldsymbol{\phi}_{\text{PR}} \}  \right. \\
    & \left. \qquad - \lambda_{\text{PD},k} \left[ \sum_{k'\in \mathcal{K}_{\text{PD}}} 2 \mathfrak{R} \{ d_{1,k,k'}^{\mathcal{H}} \boldsymbol{\varphi}_{1,k,k'} \boldsymbol{\phi}_{\text{PR}} \}
    + \sum_{k'\in \mathcal{K}_{\text{SD}}}  2 \mathfrak{R} \{ d_{2,k,k'}^{\mathcal{H}} \boldsymbol{\varphi}_{2,k,k'} \boldsymbol{\phi}_{\text{PR}} \}
    +   2 \mathfrak{R} \{ v_{\text{P},k}^{\mathcal{H}} \boldsymbol{\varphi}_{3,k} \boldsymbol{\phi}_{\text{PR}} \} \right]
\right\rbrace,   \\
    f_2 \left( {\boldsymbol{\phi}}_{\text{PT}} \right) &= 
    \sum_{k\in \mathcal{K}_{\text{SD}}} \left( 1 + \gamma_{\text{SD},k} \right)  2 \mathfrak{R} \{ \boldsymbol{\phi}_{\text{PT}}^{(a) \mathcal{H}} \boldsymbol{\psi}_{1,k,k}^{\mathcal{H}} \boldsymbol{\psi}_{1,k,k} \boldsymbol{\phi}_{\text{PT}} \},  \qquad
    f_3 \left( {\boldsymbol{\phi}}_{\text{SR}} \right) = 
     - \sum_{k\in \mathcal{K}_{\text{SD}}} \left( 1 + \gamma_{\text{SD},k} \right)  \lambda_{\text{SD},k} 2 \mathfrak{R} \{ v_{\text{S},k}^{\mathcal{H}} \boldsymbol{\psi}_{4,k} \boldsymbol{\phi}_{\text{SR}} \} ,  \\
	g_1 \left( \boldsymbol{\phi}_{\text{PR}} \right) &= 
	2 \mathfrak{R} \{  {\boldsymbol{\phi}}_{\text{PR}}^{(a)\mathcal{H}} {\boldsymbol{\Psi}}_{\text{PU}}^{\mathcal{H}} {\boldsymbol{\Psi}}_{\text{PU}} {\boldsymbol{\phi}}_{\text{PR}}  \}  
	+ 2 \mathfrak{R} \{ {\rm\mathbf{u}}_{\text{PU}}^{\mathcal{H}} \boldsymbol{\Psi}_{\text{PU}} \boldsymbol{\phi}_{\text{PR}} \}  
	- t_{\text{PU}} \sum_{ \substack{u'\in\{\text{PD},\text{SD}\} , k\in \mathcal{K}_{u'} } } 2 \mathfrak{R} \{ {\rm\mathbf{s}}_{u',k}^{\mathcal{H}} {\boldsymbol{\Psi}}_{k} {\boldsymbol{\phi}}_{\text{PR}} \},    \\
 	g_2 \left( {\boldsymbol{\phi}}_{\text{PR}} \right) &= - t_{\text{SU}} \sum_{ \substack{u'\in\{\text{PD},\text{SD}\} , k\in \mathcal{K}_{u'} } } 2 \mathfrak{R} \{ {\rm\mathbf{s}}_{u',k}^{\mathcal{H}} {\boldsymbol{\Psi}}_{k} {\boldsymbol{\phi}}_{\text{PR}} \}, \qquad
    g_3 \left( {\boldsymbol{\phi}}_{\text{ST}} \right) =  2 \mathfrak{R} \{ {\boldsymbol{\phi}}_{\text{ST}}^{(a)\mathcal{H}} {\boldsymbol{\Psi}}_{\text{SU}}^{\mathcal{H}} {\boldsymbol{\Psi}}_{\text{SU}} {\boldsymbol{\phi}}_{\text{ST}} \}, \\    
     c_1 &= - {\boldsymbol{\phi}}_{\text{PR}}^{(a)\mathcal{H}} {\boldsymbol{\Psi}}_{\text{PU}}^{\mathcal{H}} {\boldsymbol{\Psi}}_{\text{PU}} {\boldsymbol{\phi}}_{\text{PR}}^{(a)}
     + {\rm\mathbf{u}}_{\text{PU}}^{\mathcal{H}} \boldsymbol{\Psi}_{\text{PU}} \boldsymbol{\phi}_{\text{PR}}^{(a)}
     - t_{\text{SU}} \Big( \sum_{ \substack{u'\in\{\text{PD},\text{SD}\} , k\in \mathcal{K}_{u'} } } \mathbf{s}_{u',k}^{\mathcal{H}} \mathbf{s}_{u',k} +  \sigma^2 \Big), \\
    c_2 &= - {\boldsymbol{\phi}}_{\text{ST}}^{(a)\mathcal{H}} {\boldsymbol{\Psi}}_{\text{SU}}^{\mathcal{H}} {\boldsymbol{\Psi}}_{\text{SU}} {\boldsymbol{\phi}}_{\text{ST}}^{(a)}
     - t_{\text{SU}} \Big( \sum_{ \substack{u'\in\{\text{PD},\text{SD}\} , k\in \mathcal{K}_{u'} } } \mathbf{s}_{u',k}^{\mathcal{H}} \mathbf{s}_{u',k} +  \sigma^2 \Big).
\end{align*}
\hrulefill
\end{figure*}
Note that we neglect the constant term in the objective function of $\eqref{obj_T2}$, since it does not affect the optimization. In this context, we can observe that the objective as well as the constraints of $\eqref{C_T2_1}$ and $\eqref{C_T2_2}$ are convex. However, the complete problem associated with the coupled terms of $\{\boldsymbol{\beta}, \boldsymbol{\theta}\}$ and with the different constraints in $\eqref{C0}$ is still a non-convex problem. Therefore, we partition $\eqref{problem_T2}$ into further sub-problems w.r.t. the amplitudes and phase shifts of D-STAR by defining
	\begingroup
	\allowdisplaybreaks
\begin{align}
    &{\boldsymbol{\phi}}_{x} = 
    \left[
    \beta_{x,1} e^{j\theta_{x,1}},...,
    \beta_{x,2} e^{j\theta_{x,M}}
    \right] ^{\rm T}
    \notag\\
    &=
    {\rm diag}\left( e^{j\theta_{x,1}},...,e^{j\theta_{x,M}} \right)
    \left[ \beta_{x,1},...,\beta_{x,M}\right] ^{\rm T}
    \triangleq \widetilde{\boldsymbol{\theta}}_{x} {\boldsymbol{\beta}}_{x}  \\
    &=
    {\rm diag}\left( \beta_{x,1},...,\beta_{x,M} \right)
    \left[ e^{j\theta_{x,1}},...,e^{j\theta_{x,M}} \right] ^{\rm T}
    \triangleq \widetilde{\boldsymbol{\beta}}_{x} {\boldsymbol{\theta}}_{x},
\end{align}
\endgroup
where $\widetilde{\boldsymbol{\theta}}_{x}$ and $\widetilde{\boldsymbol{\beta}}_{x}$ stand for the fixed phase shifts and amplitudes, respectively, obtained from their sub-problems.

\subsubsection{Amplitude of D-STAR}
We can reformulate $\eqref{problem_T2}$ for the amplitude part of D-STAR as
 \begingroup
 \allowdisplaybreaks
 \begin{subequations} \label{problem_b1}
\begin{align}
     \mathop{\max}\limits_{\substack{{\boldsymbol{\beta}_{\text{PR}}},{\boldsymbol{\beta}_{\text{PT}}},\\ {\boldsymbol{\beta}_{\text{SR}}},{\boldsymbol{\beta}_{\text{ST}}}}} &\quad 
     -{\boldsymbol{\beta}}_{\text{PR}}^{\mathcal{H}} \boldsymbol{\Omega}_{1,\beta} {\boldsymbol{\beta}}_{\text{PR}} 
       + f_{1,\beta} \left( {\boldsymbol{\beta}}_{\text{PR}} \right) 
       -{\boldsymbol{\beta}}_{\text{ST}}^{\mathcal{H}} \boldsymbol{\Omega}_{2,\beta} {\boldsymbol{\beta}}_{\text{ST}} \notag\\
       &\quad -{\boldsymbol{\beta}}_{\text{PT}}^{\mathcal{H}} \boldsymbol{\Omega}_{3,\beta} {\boldsymbol{\beta}}_{\text{PT}} 
        + f_{2,\beta} \left( {\boldsymbol{\beta}}_{\text{PT}} \right) 
       -{\boldsymbol{\beta}}_{\text{SR}}^{\mathcal{H}} \boldsymbol{\Omega}_{4,\beta} {\boldsymbol{\beta}}_{\text{SR}} \notag\\
       &\quad + f_{3,\beta} \left( {\boldsymbol{\beta}}_{\text{SR}} \right) \label{obj_b2} \\
    \text{s.t.} \qquad & \eqref{Con_B}, \nonumber\\
    &  
    -{\boldsymbol{\beta}}_{\text{PR}}^{\mathcal{H}} \boldsymbol{\Upsilon}_{1,\beta} {\boldsymbol{\beta}}_{\text{PR}} 
    + g_{1,\beta} \left( {\boldsymbol{\beta}}_{\text{PR}} \right) + c_1 \geq 0, \label{C_b1_1} \\
    & 
    -{\boldsymbol{\beta}}_{\text{PR}}^{\mathcal{H}} \boldsymbol{\Upsilon}_{2,\beta} {\boldsymbol{\beta}}_{\text{PR}} 
    + g_{2,\beta} \left( {\boldsymbol{\beta}}_{\text{PR}} \right) 
    + g_{3,\beta} \left( {\boldsymbol{\beta}}_{\text{ST}} \right) 
    + c_2 \geq 0, \label{C_b1_2}
\end{align}
 \end{subequations}
  \endgroup
where the notations following $\eqref{problem_T2}$ are defined as $\boldsymbol{\Omega}_{i,\beta} = \boldsymbol{\theta}^{(a) \mathcal{H}}_x \boldsymbol{\Omega}_{i} \boldsymbol{\theta}^{(a)}_x, \forall (i,x)= \{ (1, \text{PR}), (2, \text{ST}), (3, \text{PT}), (4, \text{SR})\}$ and $\boldsymbol{\Upsilon}_{i,\beta} = \boldsymbol{\theta}^{(a) \mathcal{H}}_{\text{PR}} \boldsymbol{\Upsilon}_{i} \boldsymbol{\theta}^{(a)}_{\text{PR}}, \forall i\in\{1,2\}$.
We can readily attain $f_{i,\beta}(\boldsymbol{\beta}_x), \forall (i,x)=\{(1,\text{PR}), (2,\text{PT}), (3,\text{SR})\}$ and $g_{i,\beta}(\boldsymbol{\beta}_x), \forall (i,x)=\{(1,\text{PR}), (2,\text{PT}), (3,\text{ST})\}$ upon replacing $\boldsymbol{\phi}_{x}$ with $\boldsymbol{\theta}^{(a)}_{x} \boldsymbol{\beta}_{x}$, which are neglected here, since they have similar definitions. Note that $c_1$ and $c_2$ remain unchanged constant values. Due to the quadratic equality constraint of $\eqref{Con_B}$, we alternatively apply four equivalent constraints expressed as
\begingroup
\allowdisplaybreaks
\begin{subequations}
\begin{align}
&\beta_{\text{PT},m}^2+\beta_{\text{PR},m}^2 \geq 1, \label{qq1}\\ &\beta_{\text{PT},m}^2+\beta_{\text{PR},m}^2 \leq 1, \label{qq2}\\ &\beta_{\text{ST},m}^2+\beta_{\text{SR},m}^2 \geq 1, \label{qq3}\\ &\beta_{\text{ST},m}^2+\beta_{\text{SR},m}^2 \leq 1. \label{qq4}
\end{align} 
\end{subequations}
\endgroup
As for the non-convex constraints in $\eqref{qq1}$ and $\eqref{qq3}$, we obtain their first-order Taylor approximation as
\begingroup
\allowdisplaybreaks
\begin{subequations}
\begin{align}
	&{\beta}_{\text{PT},m}^{(a)} \beta_{\text{PT},m}+ {\beta}_{\text{PR},m}^{(a)} \beta_{\text{PR},m} \geq 1, \label{beta_sca1} \\
	&{\beta}_{\text{ST},m}^{(a)} \beta_{\text{ST},m}+ {\beta}_{\text{SR},m}^{(a)} \beta_{\text{SR},m} \geq 1, \label{beta_sca2}
\end{align}
\end{subequations}
\endgroup
where $ {\beta}_{\text{PT},m}^{(a)}, {\beta}_{\text{PR},m}^{(a)}, {\beta}_{\text{ST},m}^{(a)}, {\beta}_{\text{SR},m}^{(a)}$ are solutions obtained at the $a$-th iteration. Accordingly, problem $\eqref{problem_b1}$ now becomes
 \begingroup
 \allowdisplaybreaks
 \begin{subequations} \label{problem_b11}
\begin{align}
     \mathop{\max}\limits_{\substack{{\boldsymbol{\beta}_{\text{PR}}},{\boldsymbol{\beta}_{\text{PT}}},\\ {\boldsymbol{\beta}_{\text{SR}}},{\boldsymbol{\beta}_{\text{ST}}}}} &\quad 
     -{\boldsymbol{\beta}}_{\text{PR}}^{\mathcal{H}} \boldsymbol{\Omega}_{1,\beta} {\boldsymbol{\beta}}_{\text{PR}} 
       + f_{1,\beta} \left( {\boldsymbol{\beta}}_{\text{PR}} \right) 
       -{\boldsymbol{\beta}}_{\text{ST}}^{\mathcal{H}} \boldsymbol{\Omega}_{2,\beta} {\boldsymbol{\beta}}_{\text{ST}}  \notag\\
       &\quad -{\boldsymbol{\beta}}_{\text{PT}}^{\mathcal{H}} \boldsymbol{\Omega}_{3,\beta} {\boldsymbol{\beta}}_{\text{PT}} 
       + f_{2,\beta} \left( {\boldsymbol{\beta}}_{\text{PT}} \right) 
       -{\boldsymbol{\beta}}_{\text{SR}}^{\mathcal{H}} \boldsymbol{\Omega}_{4,\beta} {\boldsymbol{\beta}}_{\text{SR}} \notag\\
       &\quad
       + f_{3,\beta} \left( {\boldsymbol{\beta}}_{\text{SR}} \right) \label{obj_b} \\
    \text{s.t.} \qquad & \eqref{C_b1_1}, \eqref{C_b1_2}, \eqref{qq2}, \eqref{qq4}, \eqref{beta_sca1}, \eqref{beta_sca2}.
\end{align}
 \end{subequations}
  \endgroup
We can infer that the problem $\eqref{problem_b11}$ is convex and can be solved by Lagrangian methods. Although the conventional Lagrangian method is a widely adopted powerful method of finding the optimum, ADMM \cite{admm, admm2} is capable of offering several advantages that make it an attractive alternative. Briefly, ADMM accomplishes better convergence, better scalability as well as a higher grade of flexibility for complex problems exhibiting non-smooth objectives and non-convex constraints. We describe the associated ADMM optimization in Appendix \ref{Appendix_ADMM}.
We define the feasible convex domains for $\boldsymbol{\beta}_{\text{PR}}$ and $\boldsymbol{\beta}_{\text{SR}}$, which are given by $\mathcal{D}_1 = \{ \eqref{C_b1_1}, \eqref{C_b1_2}, \eqref{qq2}\}$ and $\mathcal{D}_2 = \{\eqref{qq4}\}$, respectively. Moreover, we introduce the auxiliary variables $\{\varsigma_{\text{P}}, \varsigma_{\text{S}}\}$ for $\eqref{beta_sca1}$ and $\eqref{beta_sca2}$ to become equality constraints, i.e., ${\beta}_{\text{PT},m}^{(a)} \beta_{\text{PT},m}+ {\beta}_{\text{PR},m}^{(a)} \beta_{\text{PR},m} - \varsigma_{\text{P}} = 1$ and ${\beta}_{\text{ST},m}^{(a)} \beta_{\text{ST},m}+ {\beta}_{\text{SR},m}^{(a)} \beta_{\text{SR},m} - \varsigma_{\text{S}} = 1$. Therefore, according to the ADMM scheme, our alternating optimization and auxiliary parameters are updated as $\eqref{ADMMequ}$ at top of next page.
 \begin{figure*}
 \begin{subequations} \label{ADMMequ}
\begin{align}
    {\boldsymbol{\beta}}_{\text{PR}}^{(a+1)} =
    \argmax_{\boldsymbol{\beta}_{\text{PR}} \in  \mathcal{D}_1} & \quad 
    -{\boldsymbol{\beta}}_{\text{PR}}^{\mathcal{H}} {\boldsymbol{\Omega}}_{1,\beta} {\boldsymbol{\beta}}_{\text{PR}} 
    +  f_{1,\beta} \left( {\boldsymbol{\beta}}_{\text{PR}} \right) 
    - \sum_{m \in \mathcal{M}_{\text{PR}}} u_m^{(a)} \left[ \beta_{\text{PR},m}^{(a)} \beta_{\text{PR},m} -  \left( 1 +\varsigma_{\text{P}}^{(a)} - (z_{\text{PT},m}^{(a)} )^2 \right)  \right] \nonumber \\
    & \quad 
    - \rho_{1} \sum_{m\in \mathcal{M}_{\text{PR}}} \left\lVert \beta_{\text{PR},m}^{(a)} \beta_{\text{PR},m} -  \left( 1 +\varsigma_{\text{P}}^{(a)} - (z_{\text{PT},m}^{(a)} )^2 \right) \right\rVert^2;\\
    {\rm\mathbf{z}}_{\text{PT}}^{(a+1)} =
    \argmax_{{\rm\mathbf{z}_{\text{PT}}}} & \quad
    -{\rm\mathbf{z}}_{\text{PT}}^{\mathcal{H}} {\boldsymbol{\Omega}}_{3,\beta} {\rm\mathbf{z}}_{\text{PT}} 
    + f_{2,\beta} \left( {\boldsymbol{\beta}}_{\text{PT}} = {\rm\mathbf{z}}_{\text{PT}} \right)  
    -\sum_{m \in \mathcal{M}_{\text{PT}}} u_m^{(a)} \left[ z_{\text{PT},m}^{(a)} z_{\text{PT},m} -  \left( 1 + \varsigma_{\text{P}}^{(a)} - \beta_{\text{PR},m}^{(a)} \beta_{\text{PR},m}^{(a+1)}  \right)  \right] \nonumber \\
    & \quad - \rho_1 \sum_{m\in \mathcal{M}_{\text{PT}}} \left\lVert z_{\text{PT},m}^{(a)} z_{\text{PT},m} -  \left( 1 + \varsigma_{\text{P}}^{(a)} - \beta_{\text{PR},m}^{(a)} \beta_{\text{PR},m}^{(a+1)}  \right)  \right\rVert^2;\\
	\varsigma_{\text{P}}^{(a+1)}  = \argmax_{\varsigma_{\text{P}}} &  \quad -\sum_{m \in \mathcal{M}_{\text{PT}}} u_m^{(a)} \left[ z_{\text{PT},m}^{(a)} z_{\text{PT},m}^{(a+1)} -  \left( 1 + \varsigma_{\text{P}} - \beta_{\text{PR},m}^{(a)} \beta_{\text{PR},m}^{(a+1)}  \right)  \right] \nonumber \\
    & \quad - \rho_1 \sum_{m\in \mathcal{M}_{\text{PT}}} \left\lVert z_{\text{PT},m}^{(a)} z_{\text{PT},m}^{(a+1)} -  \left( 1 + \varsigma_{\text{P}} - \beta_{\text{PR},m}^{(a)} \beta_{\text{PR},m}^{(a+1)} \right)  \right\rVert^2;\\
    u_{m}^{(a+1)} = u_{m}^{(a)} - \rho_1 & \left[ z_{\text{PT},m}^{(a)} z_{\text{PT},m}^{(a+1)} - \left( 1 + \varsigma_{\text{P}}^{(a+1)} - \beta_{\text{PR},m}^{(a)} \beta_{\text{PR},m}^{(a+1)} \right) \right];\\
    {\boldsymbol{\beta}}_{\text{SR}}^{(a+1)} =
    \quad \argmax_{{\boldsymbol{\beta}}_{\text{SR}}\in \mathcal{D}_2} &\quad 
    -{\boldsymbol{\beta}}_{\text{SR}}^{\mathcal{H}}
 {\boldsymbol{\Omega}}_{4,\beta}
 {\boldsymbol{\beta}}_{\text{SR}} 
 	+ f_{3,\beta} (\boldsymbol{\beta}_{\text{SR}})
 	- \sum_{m \in \mathcal{M}_{\text{SR}}} r_m^{(a)} \left[ \beta_{\text{SR},m}^{(a)} \beta_{\text{SR},m} -  \left( 1 +\varsigma_{\text{S}}^{(a)} - (z_{\text{ST},m}^{(a)} )^2 \right)  \right] \nonumber \\
    & \quad 
    - \rho_{2} \sum_{m\in \mathcal{M}_{\text{SR}}} \left\lVert \beta_{\text{SR},m}^{(a)} \beta_{\text{SR},m} -  \left( 1 +\varsigma_{\text{S}}^{(a)} - (z_{\text{ST},m}^{(a)} )^2 \right) \right\rVert^2; \\
    {\rm\mathbf{z}}_{\text{ST}}^{(a+1)} =
    \quad \argmax_{{\rm\mathbf{z}}_{\text{ST}}} & \quad 
    -{\mathbf{z}}_{\text{ST}}^{\mathcal{H}} {\boldsymbol{\Omega}}_{2,\beta}
 {\rm\mathbf{z}}_{\text{ST}}  
    -\sum_{m \in \mathcal{M}_{\text{ST}}} r_m^{(a)} \left[ z_{\text{ST},m}^{(a)} z_{\text{ST},m} -  \left( 1 + \varsigma_{\text{S}}^{(a)} - \beta_{\text{SR},m}^{(a)} \beta_{\text{SR},m}^{(a+1)}  \right)  \right] \nonumber \\
    & \quad - \rho_2 \sum_{m\in \mathcal{M}_{\text{ST}}} \left\lVert z_{\text{ST},m}^{(a)} z_{\text{ST},m} -  \left( 1 + \varsigma_{\text{S}}^{(a)} - \beta_{\text{SR},m}^{(a)} \beta_{\text{SR},m}^{(a+1)}  \right)  \right\rVert^2;\\ 
	\varsigma_{\text{S}}^{(a+1)}  = \argmax_{\varsigma_{\text{S}}} &  \quad -\sum_{m \in \mathcal{M}_{\text{ST}}} r_m^{(a)} \left[ z_{\text{ST},m}^{(a)} z_{\text{ST},m}^{(a+1)} -  \left( 1 + \varsigma_{\text{S}} - \beta_{\text{SR},m}^{(a)} \beta_{\text{SR},m}^{(a+1)}  \right)  \right] \nonumber \\
    & \quad - \rho_2 \sum_{m\in \mathcal{M}_{\text{ST}}} \left\lVert z_{\text{ST},m}^{(a)} z_{\text{ST},m}^{(a+1)} -  \left( 1 + \varsigma_{\text{S}} - \beta_{\text{SR},m}^{(a)} \beta_{\text{SR},m}^{(a+1)} \right)  \right\rVert^2;\\
    r_{m}^{(a+1)} = r_{m}^{(a)} - \rho_2 & \left[ z_{\text{ST},m}^{(a)} z_{\text{ST},m}^{(a+1)} - \left( 1 + \varsigma_{\text{S}}^{(a+1)} - \beta_{\text{SR},m}^{(a)} \beta_{\text{SR},m}^{(a+1)} \right) \right].
\end{align}
 \end{subequations}
 \hrulefill
 \end{figure*}
Note that $\rho_1$ and $\rho_2$ represent the ADMM penalty for $\boldsymbol{\beta}_{\text{PR}}$ and $\boldsymbol{\beta}_{\text{SR}}$, respectively.

\subsubsection{Phase shifts of D-STAR}
After obtaining amplitudes of D-STAR, we proceed to attain its optimal phase shifts. Similar to that in problem $\eqref{problem_b1}$, we can reformulate $\eqref{problem_T2}$ for phase shift part of D-STAR as
 \begingroup
 \allowdisplaybreaks
 \begin{subequations} \label{problem_t1}
\begin{align}
     \mathop{\max}\limits_{\substack{{\boldsymbol{\theta}_{\text{PR}}},{\boldsymbol{\theta}_{\text{PT}}},\\{\boldsymbol{\theta}_{\text{SR}}},{\boldsymbol{\theta}_{\text{ST}}}}} &\quad 
     -{\boldsymbol{\theta}}_{\text{PR}}^{\mathcal{H}} \boldsymbol{\Omega}_{1,\theta} {\boldsymbol{\theta}}_{\text{PR}} 
       + f_{1,\theta} \left( {\boldsymbol{\theta}}_{\text{PR}} \right) 
       -{\boldsymbol{\theta}}_{\text{ST}}^{\mathcal{H}} \boldsymbol{\Omega}_{2,\theta} {\boldsymbol{\beta}}_{\text{ST}} \notag\\
       &\quad
       -{\boldsymbol{\theta}}_{\text{PT}}^{\mathcal{H}} \boldsymbol{\Omega}_{3,\theta} {\boldsymbol{\theta}}_{\text{PT}} 
       + f_{2,\theta} \left( {\boldsymbol{\theta}}_{\text{PT}} \right) 
       -{\boldsymbol{\theta}}_{\text{SR}}^{\mathcal{H}} \boldsymbol{\Omega}_{4,\theta} {\boldsymbol{\theta}}_{\text{SR}} \notag\\
       & \quad + f_{3,\theta} \left( {\boldsymbol{\theta}}_{\text{SR}} \right) \label{obj_t} \\
    \text{s.t.} \quad & \eqref{Con_T},\eqref{Con_T1}, \nonumber\\
    &  
    -{\boldsymbol{\theta}}_{\text{PR}}^{\mathcal{H}} \boldsymbol{\Upsilon}_{1,\theta} {\boldsymbol{\theta}}_{\text{PR}} 
    + g_{1,\theta} \left( {\boldsymbol{\theta}}_{\text{PR}} \right) + c_1 \geq 0, \label{C_t1_1} \\
    & 
    -{\boldsymbol{\theta}}_{\text{PR}}^{\mathcal{H}} \boldsymbol{\Upsilon}_{2,\theta} {\boldsymbol{\theta}}_{\text{PR}} 
    + g_{2,\theta} \left( {\boldsymbol{\theta}}_{\text{PR}} \right) 
    + g_{3,\theta} \left( {\boldsymbol{\theta}}_{\text{ST}} \right) 
    + c_2 \geq 0, \label{C_t1_2}
\end{align}
 \end{subequations}
  \endgroup
where the notations following $\eqref{problem_T2}$ are defined as $\boldsymbol{\Omega}_{i,\theta} = \boldsymbol{\beta}^{(a+1) \mathcal{H}}_x \boldsymbol{\Omega}_{i} \boldsymbol{\beta}^{(a+1)}_x, \forall (i,x)= \{ (1, \text{PR}), (2, \text{ST}), (3, \text{PT}), (4, \text{SR})\}$ and $\boldsymbol{\Upsilon}_{i,\theta} = \boldsymbol{\beta}^{(a+1) \mathcal{H}}_{\text{PR}} \boldsymbol{\Upsilon}_{i} \boldsymbol{\beta}^{(a+1)}_{\text{PR}}, \forall i\in\{1,2\}$. We can readily attain $f_{i,\theta}(\boldsymbol{\theta}_x), \forall (i,x)=\{(1,\text{PR}), (2,\text{PT}), (3,\text{SR})\}$ and $g_{i,\theta}(\boldsymbol{\theta}_x), \forall (i,x)=\{(1,\text{PR}), (2,\text{PT}), (3,\text{ST})\}$ by replacing $\boldsymbol{\phi}_{x}$ with $\boldsymbol{\beta}^{(a+1)}_{x} \boldsymbol{\theta}_{x}$, which are neglected here due to similar definitions. It is worth mentioning that the new solutions for $\boldsymbol{\beta}_{x}$ are applied based on the optimum solution $\eqref{problem_b1}$ at next iteration $(a+1)$.

We can observe that $\eqref{Con_T}$ and $\eqref{Con_T1}$ lead to an unsolvable problem. Since $\eqref{Con_T1}$ is only valid for the hardware constraints associated with coupled phase shifts, we first consider only the generic constraint of $\eqref{Con_T}$. It can be seen that $\eqref{Con_T}$ is non-convex, which should be further processed. We adopt a PCCP mechanism \cite{pccp} to obtain a convex problem. A pair of quadratic bounds as well as the non-negative penalty of $\mathbf{b}_x =\{ b_{x,m} | \forall x \in \mathcal{X}, m \in \mathcal{M}_{x} \}$ are introduced for this term, i.e., $|\theta_{x,m}|^2\geq 1-b_{x,m}$ and $|\theta_{x,m}|^2\leq 1+b_{x,m}$, forming two circular manifolds having different sizes. We can obtain the transformed problem associated with a penalty as
 \begingroup
 \allowdisplaybreaks
 \begin{subequations} \label{problem_t22}
\begin{align}
    \mathop{\max}\limits_{\substack{{\boldsymbol{\theta}}_{\text{PR}},{\boldsymbol{\theta}}_{\text{PT}},\\{\boldsymbol{\theta}}_{\text{SR}},{\boldsymbol{\theta}}_{\text{ST}}, {\rm\mathbf{b}}}} &\quad  
       -{\boldsymbol{\theta}}_{\text{PR}}^{\mathcal{H}} \boldsymbol{\Omega}_{1,\theta} {\boldsymbol{\theta}}_{\text{PR}} 
       + f_{1,\theta} \left( {\boldsymbol{\theta}}_{\text{PR}} \right) 
       -{\boldsymbol{\theta}}_{\text{ST}}^{\mathcal{H}} \boldsymbol{\Omega}_{2,\theta} {\boldsymbol{\beta}}_{\text{ST}} \notag\\
       &\quad
       -{\boldsymbol{\theta}}_{\text{PT}}^{\mathcal{H}} \boldsymbol{\Omega}_{3,\theta} {\boldsymbol{\theta}}_{\text{PT}} 
       + f_{2,\theta} \left( {\boldsymbol{\theta}}_{\text{PT}} \right) 
       -{\boldsymbol{\theta}}_{\text{SR}}^{\mathcal{H}} \boldsymbol{\Omega}_{4,\theta} {\boldsymbol{\theta}}_{\text{SR}} \notag\\
       &\quad + f_{3,\theta} \left( {\boldsymbol{\theta}}_{\text{SR}} \right)   
       - \kappa^{(a)}\sum_{\substack{ m\in \mathcal{M}_x, x\in \mathcal{X}}} b_{x,m} \label{obj_t22} \\
    \text{s.t.} &\quad  \eqref{C_t1_1}, \eqref{C_t1_2},  \nonumber\\
    & \quad \lvert \theta_{x,m} \rvert^2 \geq 1- b_{x,m},  \ \forall x \in \mathcal{X}, m \in \mathcal{M}_{x},  \label{C_t22_2} \\
    & \quad \lvert \theta_{x,m} \rvert^2 \leq 1 + b_{x,m},  \ \forall x \in \mathcal{X}, m \in \mathcal{M}_{x},  \label{C_t22_3} \\
    & \quad {\rm\mathbf{b}}_x \succeq 0,  \ \forall x\in \mathcal{X}, \label{C_t22_4}
\end{align}
 \end{subequations}
  \endgroup
where $\kappa^{(a)}$ is the PCCP penalty. Since $\eqref{C_t22_2}$ is non-convex, the classic Taylor approximation is harnessed, which yields the optimization problem
 \begingroup
 \allowdisplaybreaks
 \begin{subequations} \label{problem_t2}
\begin{align}
    \mathop{\max}\limits_{\substack{{\boldsymbol{\theta}}_{\text{PR}},{\boldsymbol{\theta}}_{\text{PT}},\\{\boldsymbol{\theta}}_{\text{SR}},{\boldsymbol{\theta}}_{\text{ST}}, {\rm\mathbf{b}}}} &\quad  
       -{\boldsymbol{\theta}}_{\text{PR}}^{\mathcal{H}} \boldsymbol{\Omega}_{1,\theta} {\boldsymbol{\theta}}_{\text{PR}} 
       + f_{1,\theta} \left( {\boldsymbol{\theta}}_{\text{PR}} \right) 
       -{\boldsymbol{\theta}}_{\text{ST}}^{\mathcal{H}} \boldsymbol{\Omega}_{2,\theta} {\boldsymbol{\beta}}_{\text{ST}} \notag\\
       &\quad
       -{\boldsymbol{\theta}}_{\text{PT}}^{\mathcal{H}} \boldsymbol{\Omega}_{3,\theta} {\boldsymbol{\theta}}_{\text{PT}} 
       + f_{2,\theta} \left( {\boldsymbol{\theta}}_{\text{PT}} \right) 
       -{\boldsymbol{\theta}}_{\text{SR}}^{\mathcal{H}} \boldsymbol{\Omega}_{4,\theta} {\boldsymbol{\theta}}_{\text{SR}} \notag\\
       &\quad
       + f_{3,\theta} \left( {\boldsymbol{\theta}}_{\text{SR}} \right) \label{obj_t2}  - \kappa^{(a)}\sum_{\substack{ m\in \mathcal{M}_x, x\in \mathcal{X}}} b_{x,m} \\
    \text{s.t.} &\quad  \eqref{C_t1_1}, \eqref{C_t1_2}, \eqref{C_t22_4},  \nonumber \\
    & \quad \mathfrak{R} \left\{ \left( \theta_{x,m}^{(a)} \right)^{\mathcal{H}} \theta_{x,m} \right\} \geq 1-b_{x,m},  \ \forall x \in \mathcal{X}, m \in \mathcal{M}_{x},  \label{C_t2_2}\\
    & \quad \lvert \theta_{x,m} \rvert^2 \leq 1 + b_{x,m},  \ \forall x \in \mathcal{X}, m \in \mathcal{M}_{x}.  \label{C_t2_3}
\end{align}
 \end{subequations}
  \endgroup
Therefore, we can obtain the optimum solution of the convex problem $\eqref{problem_t2}$. As for the coupled phase shifts, we should consider $\mathcal{M}_{\text{PT}}=\mathcal{M}_{\text{PR}}, \mathcal{M}_{\text{ST}}=\mathcal{M}_{\text{SR}}$ and $\eqref{Con_T1}$. Then, the problem becomes
\begingroup
\allowdisplaybreaks
\begin{subequations} \label{problem_t3}
\begin{align}
    \mathop{\max}\limits_{\substack{{\boldsymbol{\theta}}_{\text{PR}},{\boldsymbol{\theta}}_{\text{PT}},\\{\boldsymbol{\theta}}_{\text{SR}},{\boldsymbol{\theta}}_{\text{ST}}}} &\quad  
       -{\boldsymbol{\theta}}_{\text{PR}}^{\mathcal{H}} \boldsymbol{\Omega}_{1,\theta} {\boldsymbol{\theta}}_{\text{PR}} 
       + f_{1,\theta} \left( {\boldsymbol{\theta}}_{\text{PR}} \right) 
       -{\boldsymbol{\theta}}_{\text{ST}}^{\mathcal{H}} \boldsymbol{\Omega}_{2,\theta} {\boldsymbol{\beta}}_{\text{ST}}  \notag\\
       &\quad
       -{\boldsymbol{\theta}}_{\text{PT}}^{\mathcal{H}} \boldsymbol{\Omega}_{3,\theta} {\boldsymbol{\theta}}_{\text{PT}} 
       + f_{2,\theta} \left( {\boldsymbol{\theta}}_{\text{PT}} \right) 
       -{\boldsymbol{\theta}}_{\text{SR}}^{\mathcal{H}} \boldsymbol{\Omega}_{4,\theta} {\boldsymbol{\theta}}_{\text{SR}}  \notag\\
       &\quad
       + f_{3,\theta} \left( {\boldsymbol{\theta}}_{\text{SR}} \right) \label{obj_t3}  \\
    \text{s.t.} &\quad  \eqref{Con_T1},\nonumber\\
    &\quad \mathcal{M}_{\text{PT}}=\mathcal{M}_{\text{PR}}, \mathcal{M}_{\text{ST}}=\mathcal{M}_{\text{SR}}.
\end{align}
 \end{subequations}
 \endgroup
After obtaining the solutions of $\eqref{problem_t2}$, we heuristically compare the conditions $\eqref{Con_T1}$ in $\eqref{problem_t3}$ to the following set:
\begin{align} \label{S}
& (\theta_{t,m}^*, \theta_{r,m}^*) = \left\lbrace (\theta_{t,m}, \pm j \theta_{t,m}), ( \pm j \theta_{r,m}, \theta_{r,m}) \right\rbrace, \notag \\
&\qquad \forall (t,r)=\{(\text{PT}, \text{PR}), (\text{ST}, \text{SR})\}, x \in \mathcal{X}, m \in \mathcal{M}_{x}.
\end{align}
There are four possible cases to be compared in $\eqref{S}$. We compare the selected element of D-STAR for providing the highest objective value $\eqref{obj_t3}$, while keeping the remaining arguments fixed. The iterative comparison continues until no significant improvement of the objective value is obtained. We note that using exhaustive search may be inappropriate as it imposes an unaffordable computational complexity for a large number of D-STAR elements. The concrete DBAP algorithm of D-STAR is summarized in Algorithm \ref{alg1}. We solve the respective sub-problems for the active beamforming in $\eqref{problem_w2}$, for the amplitudes in $\eqref{problem_b1}$, and for the phase shifts in $\eqref{problem_t2}$. Additional updates will be performed in $\eqref{problem_t3}$ if phase shifts are coupled. Note that convergence is achieved when $|R_{\text{PD}}^{(a)} + R_{\text{SD}}^{(a)} - R_{\text{PD}}^{(a-1)} - R_{\text{SD}}^{(a-1)}| \leq \delta_{R}$, $\lVert\boldsymbol{\Xi}^{(a)} - \boldsymbol{\Xi}^{(a-1)} \rVert^2 \leq \delta_{\Xi}, \forall \boldsymbol{\Xi}\in\{ \mathbf{w}_{\text{PD}}, \mathbf{w}_{\text{SD}}, \boldsymbol{\beta}, \boldsymbol{\theta}\}$, or $a\geq I_{th}$, where $I_{th}$ is the maximum affordable number of iterations.

\subsection{Convergence Analysis}

The proof of convergence of Lemma \ref{lemma_DIN} can be found in Proposition 2 of \cite{converge}, while the convergence of Lemma \ref{lemma_LD} can be readily derived by following the same process as Lemma \ref{lemma_DIN}, which is omitted here. As for Lemma \ref{lemma_sca}, we consider the objective function $f(\boldsymbol{\Xi})$ partitioned into a function having concave plus convex term as $f(\boldsymbol{\Xi}) = f^+(\boldsymbol{\Xi}) + f^-(\boldsymbol{\Xi})$, where $\boldsymbol{\Xi}=\{{\mathbf{w}}_{\text{PD}},{\mathbf{w}}_{\text{SD}}, \boldsymbol{\Theta}\}$ represents the total candidate solution set. Owing to the convex nature of the function of $f^-(\boldsymbol{\Xi})$, we can use the first-order Taylor approximation of $f(\boldsymbol{\Xi})$ as the lower bound of its original objective. Then, we have
	\begingroup
	\allowdisplaybreaks
	 \begin{align}
        & f(\boldsymbol{\Xi}^{(a)}) 
        = f^+(\boldsymbol{\Xi}^{(a)}) - 
        f^-(\boldsymbol{\Xi}^{(a)}) \notag \\
        & \geq f^+(\boldsymbol{\Xi}^{(a)}) - 
        f^-(\boldsymbol{\Xi}^{(a-1)}) \notag \\
        & \qquad\qquad - \sum_{\boldsymbol X \in \boldsymbol{\Xi}} \nabla^{\mathcal{H}}_{\boldsymbol{X}} f^-(\boldsymbol{\Xi}^{(a-1)}) \cdot \left({\boldsymbol X}^{(a)}-{\boldsymbol X}^{(a-1)} \right) \notag\\
	    & = \mathop{\max}_{\boldsymbol{\Xi}\in\mathcal{D}}
        f^+(\boldsymbol{\Xi}) - 
        f^-(\boldsymbol{\Xi}^{(a-1)}) \notag \\
        & \qquad\qquad \sum_{\boldsymbol X \in \boldsymbol{\Xi}} \nabla^{\mathcal{H}}_{{\boldsymbol X}} F^-(\boldsymbol{\Xi}^{(a-1)}) \cdot \Big(\boldsymbol{X} - \boldsymbol{X}^{(a-1)} \Big) \notag\\
        & \geq f^+(\boldsymbol{\Xi}^{(a-1)}) - 
        f^-(\boldsymbol{\Xi}^{(a-1)}) \notag \\
        & \qquad\qquad \sum_{\boldsymbol X \in \boldsymbol{\Xi}} \nabla^{\mathcal{H}}_{{\boldsymbol X}} f^-(\boldsymbol{\Xi}^{(a-1)}) \cdot \Big(\boldsymbol{X}^{(a-1)} - \boldsymbol{X}^{(a-1)}\Big)\notag \\
        & 
        = f(\boldsymbol{\Xi}^{(a-1)}).
        \end{align}
        \endgroup
Note that $\mathcal{D}$ is defined as the feasible domain of variable $\boldsymbol{\Xi}$. Based on the relationship stated above, we conclude that the optimal value will be improved or stay unchanged. We also note that the transformed convex constraints guarantee the lower bounds of the original non-convex constraints. The solutions of the coupled phase shifts can be regarded as the quantized values leading to a degraded throughput. When $a\to \infty$, we can acquire the optimum by $\boldsymbol{\Xi}^*= \lim_{a \to \infty} f(\boldsymbol{\Xi}^{(a)})$. That is, the solutions obtained by the respective problems $\eqref{problem_w2}$, $\eqref{problem_b11}$, and $\eqref{problem_t2}$ w.r.t. $\{{\mathbf{w}}_{\text{PD}},{\mathbf{w}}_{\text{SD}}, \boldsymbol{\Theta}\}$ corresponding to DBAP in Algorithm \ref{alg1} will converge.

\begin{algorithm}[!tb]
\small
  \caption{Proposed DBAP scheme in D-STAR}
  \SetAlgoLined
  \DontPrintSemicolon
  \label{alg1}
  \begin{algorithmic}[1]
   \STATE Randomly initialize temporary solutions $\{\mathbf{w}_{\text{PD}}^{(a)}, \mathbf{w}_{\text{SD}}^{(a)}, \boldsymbol{\beta}_{x}^{(a)}, \boldsymbol{\theta}_{x}^{(a)} \}, \forall x\in \mathcal{X}$
   \STATE Set iteration $a=1$
	\WHILE{not converged}   
		\STATE Solve problem $\eqref{problem_w2}$ for $\{ \mathbf{w}_{\text{PD}}^{(a+1)}, \mathbf{w}_{\text{SD}}^{(a+1)} \}$ based on $\{\mathbf{w}_{\text{PD}}^{(a)}, \mathbf{w}_{\text{SD}}^{(a)}, \boldsymbol{\beta}_{x}^{(a)}, \boldsymbol{\theta}_{x}^{(a)} \}$
		\STATE Solve problem $\eqref{problem_b11}$ for $\{\boldsymbol{\beta}_{x}^{(a+1)}\}$ based on $\{\mathbf{w}_{\text{PD}}^{(a+1)}, \mathbf{w}_{\text{SD}}^{(a+1)}, \boldsymbol{\beta}_{x}^{(a)}, \boldsymbol{\theta}_{x}^{(a)} \}$
		\STATE Solve problem $\eqref{problem_t2}$ for $\{\boldsymbol{\theta}_{x}^{(a+1)}\}$ based on $\{\mathbf{w}_{\text{PD}}^{(a+1)}, \mathbf{w}_{\text{SD}}^{(a+1)}, \boldsymbol{\beta}_{x}^{(a+1)}, \boldsymbol{\theta}_{x}^{(a)} \}$
		\STATE Update coupled phase shifts by $\eqref{S}$
		\STATE Update auxiliary parameters $\{ \gamma_{u,k}^{(a+1)}, \lambda_{u,k}^{(a+1)} \}$
		\STATE Update iteration $a\leftarrow a+1$
	\ENDWHILE
	\STATE \textbf{Return} Optimum D-STAR configuration $\{\mathbf{w}_{\text{PD}}^*, \mathbf{w}_{\text{SD}}^*, \boldsymbol{\beta}_{x}^*, \boldsymbol{\theta}_{x}^*\}, \forall x\in \mathcal{X}$
  \end{algorithmic}
\end{algorithm}

\section{Numerical Results}\label{NR}

\begin{table}[!t]
\centering
\footnotesize
  \caption{Parameter Setting of D-STAR}
 \begin{tabular}{l*{1}{l}r}
  \hline
  \textbf{System Parameter} & \textbf{Value} \\
  \hline
  Distance between BS-D-STAR  & $100$ m    \\
  Distance between user-D-STAR  & $30$ m    \\
  Distance between user-BS  & $80$ m \\  
  Inter-D-STAR distance  & $100$ m    \\
  Inter-user group distance  & $100$ m   \\
  BS transmit/receiving antennas & $[8,24]$ \\
  Number of user antenna & $1$ \\
  Number of total/per-group users & $8, 2$ \\
  BS/user transmit power & $30, 20$ dBm \\
  Maximum power constraint & $40$ dBm \\
  Number of STAR-RISs in D-STAR & $2$\\
  Number of elements per STAR-RIS & $[8, 24]$\\
  UL rate requirement  & $0.5$ bps/Hz  \\
  Noise power & $-80$ dBm      \\
  ADMM penalty term & $1$        \\
  PCCP penalty term & $0.1$        \\
  Convergence thresholds & $10^{-3}$ \\
  Iteration upper bounds & $20$ \\
  Monte Carlo runs & $100$ \\
  \hline
 \end{tabular} \label{Parameter}
\end{table}


\begin{figure}[!t]
	\centering
	\includegraphics[width=3in]{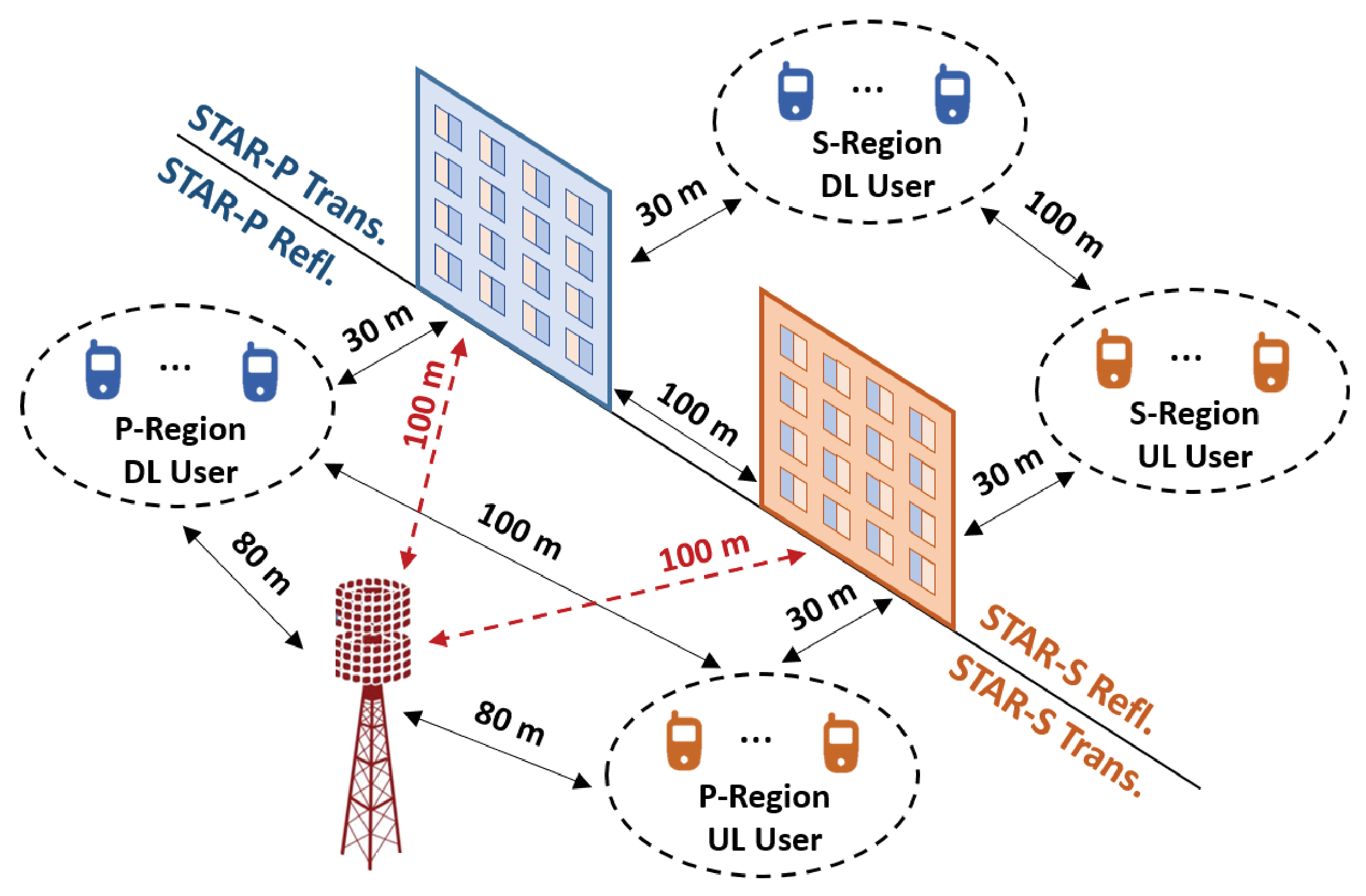}
 \caption{The relative distances of the deployed D-STAR architecture.} \label{dist}
\end{figure}

\begin{figure}[!t]
	\centering
	\includegraphics[width=3in]{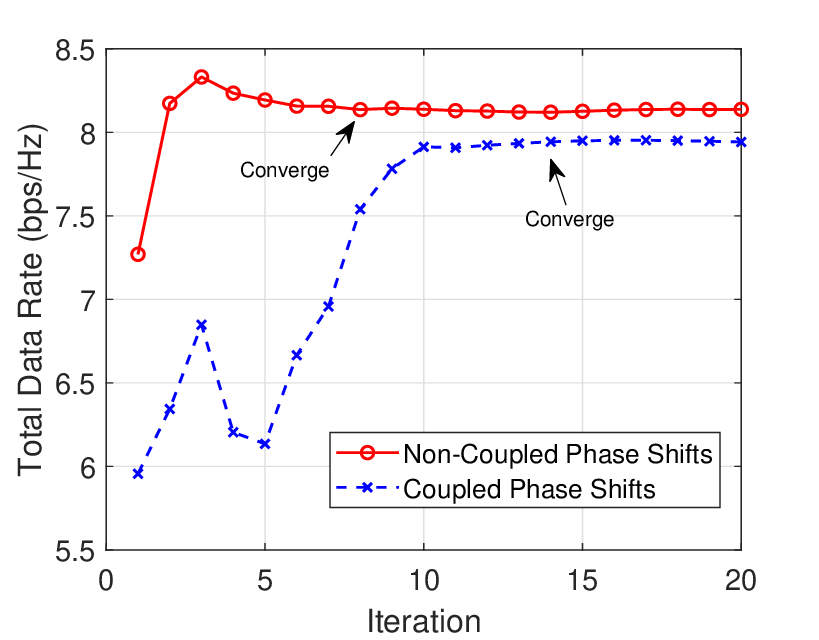}
 \caption{Convergence of the proposed DBAP scheme both with and without coupled phase shifts.} \label{conv}
\end{figure}

%

\begin{figure}[!t]
	\centering
	\includegraphics[width=3in]{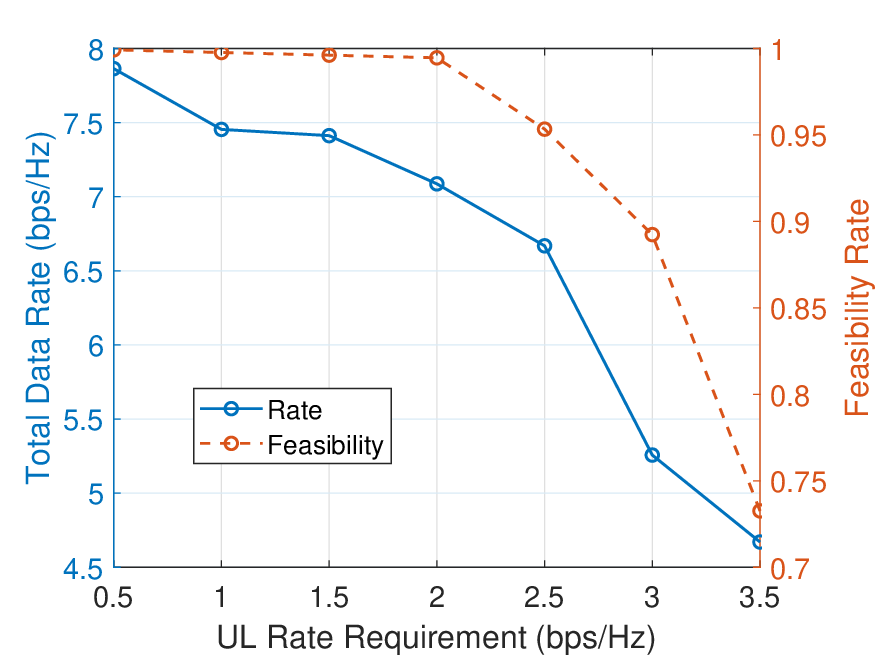}
 \caption{Rate and feasiblity performance versus different UL rate requirements.} \label{qos}
\end{figure}

We consider a single JUD BS serving four groups of users with each having $K_{u}=2$ users $\forall u\in \mathcal{U}$. The relative distances of the BS, D-STAR and the users are depicted in Fig. \ref{dist}. The BS is equipped with $[8,24]$ transmit and receive antennas, whilst the user equipment is equipped with a single antenna. The transmit power of the BS and of the user is set to $30$ and $20$ dBm, respectively. The channel follows Rician fading \cite{star_cite2}, including the deterministic line-of-sight (LoS) components of the array response and the NLoS components modeled as Rayleigh fading. Note that the direct link possesses much more LoS components than NLoS paths, whilst only NLoS components are considered between the BS and the D-STAR as well as between the D-STAR and the users. We use an identical number of elements for the two STAR-RISs, i.e., $M_x=M, \forall x\in \mathcal{X}$. The UL rate requirement is set to $1$ bps/Hz. The channels obeys uncorrelated Rayleigh fading with noise power of $-80$ dBm. Moreover, we utilize the popular optimization tool CVX \cite{cvx} as our optimal policy for our DBAP scheme in the context of the D-STAR. All simulations are averaged over $100$ Monte Carlo runs. The remaining system parameters of the proposed D-STAR architecture are listed in Table \ref{Parameter}. In Fig. \ref{conv}, we can observe that the proposed DBAP scheme in D-STAR converges both with and without coupled phase shifts. We can infer from the figure that slower convergence is attained in a coupled-phase scenario owing to the quantized nature of the solutions in $\eqref{S}$, imposing a modest rate degradation of less than $2\%$. Moreover, in Fig. \ref{qos}, we evaluate the feasibility for different UL rate constraints, where the feasibility rate is defined as the probability of the results satisfying UL constraints. We can observe that more stringent service requirement will lead to a reduced data rate as well as fewer feasible solutions owing to the increasingly insufficient resources in terms of D-STAR elements and BS antennas.

\begin{figure}[!t]
	\centering
	\includegraphics[width=3in]{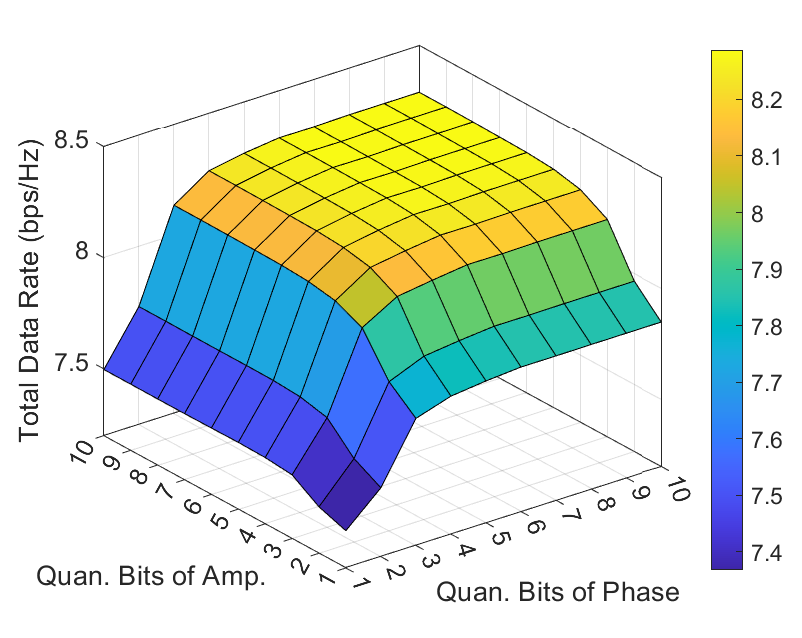}
	\caption{The performance of DBAP in D-STAR for joint quantization in both amplitude/phase w.r.t. different quantization $N_A, N_P \in [1,10]$ bits.} \label{fig:bit_3d}
\end{figure}

\begin{figure}[!t]
	\centering
	\includegraphics[width=3 in]{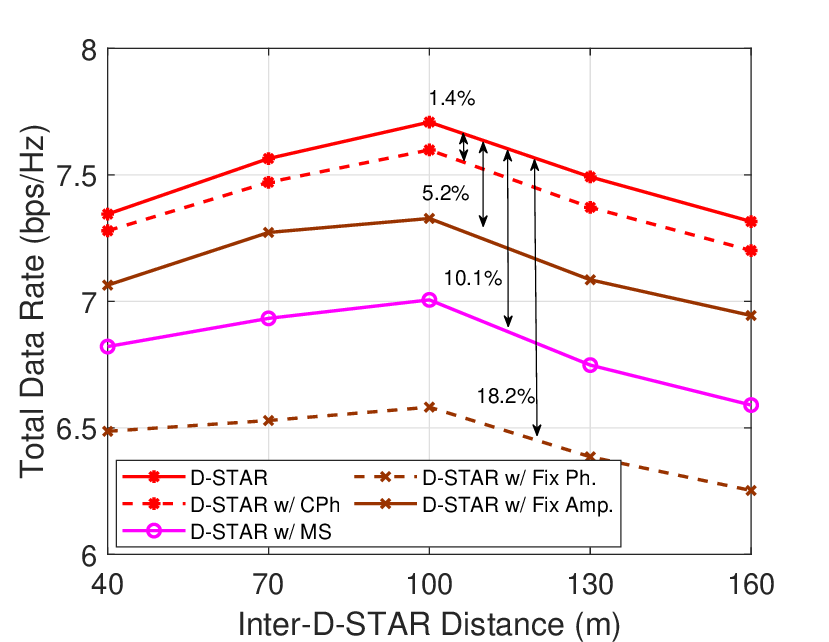}
		\caption{The performance of D-STAR w.r.t. different inter-D-STAR distance $\{ 40, 70, 100, 130, 160\}$ m. We compare D-STAR to its sub-schemes with CPh, MS, and optimization under fixed phase/amplitudes.}
	\label{fig:ios_dis}
\end{figure}

\subsection{Quantization Effects}
As demonstrated in Fig. \ref{fig:bit_3d}, we evaluate the proposed DBAP in the context of the D-STAR architecture in terms of different number of quantization bits. Considering $N_A$ bits for the amplitude and $N_P$ bits for the phase shifts, we have the solution range of $\frac{1}{2^{N_A}}\cdot [0,1,...,2^{N_A}-1]$ and $\frac{2\pi}{2^{N_P}}\cdot [0,1,...,2^{N_P}-1]$, respectively. Accordingly, the quantized solution can be expressed by
\begin{align}
	\beta_{x,m}^{(Q)} &= \left\lfloor \beta_{x,m}\cdot 2^{N_A} \right\rfloor \cdot \frac{1}{2^{N_A}},  \\
	\theta_{x,m}^{(Q)} &= \left\lfloor \frac{\theta_{x,m}}{2\pi} \cdot 2^{N_P} \right\rfloor \cdot \frac{2 \pi}{2^{N_P}}.
\end{align}
We can observe in Fig. \ref{fig:bit_3d} that even a relatively low amplitude and phase-resolution only imposes limited data-rate reduction, while having a low hardware complexity. We can also infer from the result that the quantization effect is more detrimental to the phase shifts than the amplitude, i.e., low-resolution phases will significantly degrade the rate owing to its sensitivity.

\subsection{Deployment of D-STAR}

As demonstrated in Fig. \ref{fig:ios_dis}, we have studied the critical issue of D-STAR deployment in terms of adjusting the inter-D-STAR distances between $40$ and $160$ m. We compare the proposed D-STAR architecture to its relatives associated with coupled phase shifts (CPh), the MS mechanism, and to optimization under fixed phase (Ph.) as well as fixed amplitudes (Amp.). Intriguingly, we can observe that the curves exhibit a concave shape in conjunction with the optimum inter-D-STAR distance of $100$ m. When the distance is in a range of $[40, 100]$ m, increased throughput is observed, since the STAR-RIS scheme improves the beamforming directivity. As a benefit of concentrating the power in the desirable directions, the interferences impinging from the SU, from the STAR-S to SD link as well as from the PU, and from the STAR-P to PD link are substantially alleviated. The optimum distance happens to be $100$ m, since this corresponds to the minimum distance from per D-STAR w.r.t. the corresponding user groups. Upon further extending the distances to $[100, 160]$ m, the weaker signals will eventually result in a reduced rate.

\subsection{Partitioning D-STAR}

\begin{figure*}[!t]
		\centering
		\subfigure[]
		{\includegraphics[width=3 in]{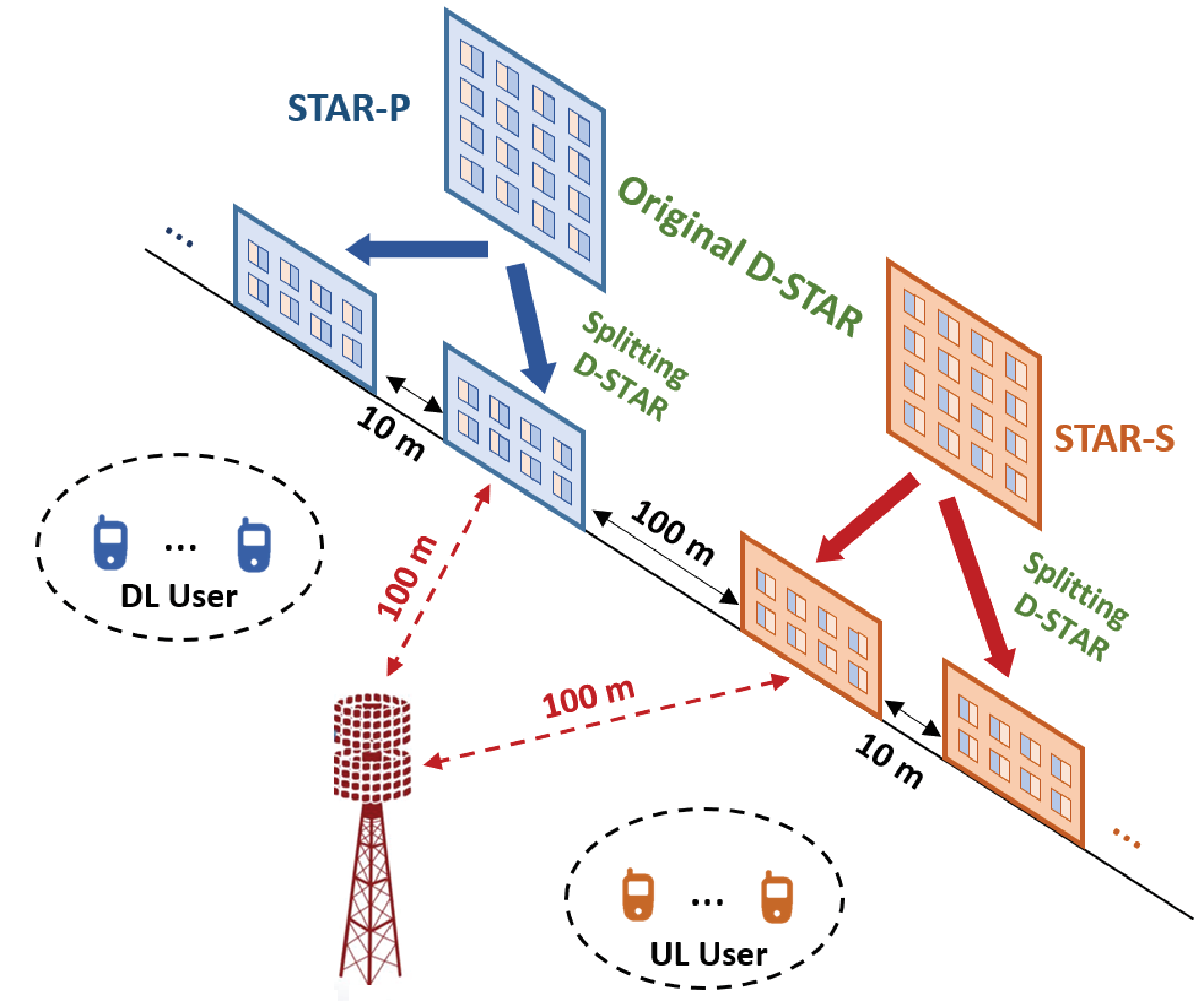} \label{split}}
		\subfigure[]
		{\includegraphics[width=3 in]{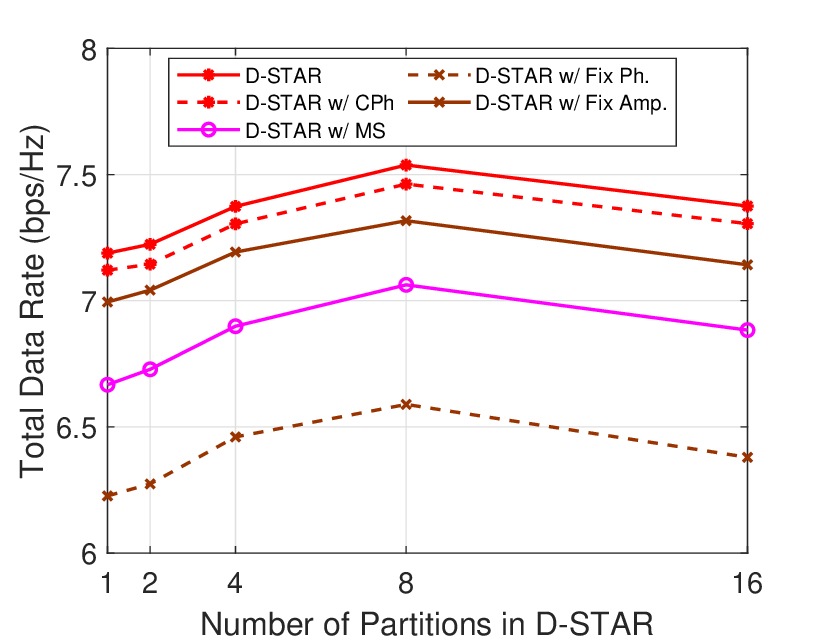} \label{fig:ios_split}}
		\caption{(a) Example for the architecture of splitting D-STAR into several sub-D-STARs. (b) Performance of D-STAR w.r.t. different numbers of splitting D-STARs. We compare D-STAR to its sub-schemes with CPh, MS, and optimization under fixed phase/amplitudes.}
		\label{fig:split}
	\end{figure*}

\begin{figure*}[!t]
		\centering
		\subfigure[]
		{\includegraphics[width=2.1 in]{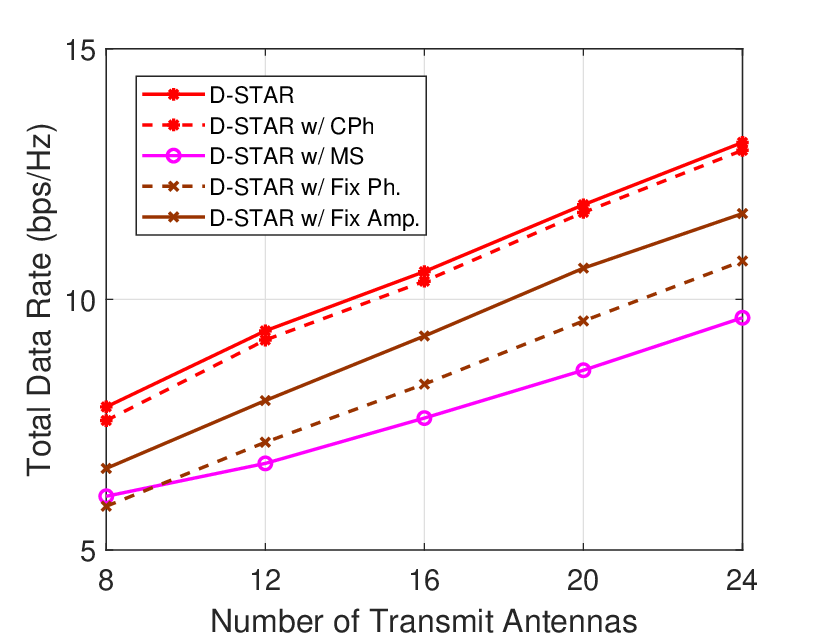} \label{fig:dios_ant}}
		\subfigure[]
		{\includegraphics[width=2.1 in]{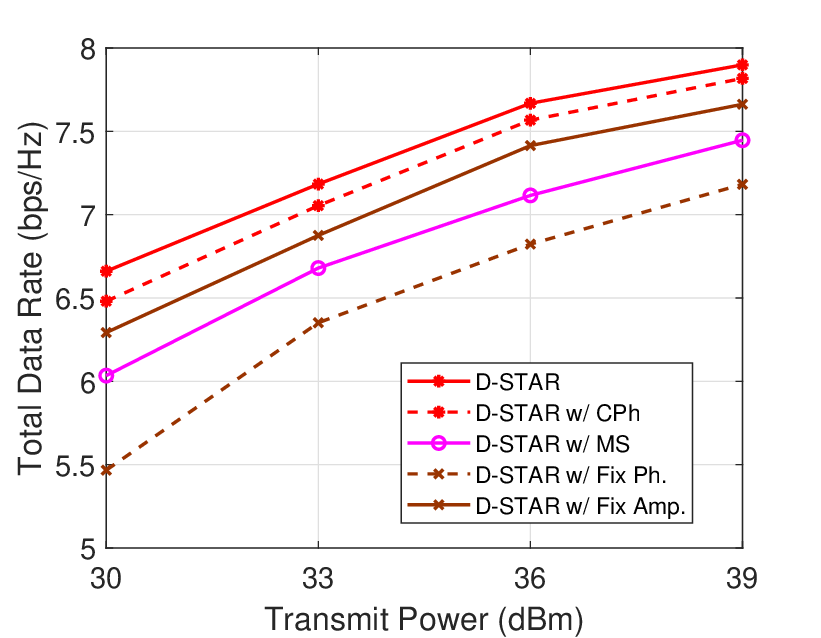} \label{fig:dios_power}}
		\subfigure[]
		{\includegraphics[width=2.1 in]{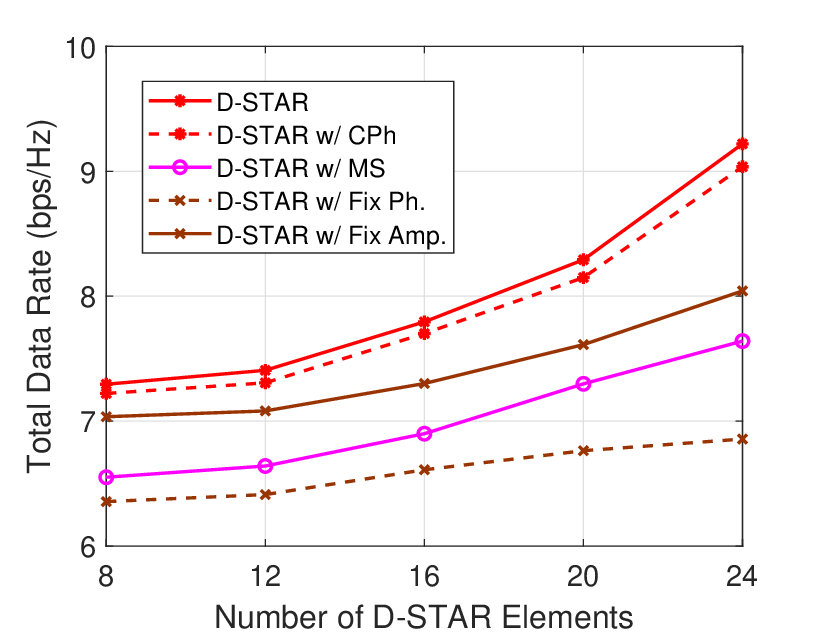} \label{fig:dios_RIS}}
		\caption{The performance of proposed D-STAR architecture w.r.t. different (a) numbers of BS transmit antennas with $P_t=30$ dBm and $M=8$ elements, (b) BS transmit power with $N_T=8$ and $M=8$ elements, and (c) D-STAR elements with $N_T=8$ and $P_t=30$ dBm. We compare D-STAR to its sub-schemes with CPh, MS, and optimization under fixed phase/amplitudes.}
		\label{fig:dios}
	\end{figure*}

In Fig. \ref{fig:split}, we partition the original D-STAR into several sub-STAR-RIS panels, while having the same total of $48$ elements. For example, splitting the D-STAR into two constituents means that a pair of STAR-P/-S each having $24$ elements is partitioned into two pairs, with each surface having $12$ elements. We observe that the partitioned surfaces are $10$ m away from the original one, as depicted in Fig. \ref{split}. We can observe from Fig. \ref{fig:ios_split} that all the throughput curves exhibit a concave shape, with the optimal point being at $8$ partitions. Upon increasing the number of partitions from $1$ to $8$, the rate improves thanks to the higher channel diversity, which allows the BS to focus its beamforming power on several beneficially selected STAR-RISs with better channel quality. However, increasing the number of partitions to $16$ reduces the throughput, because the more distant partitions suffer from a weak signal.

\subsection{Different Network Settings}

	In Fig. \ref{fig:dios}, we evaluate the performance of DBAP in D-STAR for different numbers of BS transmit antennas, transmit power, and D-STAR elements, as shown in Figs. \ref{fig:dios_ant}, \ref{fig:dios_power} and \ref{fig:dios_RIS}, respectively, which are compared to D-STAR with CPh, MS, and optimization under fixed phase/amplitudes. We can observe from Fig. \ref{fig:dios_ant} that more transmit antennas provide higher rate, since they can support higher directional beamforming gains as well as more beneficial alignment to D-STAR surfaces. They also offer higher spatial diversity for mitigating the SI and the inter-users interferences. Moreover, we can infer that D-STAR w/ CPh asymptotically approaches the performance of D-STAR. To elaborate a little further, MS has the lowest data rate when $N\geq 12$. This is because it will have a higher probability to misalign the BS antenna with D-STAR, hence leading to certain signal loss. In Fig. \ref{fig:dios_power}, it is observed that as expected, higher rate can be supported at a higher transmit power. However, MS outperforms the phase-only optimization, since having as few as $N_T=8$ antennas and 0-1 amplitude states for D-STAR suffers from a low rate even at a high transmit power. This means that when fewer antennas are deployed at the BS, D-STAR should be configured with each element activated in either for pure reflection or transmission. As shown in Fig. \ref{fig:dios_RIS}, more D-STAR elements proactively provide higher channel diversity, which potentially increases the received signal strength as well as mitigates the hostile interferences. To elaborate a little further, it reveals a greater improvement of around $2$ bps/Hz for D-STAR and D-STAR w/ CPh, when using $M=8$ to $M=24$ than the other benchmarks. Only a slight improvement is observed in comparison to other methods, with an increase of approximately $1$ bps/Hz for amplitude-only optimization and MS, as well as for phase-only optimization. 
	
\begin{figure*}[!t]
		\centering
		\subfigure[]
		{\includegraphics[width=2.1 in]{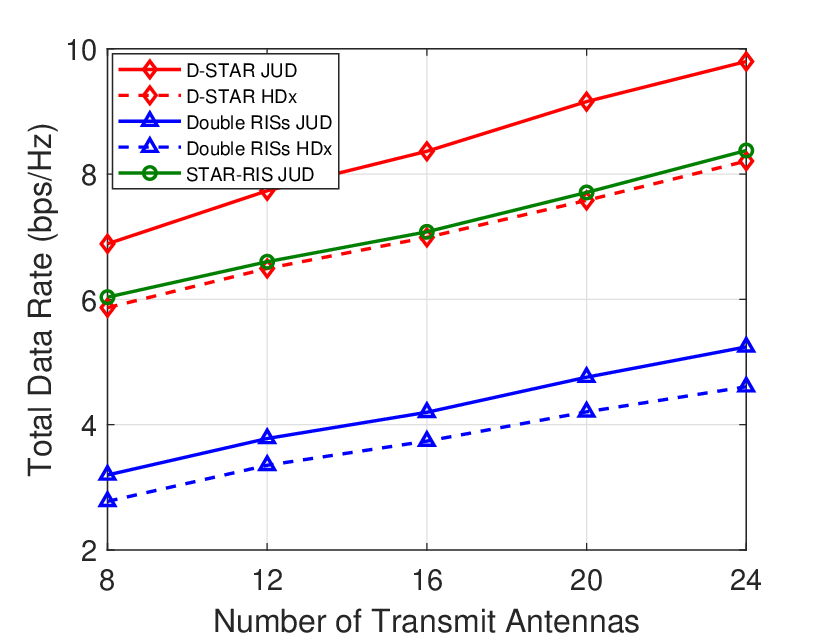} \label{fig:arch_ant}}
		\subfigure[]
		{\includegraphics[width=2.1 in]{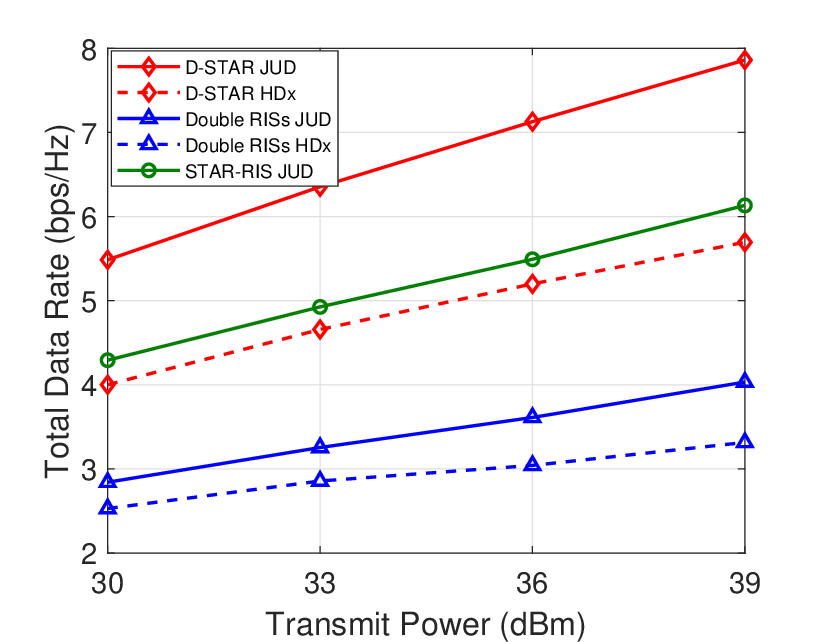} \label{fig:arch_power}}
		\subfigure[]
		{\includegraphics[width=2.1 in]{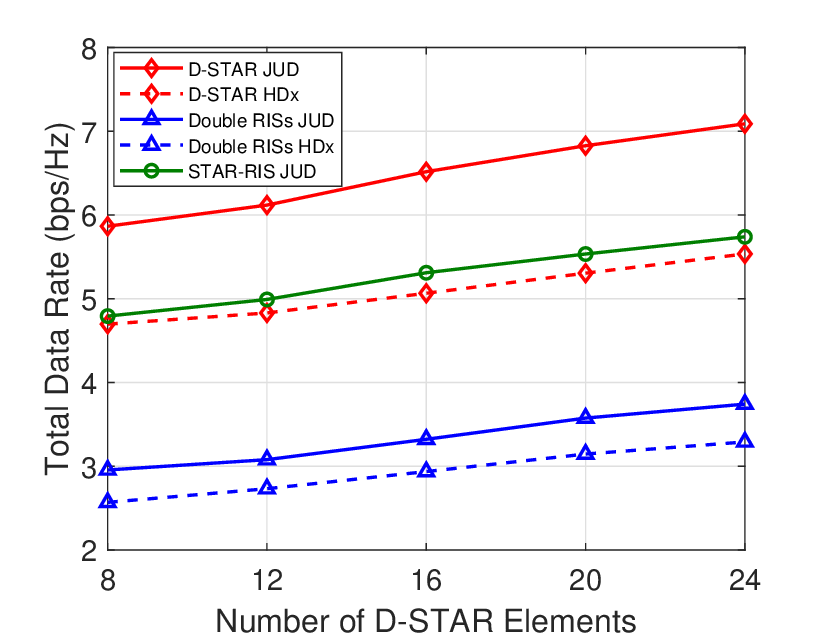} \label{fig:arch_RIS}}
		\caption{The performance of proposed D-STAR architecture w.r.t. different (a) numbers of BS transmit antennas, (b) BS transmit power, and (c) D-STAR elements. We compare different architectures and transmission techniques, i.e., D-STAR for HDx, single STAR-RIS for JUD, double RISs for JUD/HDx.}
		\label{fig:arch}
	\end{figure*}

\begin{figure*}[!t]
		\centering
		\subfigure[]
		{\includegraphics[width=2.1 in]{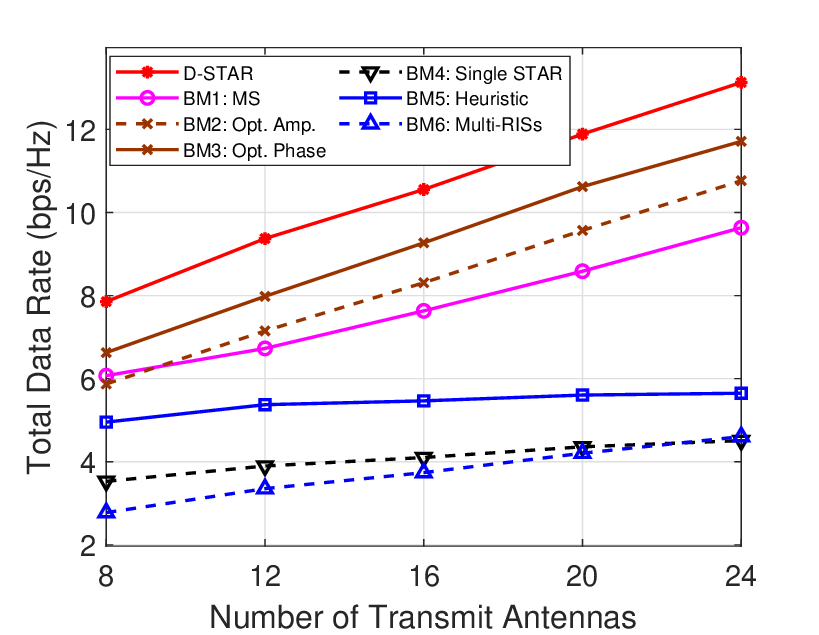} \label{fig:bm_ant}}
		\subfigure[]
		{\includegraphics[width=2.1 in]{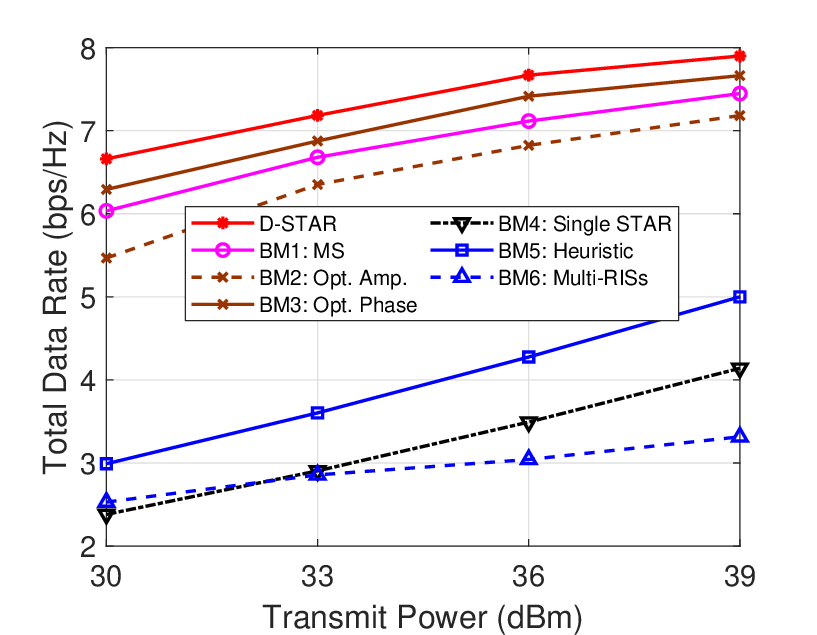} \label{fig:bm_power}}
		\subfigure[]
		{\includegraphics[width=2.1 in]{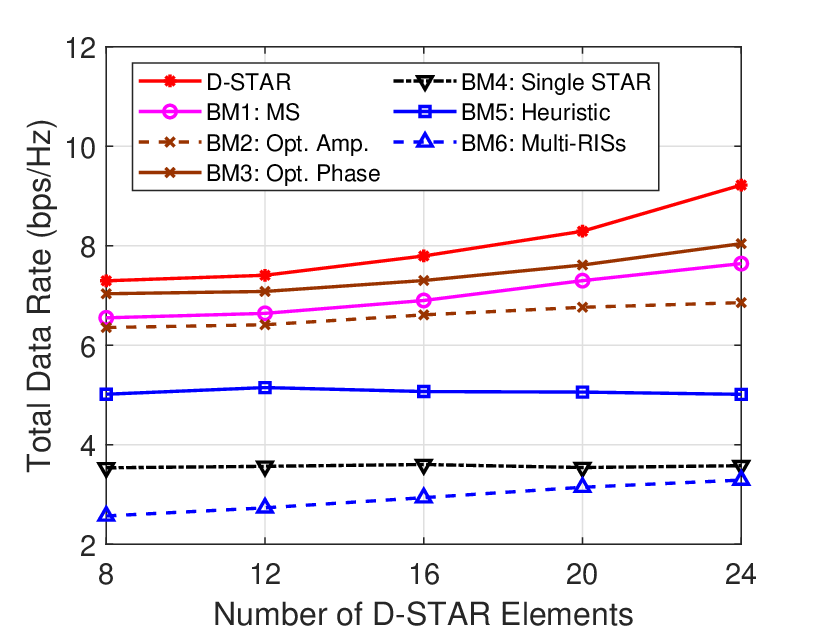} \label{fig:bm_RIS}}
		\caption{The performance comparison of proposed D-STAR with benchmarks of MS, single configuration optimization, single STAR-RIS, heuristic method and Multi-RISs in terms of different (a) numbers of BS transmit antennas, (b) BS transmit power, and (c) D-STAR elements.}
		\label{fig:bm}
	\end{figure*}

\subsection{Different Transmission Techniques}

In Fig. \ref{fig:arch}, we compare the rate of D-STAR in JUD to different architectures and transmission techniques, including D-STAR in HDx, double-RIS deployment in JUD/HDx, and a single STAR-RIS in JUD. Note that for comparing the performance in JUD and HDx under fair conditions, we take into account that HDx requires two time-slots, one for DL and one for UL transmission. Moreover, for double RISs the transmission sides of D-STAR are turned off, leaving the reflection function activated. This means that the UL/DL users in the S-region can only transmit/receive signals directly to/from the distant BS at a poorer signal quality than through STAR-RISs. The scenario of a single STAR-RIS is supposed to allow only STAR-P to be operated, whilst STAR-S is completely turned off. Again, we can have a higher rate, when more antennas, higher ower, or more D-STAR elements are available, as shown in Figs. \ref{fig:arch_ant}, \ref{fig:arch_power}, and \ref{fig:arch_RIS}, respectively. Without transmission function of STAR-RIS, we can observe that the double-RISs exhibit the worst rate, since the BS transmitter and the SU users directly transmit their signals to SD users and to the BS receiver, respectively under poor channel quality. The RIS function of the STAR-S reflection part is less useful in both the DL and UL. However, when JUD is considered, it is beneficial to alleviate the SU inter-user interference imposed on the SD users, resulting in a $12.3\%$ to $18.7\%$ rate improvement for the double RISs HDx to JUD link. By contrast, as a benefit of the full-coverage STAR function of D-STAR, we can enhance the rate by about $21\%$ to $37.2\%$.


\subsection{Benchmark Comparison}

In Fig. \ref{fig:bm}, we compare DBAP using D-STAR to several benchmarks (BM) found in the open literature for different numbers of BS transmit antennas, BS transmit power, and different number of D-STAR elements respectively, in Figs. \ref{fig:bm_ant}, \ref{fig:bm_power}, and \ref{fig:bm_RIS}. \textbf{BM1: MS} \cite{bm1} adopts mode switching for multiple non-orthogonal downlink users, where each element is configured either for reflection or transmission. \textbf{BM2: Opt. Amp.} only optimizes the amplitudes of D-STAR, leaving the phase shifts randomly configured. \textbf{BM3: Opt. Phase} \cite{couple2} employs relaxed transmission/reflection coefficient optimization and algebraic manipulations, aiming for solving the problem of coupled phase shifts. \textbf{BM4: Single STAR-RIS} \cite{fd_ok} considers the scenario of single STAR-RIS for JUD users. Recall that the direct BS to user link is are supposed to be blocked. \textbf{BM5: Heuristic} \cite{bm5} utilizes genetic algorithm based resource allocation. \textbf{BM6: Multi-RISs} \cite{bm6} employs only reflection functions in multi-RIS-based transmission.

Observe from Fig. \ref{fig:bm} that the worst performance is attained when multi-RISs are adopted, because the deployment and orientation are not optimized in support of all users. BM6 only outperforms the single STAR-RIS scenario when $P_t=30$ dBm since the low-powered signals impinging on STAR-RIS lead to insufficient separation between the transmission and reflection parts. Again, the direct links are unavailable, which further reduces the received signal power. To make the conventional single-sided operation of RISs realistic in BM6, additional geometric deployment and orientation problems should be considered. However, this may require more RISs and more RIS elements to achieve the same performance as D-STAR for full-plane service coverage. As a benefit of optimization, BM5 relying on a genetic algorithm has a higher rate than that of BMs 4 and 6. With 360-degree full-plane service coverage, the proposed D-STAR optimizing the BS beamforming, as well as the amplitudes and phase shifts attains the highest data rate in all cases compared to the state-of-the-art.

\section{Conclusions}\label{CON}

We have conceived the new D-STAR architecture for full-plane service coverage, with STAR-P of Fig. \ref{fig:sep} tackling the P-region interferences and with STAR-S alleviating the S-region ones. The non-linear and non-convex DL sum rate-maximization problem formulated is solved by alternating optimization by relying on the decomposed convex sub-problems of the BS beamformer and D-STAR configurations w.r.t. the amplitude and phase shifts. We proposed a DBAP optimization scheme for solving the respective sub-problems by the Lagrange dual with Dinkelbach's transformation, ADMM with SCA, and PCCP. Our simulation results have characterized the optimal inter-D-STAR distances and partitioning. They also revealed that the proposed D-STAR architecture outperforms the conventional single RIS, single STAR-RIS, and HDx networks. Furthermore, the proposed DBAP in D-STAR achieves the highest throughput amongst the state-of-the-art solutions in the open literature.

\appendix
\section{Appendix}
\subsection{Proof of Lemma \ref{lemma_LD}} \label{Appendix_LD}

The original Lagrangian dual transform is based on the weighted sum-of-logarithms problem having a form of $\max_{\boldsymbol{\Xi}, \gamma_{u,k}}\ f_r(\boldsymbol{\Xi}, \gamma_{u,k})=\sum_{k\in\mathcal{K}_u} w_k \log_2 \left(1+\frac{A_k(\boldsymbol{\Xi})}{B_k(\boldsymbol{\Xi})}\right)$, where $w_k$ is the weight of each logarithmic expression. Considering an equal unit weight of $w_k=1$ yields the same problem. When the number of iterations tends to infinity, the auxiliary variable will asymptotically approach the original fractional parameter, i.e., $\gamma_{u,k}^* = \frac{A_k(\boldsymbol{\Xi}^*)}{B_k(\boldsymbol{\Xi}^*)}$. Accordingly, the following equation holds:
	\begingroup
	\allowdisplaybreaks
\begin{align}
	& - \gamma_{u,k}^* + \frac{(1+\gamma_{u,k}^*) A_k(\boldsymbol{\Xi}^*) }{A_k(\boldsymbol{\Xi}^*)+ B_k(\boldsymbol{\Xi}^*)} \notag \\
	& = -\frac{A_k(\boldsymbol{\Xi}^*)}{B_k(\boldsymbol{\Xi}^*)} + \frac{ \left( 1+\frac{A_k(\boldsymbol{\Xi}^*)}{B_k(\boldsymbol{\Xi}^*)}\right) A_k(\boldsymbol{\Xi}^*)}{A_k(\boldsymbol{\Xi}^*) + B_k(\boldsymbol{\Xi}^*)} \notag \\
	& =  -\frac{A_k(\boldsymbol{\Xi}^*)}{B_k(\boldsymbol{\Xi}^*)} + \frac{A_k(\boldsymbol{\Xi}^*)}{B_k(\boldsymbol{\Xi}^*)} = 0.
\end{align}
\endgroup
Therefore, we can obtain the additional dual term of $- \sum_{k\in\mathcal{K}_u} \gamma_{u,k} + \sum_{k\in\mathcal{K}_u} \frac{(1+\gamma_{u,k}) A_k(\boldsymbol{\Xi}) }{A_k(\boldsymbol{\Xi}) + B_k(\boldsymbol{\Xi})}$. Considering this additional term in the original problem yields $\eqref{LD}$. This completes the proof. \hfill\(\blacksquare\)

\subsection{Proof of Lemma \ref{lemma_DIN}} \label{Appendix_DIN}

We know that there exists an optimal value for the fractional programming, which is given by
	\begingroup
	\allowdisplaybreaks
\begin{align}
	 &\mathop{\max}_{\boldsymbol{\Xi}} \  \Lambda(\boldsymbol{\Xi}) 
	= \mathop{\max}_{\boldsymbol{\Xi}} \  \sum_{k\in \mathcal{K}_u} \frac{A_k(\boldsymbol{\Xi})}{C_k(\boldsymbol{\Xi})} \notag\\
	&\Leftrightarrow \mathop{\max}_{\boldsymbol{\Xi}, \forall k\in\mathcal{K}_u} \  \frac{A_k(\boldsymbol{\Xi})}{C_k(\boldsymbol{\Xi})} 
	= \frac{A_k(\boldsymbol{\Xi}^*)}{C_k(\boldsymbol{\Xi}^*)} 
	 \triangleq \lambda^{*},
\end{align}
\endgroup
where $C_k(\boldsymbol{\Xi}) = A_k(\boldsymbol{\Xi})+ B_k(\boldsymbol{\Xi})$.
Therefore, the following problem holds
\begin{align} \label{Din2}
	& \mathop{\max}_{\boldsymbol{\Xi}} \ A_k(\boldsymbol{\Xi}) - \lambda^* C_k(\boldsymbol{\Xi})= A_k(\boldsymbol{\Xi}^*) - \lambda^* C_k(\boldsymbol{\Xi}^*) = 0.
\end{align}
Replacing the fractional term in $\eqref{LD}$ by $A_k(\boldsymbol{\Xi}) - \lambda C_k(\boldsymbol{\Xi})$ yields $\eqref{Din}$, which holds when the number of iterations tends to infinity. Moreover, we know from $\eqref{LD}$ that the first two terms of $\sum_{k\in\mathcal{K}_u} \log_2 (1+\gamma_{u,k})$ and $\sum_{k\in\mathcal{K}_u} \gamma_{u,k}$ are regarded as constants acquired from the previous outcomes without any variables to be determined. Therefore, the optimization of $\eqref{LD}$ is equivalent to that utilizing $\eqref{Din}$. This completes the proof.\hfill\(\blacksquare\)

\subsection{Proof of Lemma \ref{lemma_sca}} \label{Appendix_sca}

We adopt Taylor expansion as $f^-(x)= f^-(x_0) + \nabla_x^{\mathcal{H}} f^-(x_0)(x-x_0) + \nabla_x^{2\, \mathcal{H}} f^-(x_0)\frac{(x-x_0)^2}{2} +\cdots$. Let us now assume that $O(x^n) = \nabla_x^{n\, \mathcal{H}} f^-(x_0)\frac{(x-x_0)^n}{n!}$ denotes the term having derivatives higher than the second order. We can then have $f^-(x)=f^-(x_0) + \nabla_x^{\mathcal{H}} f^-(x_0)(x-x_0) + \lim_{N\rightarrow \infty} \sum_{n=2}^{N} O(x^n)$. Without $O(x^n)$, we can acquire a lower bound of the affine function, i.e., $\tilde{f}^- (x) = f^-(x_0) + \nabla_x^{\mathcal{H}} f^-(x_0)(x-x_0)$. When we neglect the term with $n\geq 2$ in $f^-(x)$, we can derive the concave objective of $f(x)\geq f_{cav}(x)$, where $f_{cav}(x) = f^+(x) + f^-(x_0) + \nabla_x^{\mathcal{H}} f^-(x_0)(x-x_0)$. Moreover, considering that $O(x^n)$ is comparatively small and tends to zero, we can have $f(x) \approx f^+(x) + f_{cav}(x)$. This completes the proof. \hfill\(\blacksquare\)

\subsection{Proof of Lemma \ref{lemma_sym}}
\label{Appendix_sym}

We adopt mathematical induction to prove this lemma by commencing with $N_1=3$ and $N_2=2$, i.e., $\mathbf{d}=\left[ d_1, d_2, d_3 \right]^{\rm T}$, $\mathbf{D}= [D_{1,1},D_{1,2}; D_{2,1}, D_{2,2}; D_{3,1}, D_{3,2}]$, $\boldsymbol{\phi}=\left[ \phi_1, \phi_2, \phi_3\right]^{\rm T}$, and $\mathbf{w}=\left[w_1, w_2\right]^{\rm T}$. The left-hand side of $\eqref{dd}$ becomes 
\begin{align} \label{c_matrix}
	 & \left( d_1 \phi_1 D_{1,1} + d_2 \phi_2 D_{2,1} + d_3 \phi_3 D_{3,1 } \right) w_1 \notag \\
	 &\qquad + \left( d_1 \phi_1 D_{1,2} + d_2 \phi_2 D_{2,2} + d_3 \phi_3 D_{3,2} \right) w_2 \notag\\
	 &= [w_1, w_2] \begin{bmatrix}
d_1 D_{1,1} & d_2 D_{2,1} & d_3 D_{3,1}\\
d_1 D_{1,2} & d_2 D_{2,2} & d_3 D_{3,2}
\end{bmatrix}
	\left[ \phi_1, \phi_2, \phi_3 \right]^{\rm T} \notag\\
	&= [w_1, w_2] \begin{bmatrix}
d_1 & d_2  & d_3 \\
d_1 & d_2  & d_3 
\end{bmatrix} \odot
	\begin{bmatrix}
 D_{1,1} &  D_{2,1} & D_{3,1}\\
 D_{1,2} &  D_{2,2} & D_{3,2}
\end{bmatrix}
	\left[ \phi_1, \phi_2, \phi_3 \right]^{\rm T}
\end{align}
which represents the exact form at the right-hand side of $\eqref{dd}$. Accordingly, upon considering arbitrary numbers for $N_1$ and $N_2$, we can proceed further from $\eqref{c_matrix}$ to acquire a general expression as
\begin{align} \label{dd2}
	[w_1, \cdots , w_{N_2}]
	\begin{bmatrix}
d_1 D_{1,1}  & \cdots & d_{N_1} D_{N_1,1}\\
 \vdots  & \ddots & \vdots \\
d_1 D_{1,N_2} & \cdots & d_{N_1} D_{N_1,N_2}
\end{bmatrix}
	\left[ \phi_1, \cdots , \phi_{N_1} \right]^{\rm T},
\end{align}	
which is the same as the outcome associated with the parameters of $N_1+1$ and $N_2+1$. Both have results that are identical to the right-hand side of $\eqref{dd}$. This completes the proof. \hfill\(\blacksquare\)

\subsection{Proof of Corollary \ref{Cor_dd}}
\label{Appendix_dd}
We can infer that $\eqref{dd2}$ is performed for a row vector in $\mathbf{U}$. Therefore, we may carry out the matrix operations in  $\eqref{dd2}$ to create $N_{3}$ independent diagonal blocks. To evaluate $\eqref{dd}$ in each block for different $\mathbf{U}_n,\forall 1\leq n \leq N_3$, the Kronecker product is required for $\mathbf{w}^{\rm T}$ and $\mathbf{D}^{\rm T}$ in order to prevent non-zero values for the non-diagonal elements. Similarly, ${\rm rep}$ is executed for each row vector $\mathbf{U}_{(n,:)}$. Leveraging the above operations yields $\eqref{DD}$. This completes the proof. \hfill\(\blacksquare\)

\subsection{ADMM Fundamentals} \label{Appendix_ADMM}
Given the convex functions of $f(x)$ and $g(z)$ and convex sets of $\mathcal{X}$ and $\mathcal{Z}$, the associated ADMM problem is formulated as
\begin{subequations}
\begin{align}
\mathop{\min}\limits_{x\in\mathcal{X}, z\in \mathcal{Z}} & \quad  f(x)+g(z) \\
    \text{s.t.} &\quad  Ax+Bz=c.
\end{align}
\end{subequations}
The augmented Lagrangian is acquired as
\begin{align}
& \mathcal{L}_{\rho}(x,z,y) = \notag \\
& f(x) \!+\! g(z) \!+\! y^{\rm T}(Ax+Bz-c) \!+\! \frac{\rho}{2} \lVert Ax+Bz-c \rVert^2.
\end{align}
Therefore, the respective optimizations and dual update are acquired by
\begingroup
 \allowdisplaybreaks
\begin{align}
	x^{(t+1)} & \leftarrow \argmin_{x\in\mathcal{X}} \ \mathcal{L}_{\rho}\left( x,z^{(t)},y^{(t)} \right),\\
	z^{(t+1)} & \leftarrow \argmin_{z\in\mathcal{Z}} \  \mathcal{L}_{\rho}\left( x^{(t+1)},z,y^{(t)} \right),\\
	y^{(t+1)} & \leftarrow y^{(t)} + \rho \left( Ax^{(t+1)} + Bz^{(t+1)} - c \right).
\end{align}
\endgroup
Optimality is achieved when \textit{primal} and \textit{dual} feasibility are respectively achieved as $Ax+Bz-c=0$ and $\nabla f(x)+A^{\rm T} y=0$, $\nabla g(z)+B^{\rm T} y=0$. Since $z^{(t+1)}$ minimizes $\mathcal{L}_{\rho}(x^{(t+1)}, z, y^{(t)})$, we have
\begin{align*}
	0 &= \nabla g(z^{(t+1)}) + B^{\rm T} y^{(t)} + \rho B^{\rm T} (Ax^{(t+1)}+B z^{(t+1)} -c) \notag\\
	&= \nabla g(z^{(t+1)}) + B^{\rm T} y^{(t+1)}.
\end{align*}
Accordingly, the ADMM dual variables of $\{ x^{(t+1)}, z^{(t+1)}, y^{(t+1)}\}$ satisfy the second dual feasibility condition. Therefore, primal and dual feasibility are achieved as $t \rightarrow \infty$.

\bibliographystyle{IEEEtran}
\bibliography{IEEEabrv}

\end{document}